\newcommand\eg{{\em e.g.}~}
\newcommand\cf{{\em cf.}~}
\def\a{\mathsf a}
\def\B{\mathscr B}
\def\C{\mathbb C}
\def\G{\mathcal G}
\def\H{\mathcal H}
\def\O{\mathcal O}
\def\P{\mathsf P}
\def\R{\mathbb R}
\def\12{{\textstyle\frac12}}
\def\<{\left\langle}
\def\>{\right\rangle}
\def\({\left(}
\def\){\right)}
\def\[{\left[}
\def\]{\right]}
\def\dom{\mathcal D}
\def\lone{\mathsf{L}^{\:\!\!1}}
\def\linfloc{\mathsf{L}^{\:\!\!\infty}_{\rm loc}}
\def\ltwo{\mathsf{L}^{\:\!\!2}}
\def\linf{\mathsf{L}^{\:\!\!\infty}}
\def\e{\mathop{\mathrm{e}}\nolimits}
\def\d{\mathrm{d}}
\def\tr{\mathop{\mathsf{Tr}}\nolimits}
\def\diag{\mathop{\mathrm{diag}}\nolimits}
\def\im{\mathop{\mathsf{Im}}\nolimits}
\def\re{\mathop{\mathsf{Re}}\nolimits}
\def\rank{\mathop{\mathrm{rank}}\nolimits}
\def\sgn{\mathop{\mathrm{sgn}}\nolimits}
\newtheorem{Theorem}{Theorem}[section]
\newtheorem{Remark}[Theorem]{Remark}
\newtheorem{Lemma}[Theorem]{Lemma}
\newtheorem{Assumption}[Theorem]{Assumption}
\newtheorem{Corollary}[Theorem]{Corollary}
\newtheorem{Proposition}[Theorem]{Proposition}
\begin{document}

%-----------------------------------------------------------------------------------------------
% Title
%-----------------------------------------------------------------------------------------------

\title{{\Large\textbf{Asymptotics near $\boldsymbol{\pm m}$ of the spectral shift function for
Dirac operators with non-constant magnetic fields}}}
  
\author{Rafael Tiedra de Aldecoa}
\date{\small
\begin{quote}
\emph{
\begin{itemize}
\item[] Facultad de Matem\'aticas, Pontificia Universidad Cat\'olica de Chile,\\
Av. Vicu\~na Mackenna 4860, Santiago, Chile
\item[] \emph{E-mail:} rtiedra@mat.puc.cl
\end{itemize}
}
\end{quote}
}
\maketitle

%-----------------------------------------------------------------------------------------------
% Abstract
%-----------------------------------------------------------------------------------------------

\begin{abstract}
We consider a $3$-dimensional Dirac operator $H_0$ with non-constant magnetic field of constant
direction, perturbed by a sign-definite matrix-valued potential $V$ decaying fast enough at
infinity. Then we determine asymptotics, as the energy goes to $+m$ and $-m$, of the spectral
shift function for the pair $(H_0,H_0+V)$. We obtain, as a by-product, a generalised version
of Levinson's Theorem relating the eigenvalues asymptotics of $H_0+V$ near $+m$ and $-m$ to the
scattering phase shift for the pair $(H_0,H_0+V)$.
\end{abstract}

%-----------------------------------------------------------------------------------------------
\newpage
\tableofcontents
\newpage
%-----------------------------------------------------------------------------------------------

%-----------------------------------------------------------------------------------------------
\section{Introduction}
\setcounter{equation}{0}
%-----------------------------------------------------------------------------------------------

It is known \cite{Tha92} that the free Dirac Hamiltonian $H_m$ acting in the Hilbert space
$\H:=\ltwo(\R^3;\C^4)$ is unitarily equivalent to the operator $h(P)\oplus-h(P)$, where
$P:=-i\nabla$ and $\R^3\ni\xi\mapsto h(\xi):=(\xi^2+m^2)^{1/2}$. For this reason, the set
$\{\pm m\}=h\big[(\nabla h)^{-1}(\{0\})\big]$ of critical values of $h$ plays an important
role in spectral analysis and scattering theory for Dirac operators. For instance, one cannot
prove at $\pm m$ the usual limiting absorption principle for operators $H_m+V$, even with $V$ a
regular perturbation of $H_m$, by using standard commutator methods. Both the statements and
the proofs have to be modified (see \eg \cite{BG09,IM99}).

In this paper, we provide a new account on the spectral analysis of Dirac operators at the
critical values by discussing the behaviour at $\pm m$ of the spectral shift function associated
to sign-definite perturbations of Dirac operators with non-constant magnetic fields. Our work is
closely related to \cite{Rai09} where G. D. Raikov treats a similar issue in the case of magnetic
Pauli operators. It can also be considered as a complement of \cite{RT07}, where general properties
of the spectrum of Dirac operators with variable magnetic fields of constant direction and
matrix perturbations are determined. Other related results on the spectrum of $3$-dimensional
magnetic Dirac operators can be found in
\cite{BG87,BS99,BMR93,BR99,EL99,GM01,GM93,Hac93,HNW89,Ivr98,MR03,Rob99,SU09,Tha91}.

Let us describe the content of this paper. We consider a relativistic spin-$\frac12$ particle
evolving in $\R^3$ in presence of a variable magnetic field of constant direction. By virtue of
the Maxwell equations, we may assume with no loss of generality that the magnetic field has the
form
$$
\vec B(x_1,x_2,x_3)=\big(0,0,b(x_1,x_2)\big).
$$
The system is described in $\H$ by the Dirac operator
$$
H_0:=\alpha_1\Pi_1+\alpha_2\Pi_2+\alpha_3P_3+\beta m,
$$
where $\beta\equiv\alpha_0,\alpha_1,\alpha_2,\alpha_3$ are the usual Dirac-Pauli matrices,
$m>0$ is the mass of the particle and $\Pi_j:=-i\partial_j-a_j$ are the generators of the
magnetic translations with a vector potential
$$
\vec a(x_1,x_2,x_3)=\big(a_1(x_1,x_2),a_2(x_1,x_2),0\big)
$$
that satisfies $B=\partial_1a_2-\partial_2a_1$. Since $a_3 =0$, we write $P_3=-i\partial_3$
instead of $\Pi_3$. We assume that the function $b:\R^2\to\R$ is continuous (see Section
\ref{Unperturbed} for details), so that $H_0$, defined on $C^\infty_0(\R^3;\C^4)$, can be
extended uniquely to a selfadjoint operator in $\H$ with domain $\dom(H_0)$ .

Then we consider a bounded positive multiplication operator
$V\in C\big(\R^3;\B_{\sf h}(\C^4)\big)$, where $\B_{\sf h}(\C^4)$ is the set of $4\times4$
hermitian matrices, and define the perturbed Hamiltonian $H_\pm:=H_0\pm V$. Since $V$ is
bounded and symmetric, the operator $H_\pm$ is selfadjoint in $\H$ and has domain
$\dom(H)=\dom(H_0)$. We also assume that $|V(x)|$ decays more rapidly than $|x|^{-3}$
as $|x|\to\infty$ and that
\begin{equation}\label{durdur}
(H_\pm-z)^{-3}-(H_0-z)^{-3}\in S_1(\H)
\quad\hbox{for each}\quad z\in\R\setminus\{\sigma(H_0)\cup\sigma(H_\pm)\},
\end{equation}
where $S_1(\H)$ denotes the set of trace class operators in $\H$.

Under these assumptions, there exists a unique function
$\xi(\,\cdot\,;H_\pm,H_0)\in\lone\big(\R;(1+|\lambda|)^{-4}\d\lambda\big)$ such that the
Lifshits-Krein trace formula
\begin{equation}\label{eq_LK}
\tr\big[f(H_\pm)-f(H_0)\big]
=\int_\R\d\lambda\,f'(\lambda)\;\!\xi(\lambda;H_\pm,H_0)
\end{equation}
holds for each $f\in C^\infty_0(\R)$ (see \cite[Sec.~8.11]{Yaf92}). The function
$\xi(\,\cdot\,;H_\pm,H_0)$ is called the spectral shift function for the pair $(H_\pm,H_0)$.
It vanishes identically on $\R\setminus\{\sigma(H_0)\cup\sigma(H_\pm)\}$, and can be related
to the number of eigenvalues of $H_\pm$ in $(-m,m)$ (see Remark
\ref{int_eigen}). Morever, for almost every $\lambda\in\sigma_{\rm ac}(H_0)$ the spectral
shift function is related to the scattering matrix $S(\lambda;H_\pm,H_0)$ for the pair
$(H_\pm,H_0)$ by the Birman-Krein formula
$$
\det S(\lambda;H_\pm,H_0)=\e^{-2\pi i\xi(\lambda;H_\pm,H_0)}.
$$

After identification of $\xi(\,\cdot\,;H_\pm,H_0)$ with some representative of
its equivalence class, our results are the following. In Proposition \ref{prop_cont}, we
show that there exists a constant $\zeta>0$ defined in terms of $b$ (\cf Proposition
\ref{zeta}) such that $\xi(\,\cdot\,;H_\pm,H_0)$ is bounded on each compact subset of
$(-\sqrt{m^2+\zeta},\sqrt{m^2+\zeta})\setminus\{\pm m\}$ and is continuous on
$(-\sqrt{m^2+\zeta},\sqrt{m^2+\zeta})\setminus\big(\{\pm m\}\cup\sigma_{\rm p}(H_\pm)\big)$.
In Theorem \ref{thm<}, we determine the asymptotic behaviour of
$\xi(\lambda;H_\pm,H_0)$ as $\lambda\to\pm m$, $|\lambda|<m$, and in Theorem \ref{thm_ext},
we determine the asymptotic behaviour of $\xi(\lambda;H_\pm,H_0)$ as
$\lambda\to\pm m$, $|\lambda|>m$. In both cases, one has
$\xi(\lambda;H_\pm,H_0)\to\pm\infty$ as $\lambda\to\mp m$. The divergence of
$\xi(\lambda;H_\pm,H_0)$ near $\lambda=\pm m$ scales as the number of eigenvalues near $0$ of
certain Berezin-Toeplitz type operators. When $V$ admits a power-like or exponential decay at
infinity, or when it has a compact support, we give the first term of the asymptotic expansion of
$\xi(\lambda,;H_\pm,H_0)$ near $\lambda=\pm m$ (see Proposition \ref{in_gap} and Corollary
\ref{outside_gap}). In these cases, we show that the limits
$$
\lim_{\varepsilon\searrow0}
\frac{\xi\big(m+\varepsilon;H_-,H_0\big)}{\xi\big(m-\varepsilon;H_-,H_0\big)}
\qquad\hbox{and}\qquad
\lim_{\varepsilon\searrow0}
\frac{\xi\big(-m-\varepsilon;H_+,H_0\big)}{\xi\big(-m+\varepsilon;H_+,H_0\big)}
$$
exist and are equal to positive constants depending on the decay rate of $V$ at infinity
(see Corollary \ref{Levinson} for a precise statement). This can be interpreted as a
generalised version of Levinson's Theorem for the pair $(H_\pm,H_0)$ (see \cite{Kla90,Ma06}
for usual versions of Levinson's Theorem for Dirac operators). The relation between
the behaviour of the spectral shift function near $\lambda=+m$ and near $\lambda=-m$
is explained in Remark \ref{C_sym} by using the charge conjugation symmetry.

These results are similar to the results of \cite{Rai09} (where Pauli operators with
non-constant magnetic fields are considered) and \cite{FR04} (where Schr\"odinger operators
with constant magnetic field are considered). Part of the interest of this work relies on
the fact that we were able to exhibit a non-trivial class of matrix potentials $V$ satisfying \eqref{durdur} even though $H_0$ is not a bounded perturbation of the free Dirac operator.
We refer to Remark \ref{a_class} and Section \ref{cond_trace} for a discussion of this issue.

Let us fix the notations that are used in the paper. The norm and scalar product of
$\H\equiv\ltwo(\R^3;\C^4)$ are denoted by $\|\;\!\cdot\;\!\|$ and
$\langle\;\!\cdot\;\!,\;\!\cdot\;\!\rangle$. The symbol $\otimes$ stands for the closed
tensor product of Hilbert spaces and $S_p(\H)$, $p\in[1,\infty]$, denotes the $p$-th
Schatten-von Neumann class of operators in $\H$ ($S_\infty(\H)$ is the set of compact operators
in $\H$). We denote by $\|\;\!\cdot\;\!\|_p$ the corresponding operator norm. The variable
$x\in\R^3$ is often written as $x\equiv(x_\perp,x_3)$, with $x_\perp\in\R^2$ and $x_3\in\R$.
The symbol $Q_j$, $j=1,2,3$, denotes the multiplication operator by $x_j$ in $\H$,
$Q:=(Q_1,Q_2,Q_3)$, and $Q_\perp:=(Q_1,Q_2)$. Sometimes, when the context is unambiguous, we
consider the operators $Q_j$ and $P_j$ as operators in $\ltwo(\R)$ instead of $\H$ without
changing the notations. Given a selfadjoint operator $A$ in a Hilbert space $\G$, the symbol
$E^A(\;\!\cdot\;\!)$ stands for the spectral measure of $A$.

%-----------------------------------------------------------------------------------------------
\section{Unperturbed operator}\label{Unperturbed}
\setcounter{equation}{0}
%-----------------------------------------------------------------------------------------------

Throughout this paper we assume that the component $b:\R^2\to\R$ of the magnetic field
$\vec B\equiv(0,0,b)$ belongs to the class of ``admissible" magnetic fields defined in
\cite[Sec.~2.1]{Rai09}. Namely, we assume that $b=b_0+\widetilde b$, where $b_0>0$ is a
constant while the function $\widetilde b:\R^2\to\R$ is such that the Poisson equation
$$
\Delta\widetilde\varphi=\widetilde b
$$
admits a solution $\widetilde\varphi:\R^2\to\R$, continuous and
bounded together with its derivatives of order up to two. We also define
$\varphi_0(x_\perp):=\frac14b_0|x_\perp|^2$ for each $x_\perp\in\R^2$ and set
$\varphi:=\varphi_0+\widetilde\varphi$. Then we obtain a vector potential
$\vec a\equiv(a_1,a_2,a_3)\in C^1(\R^2;\R^3)$ for the magnetic field $\vec B$ by putting
$$
a_1:=\partial_1\varphi,\qquad a_2:=\partial_2\varphi\qquad\hbox{and}\qquad a_3:=0.
$$
(changing, if necessary, the gauge, we shall always assume that the vector potential
$\vec a$ is of this form). We refer to \cite{Rai09} for further properties and examples of
admissible magnetic fields.

Since the vector potential $\vec a$ belongs to
$\linfloc(\R^2;\R^3)$, the magnetic Dirac operator
$$
H_0=\alpha_1\Pi_1+\alpha_2\Pi_2+\alpha_3P_3+\beta m
$$
satisfies all the properties of \cite[Sec. 2.1]{RT07}. The operator $H_0$ is
essentially selfadjoint on $C^\infty_0(\R^3;\C^4)$, with domain
$\dom(H_0)\subset\H^{1/2}_{\rm loc}(\R^3;\C^4)$, the spectrum of $H_0$ satisfies
\begin{equation}\label{sigma_0}
\sigma(H_0)=\sigma_{\rm ac}(H_0)=(-\infty,-m]\cup[m,\infty),
\end{equation}
and we have the identity
\begin{equation}\label{tortuga}
H_0^2=
\(\begin{smallmatrix}
H_\perp^-\otimes1+1\otimes(P_3^2+m^2) & 0 & 0 & 0\\
0 & H_\perp^+\otimes1+1\otimes(P_3^2+m^2) & 0 & 0\\
0 & 0 & H_\perp^-\otimes1+1\otimes(P_3^2+m^2) & 0\\
0 & 0 & 0 & H_\perp^+\otimes1+1\otimes(P_3^2+m^2)
\end{smallmatrix}\)
\end{equation}
with respect to the tensorial decomposition $\ltwo(\R^2)\otimes\ltwo(\R)$ of $\ltwo(\R^3)$.
Here the operators $H_\perp^\pm$ are the components of the Pauli operator
$H_\perp:=H_\perp^-\oplus H_\perp^+$ in $\ltwo(\R^2;\C^2)$ associated with the vector
potential $(a_1,a_2)$. 

We recall from \cite[Sec.~2.2]{Rai09} that $\dim\ker(H_\perp^-)=\infty$, that
$\dim\ker(H_\perp^+)=0$ and that we have the following result.

\begin{Proposition}\label{zeta}
Let $b$ be an admissible magnetic field with $b_0>0$. Then $0=\inf\sigma(H_\perp)$
is an isolated eigenvalue of infinite multiplicity. More precisely, we have
$$
\dim\ker(H_\perp)=\infty\qquad\hbox{and}\qquad
(0,\zeta)\subset\R\setminus\sigma(H_\perp),
$$
where
$$
\zeta:=2b_0\e^{-2{\rm osc(\widetilde\varphi)}}\qquad\hbox{and}\qquad
{\rm osc(\widetilde\varphi)}:=\sup_{x_\perp\in\R^2}\widetilde\varphi(x_\perp)
-\inf_{x_\perp\in\R^2}\widetilde\varphi(x_\perp).
$$
\end{Proposition}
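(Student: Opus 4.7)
The plan is to exploit the supersymmetric factorisation of the two-dimensional Pauli operator $H_\perp=H_\perp^-\oplus H_\perp^+$. First I would introduce the first-order operators $a:=\Pi_1+i\Pi_2$ and $a^*:=\Pi_1-i\Pi_2$ and, using the commutator identity $[\Pi_1,\Pi_2]=ib$, verify that $H_\perp^-=a^*a$ and $H_\perp^+=aa^*$. In the gauge $a_j=\partial_j\varphi$ provided by the admissibility hypothesis, these operators admit the Cauchy--Riemann representation $a=-2i\,e^{-\varphi}\bar\partial\,e^{\varphi}$ and $a^*=-2i\,e^{\varphi}\partial\,e^{-\varphi}$, which reduces the kernel analysis to statements about holomorphic functions.

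From the factorisation I would next read off that $\dim\ker(H_\perp^-)=\infty$ by an Aharonov--Casher argument: any $u\in\ker(a)$ is of the form $u=e^{-\varphi}f$ with $f$ entire in $z:=x_1+ix_2$, and since $\varphi=\tfrac{b_0}{4}|x_\perp|^2+\widetilde\varphi$ with $\widetilde\varphi$ bounded, the factor $e^{-\varphi}$ has Gaussian decay, so $e^{-\varphi}f\in\ltwo(\R^2)$ for every polynomial $f$. The same reasoning applied to $a^*$ shows, via Liouville's theorem, that $\ker(H_\perp^+)=\{0\}$. Combined with the supersymmetric identity $\sigma(H_\perp^-)\setminus\{0\}=\sigma(H_\perp^+)\setminus\{0\}$, this reduces the assertion on the gap $(0,\zeta)\subset\R\setminus\sigma(H_\perp)$ to the lower bound $H_\perp^+\geq\zeta$.

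The crucial step is that lower bound. Starting from
\[
\langle u,H_\perp^+u\rangle=\|a^*u\|^2=4\int_{\R^2}\big|e^{\varphi}\partial(e^{-\varphi}u)\big|^2dx_\perp,
\]
I would perform the multiplicative substitution $v:=e^{-\widetilde\varphi}u$, which is a bounded invertible transformation of $\ltwo(\R^2)$ since $\widetilde\varphi$ is bounded, and which gives $e^{-\varphi}u=e^{-\varphi_0}v$. Thus
\[
\|a^*u\|^2=4\int_{\R^2}e^{2\widetilde\varphi}\big|e^{\varphi_0}\partial(e^{-\varphi_0}v)\big|^2dx_\perp\geq e^{2\inf\widetilde\varphi}\,\langle v,H_{\perp,0}^+v\rangle,
\]
where $H_{\perp,0}^+$ denotes the Pauli operator associated with the constant field $b_0$. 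Since the Landau-level description yields $\inf\sigma(H_{\perp,0}^+)=2b_0$, and since $\|u\|^2\leq e^{2\sup\widetilde\varphi}\|v\|^2$, I obtain
\[
\langle u,H_\perp^+u\rangle\geq 2b_0\,e^{2\inf\widetilde\varphi-2\sup\widetilde\varphi}\|u\|^2=\zeta\|u\|^2,
\]
completing the proof.

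The main obstacle I anticipate is the rigorous handling of domains: the factorisation $H_\perp^\pm=a^{(*)}a^{(*)\mp}$, the reduction to a sum of squares, and the twist $u\mapsto v=e^{-\widetilde\varphi}u$ should first be justified on a suitable core such as $C_0^\infty(\R^2;\C^2)$, and the bound then extended to $\dom(H_\perp^+)$ by density together with closability of the quadratic form. The boundedness of $\widetilde\varphi$ and of its derivatives up to order two, guaranteed by the admissibility of $b$, is precisely what is needed to make these manipulations legitimate.
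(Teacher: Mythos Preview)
The paper does not actually prove this proposition: it is quoted from \cite[Sec.~2.2]{Rai09} without argument. Your proposal is correct and is precisely the standard proof one finds in that reference and its predecessors --- the Aharonov--Casher description of $\ker(H_\perp^\pm)$ via the conjugated $\partial,\bar\partial$ operators, the supersymmetric identity $\sigma(a^*a)\setminus\{0\}=\sigma(aa^*)\setminus\{0\}$, and the multiplicative comparison $u\mapsto e^{-\widetilde\varphi}u$ with the constant-field Landau Hamiltonian to obtain $H_\perp^+\geq 2b_0\,e^{-2\,\mathrm{osc}(\widetilde\varphi)}$. Your remark about carrying out the form estimates on a core and closing up is the right way to handle the domain issues.
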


Finally, since $(0,{\zeta})\subset\R\setminus\sigma(H_\perp)$, we know from
\cite[Thm. 1.2.(d)]{RT07} that the limits
\begin{equation}\label{general_LAP}
\lim_{\varepsilon\searrow0}
\langle Q_3\rangle^{-\nu_3/2}(H_0-\lambda\mp i\varepsilon)^{-1}
\langle Q_3\rangle^{-\nu_3/2},\qquad \nu_3>1,
\end{equation}
exist for each $\lambda\in(-\sqrt{m^2+\zeta},\sqrt{m^2+\zeta})\setminus\{\pm m\}$ (note that
we use the usual notation $\langle\;\!\cdot\;\!\rangle:=\sqrt{1+|\;\!\cdot\;\!|^2}$).

%-----------------------------------------------------------------------------------------------
\section{Perturbed operator}\label{Perturbed}
\setcounter{equation}{0}
%-----------------------------------------------------------------------------------------------

We consider now the perturbed operators $H_\pm=H_0\pm V$, where $V\equiv\{V_{jk}\}$ is the
multiplication operator associated to the following matrix-valued function $V$.

\begin{Assumption}\label{assumption1}
The function $V\in C\big(\R^3;\B_{\sf h}(\C^4)\big)$ satisfies for each
$x\equiv(x_\perp,x_3)\in\R^3$ and each $j,k\in\{1,\ldots,4\}$
\begin{equation}\label{first_decay}
V(x)\ge0\qquad\hbox{and}\qquad
|V_{jk}(x)|\le{\rm Const.}\;\!\langle x_\perp\rangle^{-\nu_\perp}\langle x_3\rangle^{-\nu_3}
\quad\hbox{for some }\nu_\perp>2\hbox{ and }\nu_3>1.
\end{equation}
\end{Assumption}

The potential $V$ in Assumption \ref{assumption1} is short-range along $x_3$. So
we know from \cite[Thm. 1.2]{RT07} that
\begin{enumerate}
\item[(i)] $\sigma_{\rm ess}(H_\pm)=\sigma_{\rm ess}(H_0)=(-\infty,-m]\cup[m,\infty)$.
\item[(ii)] The point spectrum of $H_\pm$ in
$
\big(-\sqrt{m^2+\zeta},\sqrt{m^2+\zeta}\big)\setminus\{\pm m\}
$
is composed of eigenvalues of finite multiplicity and with no accumulation point.
\item[(iii)] $H_\pm$ has no singular continuous spectrum in
$
\big(-\sqrt{m^2+\zeta},\sqrt{m^2+\zeta}\big)
$.
In particular, $H_0$ and $H_\pm$ have a common spectral gap in $(-m,m)$.
\end{enumerate}

Using the formula
$$
(A+\lambda)^{-\gamma}
=\Gamma(\gamma)^{-1}\int_0^\infty\d t\,t^{\gamma-1}\e^{-t(A+\lambda)},
\qquad A:\dom(A)\to\H,~A\ge0,~\lambda,\gamma>0,
$$
the diamagnetic inequality \cite[Thm. 2.3]{AHS78}, and the compactness criterion
\cite[Thm. 5.7.1]{Dav07}, we find that
$$
\textstyle
|V_{jk}|^{1/2}\big(\sum_{\ell\le3}\Pi_\ell^*\Pi_\ell+m^2\big)^{-1/4}
\in S_\infty[\ltwo(\R^3)].
$$
Since $b$ is bounded this implies that
$$
|H_0|^{-1/2}V|H_0|^{-1/2}
\le\textstyle|H_0|^{-1/2}\big(\sum_{j,k\le4}|V_{jk}|\big)|H_0|^{-1/2}
\in S_\infty(\H).
$$
So $|H_0|^{-1/2}V|H_0|^{-1/2}$ also belongs to $S_\infty(\H)$, since $S_\infty(\H)$
is an hereditary $C^*$-subalgebra of $\B(\H)$ \cite[Cor. 3.2.3]{Mur90}. One has in
particular
\begin{equation}\label{necessary1}
V^{1/2}(|H_0|+1)^{-1/2}\in S_\infty(\H).
\end{equation}
The standard criterion \cite[Thm. XI.20]{RS79} shows that
$$
|V_{jk}|^{1/2}\big(-\Delta+m^2\big)^{-\gamma}\in S_q[\ltwo(\R^3)]
\quad\hbox{if }q\in[2,\infty)\hbox{ and }\gamma q>3/2.
$$
This together with arguments as above
implies that
\begin{equation}\label{necessary2}
V^{1/2}|H_0|^{-\gamma}\in S_q(\H)
\quad\hbox{if }q\ge2\hbox{ is even and }\gamma q>3.
\end{equation}
So we have in particular that
\begin{equation}\label{necessary3}
V^{1/2}E^{H_0}(B)\in S_2(\H)\quad\hbox{for any bounded borel set }B\subset\R.
\end{equation}

In the sequel we shall need a more restrictive assumption on $V$. For this, we recall
that there exists numbers $z\in\R\setminus\{\sigma(H_0)\cup\sigma(H_\pm)\}$ since
$H_0$ and $H_\pm$ have a common spectral gap in $(-m,m)$. We also set $R_0(z):=(H_0-z)^{-1}$
and $R_\pm(z):=(H_\pm-z)^{-1}$ for $z\in\C\setminus\sigma(H_0)$ and
$z\in\C\setminus\sigma(H_\pm)$, respectively.

\begin{Assumption}\label{assumption2}
The function $V\in C\big(\R^3;\B_{\sf h}(\C^4)\big)$ satisfies for each $x\in\R^3$ and
each $j,k\in\{1,\ldots,4\}$
\begin{equation}\label{a_decay}
V(x)\ge0\qquad\hbox{and}\qquad
|V_{jk}(x)|\le{\rm Const.}\;\!\langle x\rangle^{-\nu}\quad\hbox{for some constant }\nu>3.
\end{equation}
Furthermore, $V$ is chosen such that
\begin{equation}\label{necessary4}
R_\pm^3(z)-R_0^3(z)\in S_1(\H)
\quad\emph{for each }z\in\R\setminus\{\sigma(H_0)\cup\sigma(H_\pm)\}.
\end{equation}
\end{Assumption}

Note that \eqref{a_decay} implies \eqref{first_decay} if one takes $\nu_3\in(1,\nu-2)$
and $\nu_\perp:=\nu-\nu_3$. Note also that the choice of function
$\lambda\mapsto(\lambda-z)^{-3}$ in the trace class condition \eqref{necessary4} has
been made for convenience. Many other choices would also guarantee the existence of
the spectral shift function for the pair $(H_\pm,H_0)$ (see \eg \cite[Sec.~8.11]{Yaf92}).

\begin{Remark}\label{a_class}
Since the operator $H_0$ is not a bounded perturbation of the free Dirac operator, we
cannot apply the results of \cite[Sec.~4]{Yaf05} to prove the inclusion \eqref{necessary4}
under the condition \eqref{a_decay}. In general, one has to impose additional assumptions
on $V$ to get the result. For instance, if $V$ verifies
\eqref{a_decay}, and
\begin{enumerate}
\item[(i)] $[V,\alpha_1]=[V,\alpha_2]=0$,
\item[(ii)] for each $x\in\R^3$ and each $j,k,\ell\in\{1,\ldots,4\}$, one has 
$
|(\partial_\ell V_{jk})(x)|\le{\rm Const.}\;\!\langle x\rangle^{-\varsigma}
$
for some $\varsigma>3$,
\item[(iii)] for each $j,k,\ell\in\{1,\ldots,4\}$, one has
$(\partial_\ell\partial_3V_{jk})\in\linf(\R^3)$,
\end{enumerate}
then \eqref{necessary4} is satisfied. Furthermore, if $V$ is scalar, then the same
is true without assuming (iii) (and (i) is trivially satisfied). The proof of these
statements can be found in the appendix. Here, we only note that a matrix
${\sf V}\in\B_{\sf h}(\C^4)$ satisfying (i) is necessarily of the form
$$
{\sf V}=\left(\begin{smallmatrix}
{\sf v}_1 & 0 & {\sf v}_3 & 0\\
0 & {\sf v}_2 & 0 & \overline{\,{\sf v}_3}\\
\overline{\,{\sf v}_3} & 0 & {\sf v}_2 & 0\\
0 & {\sf v}_3 & 0 & {\sf v}_1
\end{smallmatrix}\right),
$$
with ${\sf v}_1,{\sf v}_2\in\R$ and ${\sf v}_3\in\C$.
\end{Remark}

%-----------------------------------------------------------------------------------------------
\section{Spectral shift function}\label{SSF}
\setcounter{equation}{0}
%-----------------------------------------------------------------------------------------------

In this section we recall some results due to A. Pushnitski on the representation of the
spectral shift function for a pair of not semibounded selfadjoint operators.

Given a a Lebesgue measurable set $B\subset\R$, we set
$\mu(B):=\frac1\pi\int_B\frac{\d t}{1+t^2}$, and note that $\mu(\R)=1$. Furthermore, if $T=T^*$
is a compact operator in a separable Hilbert space $\G$, we set
$$
n_\pm(s;T):=\rank E^{\pm T}\big((s,\infty)\big)\quad\hbox{for }s>0.
$$
Then we have the following estimates.

\begin{Lemma}[Lemma 2.1 of \cite{Pus97}]\label{aday}
Let $T_1=T_1^*\in S_\infty(\H)$ and $T_2=T_2^*\in S_1(\H)$. Then one as for each $s_1,s_2>0$
$$
\int_\R\d\mu(t)\,n_\pm(s_1+s_2;T_1+tT_2)
\leq n_{\pm}(s_1;T_1)+\frac1{\pi s_2}\;\!\|T_2\|_1.
$$
\end{Lemma}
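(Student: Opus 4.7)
The plan is to combine a Weyl-type subadditivity inequality for the counting functions $n_\pm$ with an explicit bound on $\int_\R\d\mu(t)\,n_\pm(s_2;tT_2)$. Since replacing $(T_1,T_2)$ by $(-T_1,-T_2)$ interchanges $n_+$ and $n_-$ while preserving $\|T_2\|_1$, it is enough to prove the inequality with ``$+$'' throughout.

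The first ingredient is the Weyl-type inequality
$$
n_+(s_1+s_2;\,A+B)\,\le\,n_+(s_1;A)+n_+(s_2;B),\qquad s_1,s_2>0,
$$
valid for any self-adjoint compact operators $A,B$, which follows at once from the min-max principle applied to the positive-part spectral projections $E^A\big((s,\infty)\big)$ (finite-rank for $s>0$ by compactness). Applying it with $A=T_1$ and $B=tT_2$, then integrating against $\d\mu$ and using $\mu(\R)=1$, reduces the lemma to the estimate
$$
\int_\R\d\mu(t)\,n_+(s_2;tT_2)\,\le\,\frac{\|T_2\|_1}{\pi s_2}.
$$

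For this remaining inequality I would split the integral at $t=0$ and use the scaling identities $n_+(s_2;tT_2)=n_+(s_2/t;T_2)$ for $t>0$ and $n_+(s_2;tT_2)=n_-(s_2/|t|;T_2)$ for $t<0$ to rewrite the left-hand side as $\int_0^\infty\frac{\d t}{\pi(1+t^2)}\,n(s_2/t;T_2)$, with $n:=n_++n_-$. Expanding $n(\,\cdot\,;T_2)$ as a sum of indicator functions on the eigenvalues $\{\lambda_k(T_2)\}$ (counted with multiplicity), swapping sum and integral by Fubini (non-negativity together with $T_2\in S_1(\H)$), and evaluating the elementary integrals explicitly produces $\sum_k\tfrac{1}{\pi}\arctan(|\lambda_k(T_2)|/s_2)$; the bound $\arctan(u)\le u$ for $u\ge 0$ then converts this into $\tfrac{1}{\pi s_2}\sum_k|\lambda_k(T_2)|=\tfrac{1}{\pi s_2}\|T_2\|_1$.

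I do not anticipate a serious obstacle. The only subtlety worth flagging is that $T_1$ (and thus $T_1+tT_2$) need not be semibounded, so the subadditivity of $n_+$ must be justified by working directly with the finite-rank projections $E^{T_1+tT_2}\big((s,\infty)\big)$ rather than with a truncated sum of eigenvalues; once this is in place the computation is routine.
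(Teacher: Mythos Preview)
Your argument is correct. The paper does not supply its own proof of this lemma: it is quoted verbatim as Lemma~2.1 of \cite{Pus97} and used as a black box. Your route---Weyl's inequality to peel off $n_+(s_1;T_1)$, then the exact evaluation $\int_\R\d\mu(t)\,n_+(s_2;tT_2)=\pi^{-1}\sum_k\arctan\big(|\lambda_k(T_2)|/s_2\big)$ followed by $\arctan u\le u$---is the standard one, and in fact the two ingredients you isolate both reappear later in the paper: the Weyl inequality is recorded as \eqref{Weyl}, and the $\arctan$ identity (for $T\ge0$) as \eqref{id_arctan}. So while there is nothing to compare directly, your proof is consistent with the tools the paper relies on and with Pushnitski's original argument.
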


For $z\in\C\setminus\sigma(H_0)$, we define the usual weighted resolvent
$$
T(z):=V^{1/2}(H_0-z)^{-1}V^{1/2}
$$
and the corresponding real and imaginary parts
$$
A(z):=\re T(z)\qquad\hbox{and}\qquad B(z):=\im T(z).
$$
Then the next lemma is direct consequence of the inclusions
\eqref{necessary1}-\eqref{necessary3} and \cite[Prop.~4.4.(i)]{Pus01}.

\begin{Lemma}\label{2limits}
Let $V$ satisfy Assumption \ref{assumption1}. Then, for almost every $\lambda\in\R$, the
limits $A(\lambda+i0):=\lim_{\varepsilon\searrow0}A(\lambda+i\varepsilon)$ and
$B(\lambda+i0):=\lim_{\varepsilon\searrow0}B(\lambda+i\varepsilon)\ge0$ exist in
$S_4(\H)$.
\end{Lemma}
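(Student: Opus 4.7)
My plan is to deduce the lemma directly from the abstract boundary-value theorem \cite[Prop.~4.4.(i)]{Pus01}, as already signaled by the author. Pushnitski's proposition concerns the weighted resolvent $T(z)=V^{1/2}(H_0-z)^{-1}V^{1/2}$ of a self-adjoint operator $H_0$ perturbed by a non-negative bounded potential $V$, and asserts that for almost every $\lambda\in\R$ the real and imaginary parts $A(\lambda+i\varepsilon)$ and $B(\lambda+i\varepsilon)$ admit limits in the Schatten ideal $S_4(\H)$ as $\varepsilon\searrow0$, provided two ingredients are available: \emph{(a)} a relative compactness statement of the form $V^{1/2}(|H_0|+1)^{-1/2}\in S_\infty(\H)$, and \emph{(b)} a relative Hilbert--Schmidt bound $V^{1/2}E^{H_0}(B)\in S_2(\H)$ for every bounded Borel set $B\subset\R$.

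The verification of these two hypotheses is the content of the preceding paragraphs of Section \ref{Perturbed}. Indeed, \emph{(a)} is exactly the inclusion \eqref{necessary1}, which was obtained via the diamagnetic inequality of \cite[Thm. 2.3]{AHS78} together with the compactness criterion of \cite[Thm. 5.7.1]{Dav07}. Similarly, \emph{(b)} coincides with \eqref{necessary3}, itself a consequence of the Schatten estimate \eqref{necessary2} applied with $q=2$ and $\gamma>3/2$ (the latter uses the standard criterion \cite[Thm. XI.20]{RS79} in the non-magnetic setting, transported to $H_0$ through the hereditarity of $S_\infty(\H)$ in $\B(\H)$). With \emph{(a)} and \emph{(b)} in hand, the invocation of \cite[Prop.~4.4.(i)]{Pus01} yields the existence of the two limits in $S_4(\H)$ for almost every $\lambda\in\R$.

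It remains to justify the positivity assertion $B(\lambda+i0)\ge0$. This is automatic and does not require any additional argument: for $\varepsilon>0$ one computes
$$
B(\lambda+i\varepsilon)
=\varepsilon\,V^{1/2}(H_0-\lambda-i\varepsilon)^{-1}(H_0-\lambda+i\varepsilon)^{-1}V^{1/2}
=\varepsilon\,V^{1/2}|H_0-\lambda-i\varepsilon|^{-2}V^{1/2}\ge0,
$$
and positivity is preserved under the $S_4(\H)$-limit since $S_4(\H)\hookrightarrow\B(\H)$ continuously. The only conceptual obstacle in this argument is the recognition that the Schatten inclusions \eqref{necessary1} and \eqref{necessary3} are precisely those required by Pushnitski's machinery; once this is observed, no further analytic work is needed and the refined bound \eqref{necessary2} is kept in reserve for subsequent sections.
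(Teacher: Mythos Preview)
Your proposal is correct and follows the same approach as the paper, which simply records that the lemma is a direct consequence of the inclusions \eqref{necessary1}--\eqref{necessary3} and \cite[Prop.~4.4.(i)]{Pus01}. The only minor discrepancy is that the paper lists all three inclusions \eqref{necessary1}--\eqref{necessary3} as inputs, whereas you single out \eqref{necessary1} and \eqref{necessary3} and relegate \eqref{necessary2} to ``reserve''; since \eqref{necessary3} is itself derived from \eqref{necessary2}, this is a cosmetic difference rather than a substantive one.
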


Next theorem follows from the inclusions \eqref{necessary1}, \eqref{necessary3},
\eqref{necessary4}, from the equations (1.9), (8.1), (8.2) of \cite{Pus01}, and from
Theorem 8.1 of \cite{Pus01}.

\begin{Theorem}\label{identify}
Let $V$ satisfy Assumption \ref{assumption2}. Then, for almost every $\lambda\in\R$,
$\xi(\lambda;H_\pm,H_0)$ exists and is given by
\begin{equation}\label{lhs}
\xi(\lambda;H_\pm,H_0)
=\pm\int_\R\d\mu(t)\,n_\mp\big(1;A(\lambda+i0)+tB(\lambda+i0)\big).
\end{equation}
\end{Theorem}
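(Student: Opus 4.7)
My plan is to reduce the claim to a direct application of \cite[Thm.~8.1]{Pus01}, which provides a representation of the spectral shift function for a pair of (not necessarily semi-bounded) self-adjoint operators that differ by a sign-definite perturbation. The present setup matches that framework once one identifies $T(z)=V^{1/2}(H_0-z)^{-1}V^{1/2}$ with the weighted resolvent appearing there, together with its real and imaginary parts $A(z)$ and $B(z)$.

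First I would verify the technical ingredients required by \cite[Thm.~8.1]{Pus01}. The trace-class inclusion \eqref{necessary4} guarantees the existence of the spectral shift function in the sense of \eqref{eq_LK} and plays the role of the invariance-principle hypothesis underlying equations (1.9), (8.1) and (8.2) of \cite{Pus01}. The compactness of $V^{1/2}(|H_0|+1)^{-1/2}$ from \eqref{necessary1}, combined with the Hilbert-Schmidt inclusion \eqref{necessary3}, ensures that the boundary values $A(\lambda+i0)$ and $B(\lambda+i0)\ge0$ exist in $S_4(\H)$ for almost every $\lambda\in\R$, as already recorded in Lemma \ref{2limits}; these are precisely the compactness and positivity hypotheses assumed by Pushnitski.

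With the hypotheses in place, I would invoke the theorem separately for $H_+$ and $H_-$. For the positive sign-definite perturbation $+V\ge0$, \cite[Thm.~8.1]{Pus01} yields directly
$$
\xi(\lambda;H_+,H_0)=\int_\R\d\mu(t)\,n_-\big(1;A(\lambda+i0)+tB(\lambda+i0)\big)
$$
for almost every $\lambda$. For $H_-=H_0-V$, the analogous statement for the negative sign-definite perturbation $-V\le0$ produces the same representation with a global minus sign and $n_+$ in place of $n_-$. Packaging the two cases together gives precisely \eqref{lhs}.

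The main (and essentially only) difficulty is bookkeeping rather than analysis: one must carefully track the sign conventions in \cite{Pus01} so that the two cases $H_\pm$ are matched with the correct choice of $n_\mp$ in the final formula. Since the boundary values $A(\lambda+i0)$ and $B(\lambda+i0)$ are guaranteed only almost everywhere and the conclusion \eqref{lhs} is itself stated almost everywhere, no additional care is required near the thresholds $\pm m$ or $\pm\sqrt{m^2+\zeta}$ that play a role elsewhere in the paper.
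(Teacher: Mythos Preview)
Your proposal is correct and follows exactly the route taken in the paper: the theorem is stated as a direct consequence of \cite[Thm.~8.1]{Pus01} together with equations (1.9), (8.1), (8.2) of \cite{Pus01}, once the hypotheses are checked via the inclusions \eqref{necessary1}, \eqref{necessary3}, \eqref{necessary4}. The only thing to note is that the paper does not spell out the sign bookkeeping for $H_\pm$ explicitly, so your paragraph on that point is a welcome addition.
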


We know from \eqref{general_LAP} that $A(\lambda+i0)$ and $B(\lambda+i0)$ exist in $\B(\H)$
for each
$
\lambda\in(-\sqrt{m^2+\zeta},\sqrt{m^2+\zeta})\setminus\{\pm m\}
$.
In Propositions \ref{coffee}-\ref{cigarette} and Corollary \ref{inParis} below we show
that in fact $A(\lambda+i0)\in S_4(\H)$ and $B(\lambda+i0)\in S_1(\H)$ for each
$
\lambda\in(-\sqrt{m^2+\zeta},\sqrt{m^2+\zeta})\setminus\{\pm m\}
$.
Hence, by Lemma \ref{aday}, the r.h.s. of \eqref{lhs} will turn out to be well-defined for
every
$
\lambda\in(-\sqrt{m^2+\zeta},\sqrt{m^2+\zeta})\setminus\{\pm m\}
$.
In the next proposition we state some regularity properties of the function
$$
(-\sqrt{m^2+\zeta},\sqrt{m^2+\zeta})\setminus\{\pm m\}\ni\lambda\mapsto
\widetilde\xi(\lambda;H_\pm,H_0)
:=\pm\int_\R\d\mu(t)\,n_\mp\big(1;A(\lambda+i0)+tB(\lambda+i0)\big).
$$
The proof (which relies on Propositions \ref{coffee}-\ref{cigarette}, Lemma \ref{rank2},
Corollary \ref{inParis} and the stability result \cite[Thm.~3.12]{GM00}) is similar to the
one of \cite[Sec. 4.2.1]{BPR04}.

\begin{Proposition}\label{prop_cont}
Let $V$ satisfy Assumption \ref{assumption1}. Then $\widetilde\xi(\,\cdot\,;H_\pm,H_0)$
is bounded on each compact subset of
$
(-\sqrt{m^2+\zeta},\sqrt{m^2+\zeta})\setminus\{\pm m\}
$
and is continuous on
$
(-\sqrt{m^2+\zeta},\sqrt{m^2+\zeta})
\setminus\big(\{\pm m\}\cup\sigma_{\rm p}(H_\pm)\big).
$
\end{Proposition}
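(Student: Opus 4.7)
The plan is to combine the Pushnitski-type estimate of Lemma \ref{aday} with the Schatten class regularity of $A(\lambda+i0)$ and $B(\lambda+i0)$ that will be established in Propositions \ref{coffee}--\ref{cigarette} and Corollary \ref{inParis}: namely $A(\lambda+i0)\in S_4(\H)$ and $B(\lambda+i0)\in S_1(\H)$, with the maps $\lambda\mapsto A(\lambda+i0)$ and $\lambda\mapsto B(\lambda+i0)$ continuous in the $S_4$- and $S_1$-norms respectively on $(-\sqrt{m^2+\zeta},\sqrt{m^2+\zeta})\setminus\{\pm m\}$. Setting $T(\lambda,t):=A(\lambda+i0)+tB(\lambda+i0)$, the boundedness claim follows at once by applying Lemma \ref{aday} with $s_1=s_2=\tfrac12$:
$$
\bigl|\widetilde\xi(\lambda;H_\pm,H_0)\bigr|\le n_\mp\bigl(\tfrac12;A(\lambda+i0)\bigr)+\tfrac{2}{\pi}\,\|B(\lambda+i0)\|_1\le 16\,\|A(\lambda+i0)\|_4^4+\tfrac{2}{\pi}\,\|B(\lambda+i0)\|_1,
$$
and both right-hand terms are bounded on any compact subset of $(-\sqrt{m^2+\zeta},\sqrt{m^2+\zeta})\setminus\{\pm m\}$ by continuity of the Schatten norms of $A(\cdot+i0)$ and $B(\cdot+i0)$.

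For continuity, I would fix $\lambda_0\in(-\sqrt{m^2+\zeta},\sqrt{m^2+\zeta})\setminus(\{\pm m\}\cup\sigma_{\rm p}(H_\pm))$ and use Weyl's monotonicity inequality $n_\mp(s_1+s_2;T+T')\le n_\mp(s_1;T)+n_\mp(s_2;T')$ to sandwich $n_\mp(1;T(\lambda,t))$ between $n_\mp(1\pm\eta;T(\lambda_0,t))$ plus error terms of the form $n_\mp(\eta;T(\lambda,t)-T(\lambda_0,t))$. After integration against $\mu$, Lemma \ref{aday} bounds the latter by quantities that vanish as $\lambda\to\lambda_0$, thanks to the Schatten-continuity of $A(\cdot+i0)$ in $S_4$ and $B(\cdot+i0)$ in $S_1$. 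It then remains to let $\eta\searrow 0$, which by dominated convergence reduces to the pointwise statement $n_\mp(1\pm\eta;T(\lambda_0,t))\to n_\mp(1;T(\lambda_0,t))$ for $\mu$-almost every $t\in\R$.

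The delicate point is this last convergence, which fails precisely at those $t$ for which $\pm 1$ is an eigenvalue of $T(\lambda_0,t)$. Here Lemma \ref{rank2} and the stability result \cite[Thm.~3.12]{GM00} come into play: Lemma \ref{rank2} should provide a Birman--Schwinger type mechanism translating the existence of an eigenvector of $T(\lambda_0,t)$ with eigenvalue $\pm 1$ into an eigenvector of $H_\pm$ at energy $\lambda_0$. Since $\lambda_0\notin\sigma_{\rm p}(H_\pm)$, the Gesztesy--Makarov stability theorem then ensures that the exceptional set $\{t\in\R:\pm 1\in\sigma_{\rm p}(T(\lambda_0,t))\}$ is in fact discrete, hence of zero $\mu$-measure. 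The overall structure mimics \cite[Sec.~4.2.1]{BPR04}, and I expect the main effort to lie in carrying out this Birman--Schwinger reduction in the present Dirac setting rather than in the eventual measure-theoretic conclusion.
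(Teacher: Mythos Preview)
Your overall strategy is correct and matches the paper's approach, which simply refers to \cite[Sec.~4.2.1]{BPR04} and lists the same ingredients you invoke. The boundedness argument via Lemma~\ref{aday} and the Weyl-sandwich scheme for continuity are both on target.

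There is, however, a misattribution in your last paragraph. Lemma~\ref{rank2} is not a Birman--Schwinger mechanism: it simply shows that $\im S_\lambda$ vanishes in the gap and has rank two outside it. Its role (via Corollary~\ref{inParis}) is to give the $S_1$-membership and continuity of $B(\lambda+i0)=\im T(\lambda+i0)$ that you already use in your first paragraph; it plays no part at the exceptional-set step. The passage from ``$\mp1$ is an eigenvalue of $A(\lambda_0+i0)+tB(\lambda_0+i0)$ for $t$ in a set of positive measure'' to ``$\lambda_0\in\sigma_{\rm p}(H_\pm)$'' is handled by the argument of \cite[Sec.~4.2.1]{BPR04} together with \cite[Thm.~3.12]{GM00}, not by Lemma~\ref{rank2}. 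Note also that the genuine Birman--Schwinger correspondence links $\sigma_{\rm p}(H_\pm)$ to the eigenvalue $\mp1$ of the \emph{complex} operator $A(\lambda_0+i0)+iB(\lambda_0+i0)$, not of $A(\lambda_0+i0)+tB(\lambda_0+i0)$ for real $t$; bridging the two requires an additional argument (monotonicity of eigenvalues in $t$ since $B(\lambda_0+i0)\ge0$, and ruling out constant eigenvalue branches by excluding a common eigenvector of $A$ at $\mp1$ and of $B$ at $0$). With these corrections your sketch is complete.
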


In the sequel, we identify the functions $\widetilde\xi(\,\cdot\,;H_\pm,H_0)$ and
$\xi(\,\cdot\,;H_\pm,H_0)$ since they are equal for almost every $\lambda\in\R$ due to
Theorem \ref{identify} (see \cite{Saf01} for a study where the r.h.s. of \eqref{lhs} is
directly treated as a definition of $\xi(\lambda;H_\pm,H_0)$).

\begin{Remark}\label{int_eigen}
In the interval $(-m,m)$, $H_0$ has no spectrum and the spectrum of $H_\pm$ is purely
discrete. Thus the spectral shift function $\xi(\,\cdot\,;H_\pm,H_0)$ can be related
to the number of eigenvalues of $H_\pm$ as follows: for
$\lambda_1,\lambda_2\in(-m,m)\setminus\sigma(H_\pm)$ with $\lambda_1<\lambda_2$, we have
(see \cite[Thm.~9.1]{Pus01})
$$
\xi(\lambda_1;H_\pm,H_0)-\xi(\lambda_2;H_\pm,H_0)
=\rank E^{H_\pm}\big([\lambda_1,\lambda_2)\big).
$$
\end{Remark}

%-----------------------------------------------------------------------------------------------
\section{Decomposition of the weighted resolvent}\label{Sec_Dec}
\setcounter{equation}{0}
%-----------------------------------------------------------------------------------------------

In this section we decompose the weighted resolvent
$$
T(z)=V^{1/2}(H_0-z)V^{1/2},\quad z\in\C\setminus\sigma(H_0),
$$
into a sum $T(z)=T_{\sf div}(z)+T_{\sf bound}(z)$, where $T_{\sf div}(z)$ (respectively
$T_{\sf bound}(z)$) corresponds to the diverging (respectively non-diverging) part of $T(z)$
as $z\to\pm m$. Then we estimate the behaviour, in suitable Schatten norms, of each term as
$z\to\pm m$. We refer to \cite[Sec.~4]{FR04} and \cite[Sec.~4.2]{Rai09} for similar
approaches in the case of the Schr\"odinger and Pauli operators.

Let $\a$ and $\a^*$ be the closures in $\ltwo(\R^2)$ of the operators given by
$$
\a\varphi:=(\Pi_1-i\Pi_2)\varphi\qquad{\rm and}\qquad
\a^*\varphi:=(\Pi_1+i\Pi_2)\varphi,
$$
for $\varphi\in C^\infty_0(\R^2)$. Then one has (see \cite[Sec.~5.5.2]{Tha92} and
\cite[Sec.~5]{Rai99})
\begin{equation}\label{tortugo}
H_0=
\(\begin{smallmatrix}
m & 0 & 1\otimes P_3 & \a\otimes1\\
0 & m & \a^*\otimes1 & -1\otimes P_3\\
1\otimes P_3 & \a\otimes1 & -m & 0\\
\a^*\otimes1 & -1\otimes P_3 & 0 & -m
\end{smallmatrix}\),
\end{equation}
with
\begin{equation}\label{excel}
\ker(\a^*)=\ker(\a\a^*)=\ker(H_\perp^-)\subset\ltwo(\R^2).
\end{equation}
Now, let
$$
\P:=\(\begin{smallmatrix}
P & 0 & 0 & 0\\
0 & 0 & 0 & 0\\
0 & 0 & P & 0\\
0 & 0 & 0 & 0
\end{smallmatrix}\)
$$
be the orthogonal projection onto the union of the eigenspaces of $H_0$ corresponding to
the values $\lambda=\pm m$. Since $P\equiv p\otimes1$ is the orthogonal projection
onto $\ker(H_\perp^-)\otimes\ltwo(\R)$, the equations \eqref{tortugo} and \eqref{excel}
imply that $H_0$ and $\P$ commute:
\begin{equation}\label{acommutator}
H_0^{-1}\P=\P H_0^{-1}.
\end{equation}
In fact, by using \eqref{tortuga} and \eqref{tortugo}, one gets for each
$z\in\C\setminus\sigma(H_0)$ the equalities
\begin{align*}
&(H_0-z)^{-1}\P\\
&=(H_0+z)\big(H_0^2-z^2\big)^{-1}\P\\
&=\big[p\otimes R(z^2-m^2)\big]
\(\begin{smallmatrix}
(z+m) & 0 & 0 & 0\\
0 & 0 & 0 & 0\\
0 & 0 & (z-m) & 0\\
0 & 0 & 0 & 0
\end{smallmatrix}\)
+\big[p\otimes P_3R(z^2-m^2)\big]
\(\begin{smallmatrix}
0 & 0 & 1 & 0\\
0 & 0 & 0 & 0\\
1 & 0 & 0 & 0\\
0 & 0 & 0 & 0
\end{smallmatrix}\),
\end{align*}
where $R(z):=\big(P_3^2-z\big)^{-1}$, $z\in\C\setminus[0,\infty)$, is the resolvent of $P_3^2$
in $\ltwo(\R)$. This allows us to decompose $T(z)$ as $T(z)=T_{\sf div}(z)+T_{\sf bound}(z)$,
with
\begin{align*}
T_{\sf div}(z)&:=V^{1/2}\big[p\otimes R(z^2-m^2)\big]
\(\begin{smallmatrix}
(z+m) & 0 & 0 & 0\\
0 & 0 & 0 & 0\\
0 & 0 & (z-m) & 0\\
0 & 0 & 0 & 0
\end{smallmatrix}\)
V^{1/2},\\
T_{\sf bound}(z)&:=V^{1/2}\big[p\otimes P_3R(z^2-m^2)\big]
\(\begin{smallmatrix}
0 & 0 & 1 & 0\\
0 & 0 & 0 & 0\\
1 & 0 & 0 & 0\\
0 & 0 & 0 & 0
\end{smallmatrix}\)
V^{1/2}
+V^{1/2}(H_0-z)^{-1}\P^\perp V^{1/2}
\qquad(\P^\perp:=1-\P).
\end{align*}
One may note that this decomposition of $T(z)$ differs slightly from the simpler
decomposition
$$
T(z)=V^{1/2}(H_0-z)\P V^{1/2}+V^{1/2}(H_0-z)\P^\perp V^{1/2},
$$
since the first term in $T_{\sf bound}(z)$ is associated to the projection $\P$ and not
the projection $\P^\perp$. This choice is motivated by the will of distinguishing clearly
the contribution $T_{\sf div}(z)$, that diverge as $z\to\pm m$, from the contribution
$T_{\sf bound}(z)$, that stays bounded as $z\to\pm m$.

For $\lambda\in\R\setminus\{0\}$, we can define the boundary value $R(\lambda)$ of the
resolvent $R(z)$ as the operator with convolution kernel $r_\lambda(\,\cdot\,)$, where
$$
r_\lambda(x_3)
:=\begin{cases}
\frac{\e^{-\sqrt{-\lambda}|x_3|}}{2\sqrt{-\lambda}} & \hbox{if }\lambda<0,\vspace{3pt}\\
\frac{i\e^{\sqrt{\lambda}|x_3|}}{2\sqrt{\lambda}} & \hbox{if }\lambda>0,
\end{cases}
$$
for each $x_3\in\R$. So, we can extend the definition of $T_{\sf div}(\,\cdot\,)$ to the
values $\lambda\in\R\setminus\{\pm m\}$:
$$
T_{\sf div}(\lambda):=V^{1/2}\big[p\otimes R(\lambda^2-m^2)\big]
\(\begin{smallmatrix}
(\lambda+m) & 0 & 0 & 0\\
0 & 0 & 0 & 0\\
0 & 0 & (\lambda-m) & 0\\
0 & 0 & 0 & 0
\end{smallmatrix}\)
V^{1/2}.
$$

In the following proposition, we show that the trace norm of $T_{\sf div}(z)$ is continuous
in $\C_+:=\{z\in\C\mid\im(z)\ge0\}$ outside the points $z=\pm m$, where it may
diverge as $|z\mp m|^{-1/2}$. The proof of the proposition relies on a technical result
that we now recall.

\begin{Lemma}[Lemma 2.4 of \cite{Rai09}]\label{lem_rai}
Let $U\in\mathsf{L}^{\:\!\!q}(\R^2)$, $q\in[1,\infty)$, and assume that $b$ is an admissible
magnetic field. Then $pUp\in S_q[\ltwo(\R^2)]$, and
$$
\big\|pUp\big\|_{S_q[\ltwo(\R^2)]}^q
\le\frac{b_0}{2\pi}\;\!\e^{2{\rm osc(\widetilde\varphi)}}\|U\|_{\mathsf{L}^{\:\!\!q}(\R^2)}^q\,.
$$
\end{Lemma}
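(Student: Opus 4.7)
\emph{Strategy.} The plan is to reduce the estimate to a pointwise bound on the diagonal of the integral kernel of $p$, and then to interpolate a trace-class bound at $q=1$ with the trivial operator-norm bound at $q=\infty$. Once the diagonal estimate
$$
p(x_\perp,x_\perp)\le C:=\tfrac{b_0}{2\pi}\;\!\e^{2{\rm osc}(\widetilde\varphi)}\qquad(x_\perp\in\R^2)
$$
is established, the rest is routine functional analysis.

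\emph{Diagonal bound.} For an admissible magnetic field, the factorisation $H_\perp^-=\a\a^*$ together with the explicit form of $\a^*$ in the complex coordinate $z=x_1+ix_2$ shows that the map $f\mapsto\e^{-\varphi}f$ is a unitary isomorphism from the Bergman--Fock space
$$
\mathscr F_\varphi:=\Big\{f:\R^2\to\C\ \text{entire in }z\;\!:\;\!\textstyle\int_{\R^2}|f(x_\perp)|^2\;\!\e^{-2\varphi(x_\perp)}\;\!\d x_\perp<\infty\Big\}
$$
onto $\ker(H_\perp^-)\subset\ltwo(\R^2)$. Consequently, if $K_\varphi$ denotes the reproducing kernel of $\mathscr F_\varphi$, one has $p(x_\perp,y_\perp)=\e^{-\varphi(x_\perp)}\;\!K_\varphi(x_\perp,y_\perp)\;\!\e^{-\varphi(y_\perp)}$. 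In the constant-field case ($\widetilde\varphi\equiv0$, $\varphi=\varphi_0$) the kernel is explicit (Bargmann--Fock) and one recovers the familiar value $p_0(x_\perp,x_\perp)=\tfrac{b_0}{2\pi}$. For a general admissible field, the pointwise comparison $\e^{-2\varphi_0}\le\e^{2\sup\widetilde\varphi}\;\!\e^{-2\varphi}$ implies $\|f\|_{\varphi_0}^2\le\e^{2\sup\widetilde\varphi}\|f\|_\varphi^2$ for every $f\in\mathscr F_\varphi$, and the reproducing property gives $K_\varphi(x_\perp,x_\perp)\le\e^{2\sup\widetilde\varphi}\;\!K_{\varphi_0}(x_\perp,x_\perp)$. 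Combining these observations,
$$
p(x_\perp,x_\perp)\le\tfrac{b_0}{2\pi}\;\!\e^{2(\sup\widetilde\varphi-\widetilde\varphi(x_\perp))}\le\tfrac{b_0}{2\pi}\;\!\e^{2{\rm osc}(\widetilde\varphi)}=C.
$$

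\emph{From the diagonal to Schatten.} For $U\in\mathsf L^1(\R^2)$, write $U=\e^{i\theta}|U|$ with $\theta:\R^2\to\R$ measurable, let $M_f$ denote multiplication by $f$, and factor $pUp=\big(pM_{\e^{i\theta}|U|^{1/2}}\big)\big(M_{|U|^{1/2}}p\big)$. The Schatten inequality $\|AB\|_1\le\|A\|_2\|B\|_2$ yields
$$
\|pUp\|_1\le\|pM_{\e^{i\theta}|U|^{1/2}}\|_2\,\|M_{|U|^{1/2}}p\|_2
=\|M_{|U|^{1/2}}p\|_2^2
=\int_{\R^2}|U(x_\perp)|\;\!p(x_\perp,x_\perp)\;\!\d x_\perp
\le C\|U\|_{\mathsf L^1(\R^2)}.
$$
At the other endpoint, $\|pUp\|_\infty\le\|U\|_{\mathsf L^\infty(\R^2)}$ is immediate from $\|p\|=1$. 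Hence the linear map $U\mapsto pUp$ is bounded $\mathsf L^1\to S_1$ with norm at most $C$ and $\mathsf L^\infty\to S_\infty$ with norm at most $1$, and complex interpolation for Schatten classes gives $\|pUp\|_q\le C^{1/q}\|U\|_{\mathsf L^q(\R^2)}$ for each $q\in[1,\infty]$. Raising to the $q$-th power is exactly the claim.

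\emph{Main obstacle.} The only nontrivial step is the pointwise diagonal estimate on $p(x_\perp,x_\perp)$. It relies on the identification of $\ker(H_\perp^-)$ with a weighted Bergman--Fock space (a genuinely two-dimensional phenomenon, tied to the ground-state transformation for the Pauli operator) and on the uniform boundedness of $\widetilde\varphi$ built into the admissibility of $b$; everything past that step is standard.
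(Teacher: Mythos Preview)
The paper does not supply its own proof of this lemma; it is quoted from \cite{Rai09}. Your argument is correct and follows the standard route used there: the identification of $\ker(H_\perp^-)$ with a weighted Fock space gives the diagonal estimate $p(x_\perp,x_\perp)\le\tfrac{b_0}{2\pi}\e^{2{\rm osc}(\widetilde\varphi)}$ by comparing reproducing kernels for the weights $\e^{-2\varphi}$ and $\e^{-2\varphi_0}$ (the boundedness of $\widetilde\varphi$ being the admissibility hypothesis actually used). The factorisation $pUp=(pM_{\e^{i\theta}|U|^{1/2}})(M_{|U|^{1/2}}p)$ together with $\|M_{|U|^{1/2}}p\|_2^2=\int|U|\,p(\cdot,\cdot)$ gives the $\lone\to S_1$ endpoint with constant $C$, the $\linf\to\B(\H)$ endpoint is trivial, and complex interpolation between Lebesgue and Schatten scales yields $\|pUp\|_q\le C^{1/q}\|U\|_q$. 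Nothing is missing.
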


The symbol $y_+$ denotes the postive part of $y\in\R$.

\begin{Proposition}\label{coffee}
Let $V$ satisfy Assumption \ref{assumption1}. Then the operator-valued function
$$
\C_+\setminus\{\pm m\}\ni z\mapsto T_{\sf div}(z)\in S_1(\H)
$$
is well-defined and continuous. Moreover, we have for each
$\lambda\in\R\setminus\{\pm m\}$ the bound
$$
\|T_{\sf div}(\lambda)\|_1
\le{\rm Const.}\;\!\textstyle\Big(\big|\frac{\lambda+m}{\lambda-m}\big|^{1/2}
+\big|\frac{\lambda-m}{\lambda+m}\big|^{1/2}\Big)\big(1+(\lambda^2-m^2)_+^{1/4}\big).
$$
\end{Proposition}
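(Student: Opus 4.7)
The plan is to bound $\|T_{\sf div}(z)\|_1$ by combining a tensor-product decomposition of $\H$ with Lemma \ref{lem_rai} and explicit kernel estimates for the one-dimensional resolvent $R(w)$, where $w:=z^2-m^2$; continuity in $S_1(\H)$ will follow from $S_1$-norm convergence of the factorization as $z\in\C_+$ varies.

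From Assumption \ref{assumption1} one gets the pointwise matrix inequality $V^{1/2}(x)\le\text{Const.}\,\langle x_\perp\rangle^{-\nu_\perp/2}\langle x_3\rangle^{-\nu_3/2}\,I_{\C^4}$, so setting $\phi(x):=\langle x_\perp\rangle^{-\nu_\perp/2}\langle x_3\rangle^{-\nu_3/2}$ and writing $V^{1/2}=\phi(Q)W(Q)$ with $W$ a bounded matrix-valued multiplier, one obtains
\[
\|T_{\sf div}(z)\|_1\le\text{Const.}\,\|\phi(Q)\,[p\otimes R(w)]\,M(z)\,\phi(Q)\|_1.
\]
In the decomposition $\H=\ltwo(\R^2)\otimes\ltwo(\R)\otimes\C^4$ this operator factors as $[u(Q_\perp)\,p\,u(Q_\perp)]\otimes[v(Q_3)\,R(w)\,v(Q_3)]\otimes M(z)$, with $u:=\langle\cdot\rangle^{-\nu_\perp/2}$, $v:=\langle\cdot\rangle^{-\nu_3/2}$, so its $S_1$-norm is the product of the three factor trace norms. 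Lemma \ref{lem_rai} applied with $U=u^2\in\lone(\R^2)$ (using $\nu_\perp>2$) bounds $\|u(Q_\perp)\,p\,u(Q_\perp)\|_1$ by a constant depending only on $b_0$ and $\widetilde\varphi$, and $\|M(z)\|_1=|z+m|+|z-m|$. For the 1D factor I would use the factorization $R(w)=(P_3+\sqrt{w})^{-1}(P_3-\sqrt{w})^{-1}$ with $\sqrt{w}$ chosen in the upper half-plane and the explicit kernels $K_\pm(x,y)=\mp i\theta(\pm(x-y))\,\e^{\pm i\sqrt{w}(x-y)}$, yielding the Hilbert-Schmidt estimate $\|v\,(P_3\pm\sqrt{w})^{-1}\|_2^2=\|v\|_2^2/(2\im\sqrt{w})$ (finite since $\nu_3>1$ gives $v\in\ltwo(\R)$), and hence $\|v(Q_3)\,R(w)\,v(Q_3)\|_1\le\|v\|_2^2/(2\im\sqrt{w})$ whenever $\im\sqrt{w}>0$. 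Multiplying the three bounds and observing that $(|z+m|+|z-m|)/|z^2-m^2|^{1/2}=|\tfrac{z+m}{z-m}|^{1/2}+|\tfrac{z-m}{z+m}|^{1/2}$, this yields the claimed estimate for $z$ in the open upper half-plane and, by continuity, for real $\lambda\in(-m,m)$ where $\im\sqrt{w}=\sqrt{m^2-\lambda^2}$.

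The main obstacle is the boundary case $\lambda\in\R$ with $|\lambda|>m$: there $w>0$ lies on the branch cut of $R(\cdot)$, so $\im\sqrt{w+i0}=0$ and the Hilbert-Schmidt-based bound degenerates. I would resolve this by the Sokhotski-Plemelj decomposition $R(w+i0)=\mathrm{p.v.}(P_3^2-w)^{-1}+i\pi\delta(P_3^2-w)$: the rank-two imaginary part, after weighting by $v$, is trace class with explicit norm $\|v\|_2^2/(2\sqrt{w})$, while the principal-value real part is handled by refining the tensor estimate using the smoothing provided by the projection $p$ (this is analogous to the treatments in \cite[Sec.~4]{FR04} and \cite[Sec.~4.2]{Rai09}). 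The additional factor $(1+(\lambda^2-m^2)_+^{1/4})$ in the claimed bound absorbs the loss from this refinement for large $|\lambda|$. Continuity of $z\mapsto T_{\sf div}(z)\in S_1(\H)$ on $\C_+\setminus\{\pm m\}$ then follows from $S_1$-norm convergence of the approximating operators as $z$ approaches the real axis, together with the interior estimates.
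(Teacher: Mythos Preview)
Your overall factorization matches the paper's: one writes
\[
T_{\sf div}(z)=M\big(G\otimes J_{z^2-m^2}\big)\diag(z+m,0,z-m,0)\,M,
\]
with $M=V^{1/2}\langle Q_\perp\rangle^{\nu_\perp/2}\langle Q_3\rangle^{\nu_3/2}$ bounded, $G=\langle Q_\perp\rangle^{-\nu_\perp/2}p\langle Q_\perp\rangle^{-\nu_\perp/2}\in S_1[\ltwo(\R^2)]$ by Lemma~\ref{lem_rai}, and $J_w=\langle Q_3\rangle^{-\nu_3/2}R(w)\langle Q_3\rangle^{-\nu_3/2}$. Your Hilbert--Schmidt factorization of $R(w)$ gives the correct bound $\|J_w\|_1\le{\rm Const.}/\im\sqrt w$ in the open upper half-plane and for $\lambda\in(-m,m)$, and the algebraic identity $(|z+m|+|z-m|)/|z^2-m^2|^{1/2}=|\tfrac{z+m}{z-m}|^{1/2}+|\tfrac{z-m}{z+m}|^{1/2}$ is exactly what yields the shape of the final estimate.

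There is, however, a genuine gap in the boundary case $|\lambda|>m$. Your proposed fix --- splitting $R(w+i0)$ into a rank-two imaginary part and a principal-value real part, then invoking ``smoothing provided by the projection $p$'' --- cannot work as stated: $p$ lives on the $\ltwo(\R^2)$ tensor factor and has no effect whatsoever on the one-dimensional operator $\langle Q_3\rangle^{-\nu_3/2}\,\mathrm{p.v.}(P_3^2-w)^{-1}\langle Q_3\rangle^{-\nu_3/2}$. In the tensor product the trace norm factors as $\|G\|_1\cdot\|J_w\|_1$, so one must prove directly that $J_w\in S_1[\ltwo(\R)]$ with a quantitative bound, and your argument gives only an $S_2$-bound for the real part once $\im\sqrt w=0$. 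The paper does not attempt this by hand; it quotes \cite[Sec.~4.1]{BPR04} for the fact that $\C_+\setminus\{0\}\ni w\mapsto J_w\in S_1$ is continuous and satisfies $\|J_\lambda\|_1\le{\rm Const.}\,(1+\lambda_+^{1/4})|\lambda|^{-1/2}$ for real $\lambda\ne0$. That estimate (in particular the $\lambda_+^{1/4}$ loss you correctly anticipated) is precisely the missing ingredient, and it requires an argument specific to the weighted one-dimensional resolvent, not anything coming from~$p$.
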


\begin{proof}
We have for each $z\in\C\setminus\sigma(H_0)$ the identity
$$
T_{\sf div}(z)=M\big(G\otimes J_{z^2-m^2}\big)
\(\begin{smallmatrix}
(z+m) & 0 & 0 & 0\\
0 & 0 & 0 & 0\\
0 & 0 & (z-m) & 0\\
0 & 0 & 0 & 0
\end{smallmatrix}\)M,
$$
where
\begin{align}
M&:=V^{1/2}\langle Q_\perp\rangle^{\nu_\perp/2}\langle Q_3\rangle^{\nu_3/2},
\label{operatorM}\\
G&:=\langle Q_\perp\rangle^{-\nu_\perp/2}p\langle Q_\perp\rangle^{-\nu_\perp/2},
\label{operatorG}\\	
J_z&:=\langle Q_3\rangle^{-\nu_3/2}R(z)\langle Q_3\rangle^{-\nu_3/2}.\nonumber
\end{align}
The operator $M$ is bounded due to Assumption \ref{assumption1}. So
$$
\|T_{\sf div}(z)\|_1\le{\rm Const.}\(|z+m|+|z-m|\)\|G\|_1\|J_{z^2-m^2}\|_1.
$$
But we know from Lemma \ref{lem_rai} that $\|G\|_1\le{\rm Const.}$, and from
\cite[Sec.~4.1]{BPR04} that the operator-valued function
$\C_+\setminus\{0\}\ni z\mapsto J_z$ is continuous in the trace norm and admits the bound
$$
\|J_\lambda\|_1\le{\rm Const.}\;\!\big(1+\lambda_+^{1/4}\big)|\lambda|^{-1/2},
\quad\lambda\in\R\setminus\{0\}.
$$
It follows that
$$
\|T_{\sf div}(z)\|_1\le{\rm Const.}\;\!\textstyle
\Big(\big|\frac{\lambda+m}{\lambda-m}\big|^{1/2}
+\big|\frac{\lambda-m}{\lambda+m}\big|^{1/2}\Big)
\big(1+(\lambda^2-m^2)_+^{1/4}\big)
$$
for each $\lambda\in\R\setminus\{\pm m\}$.
\end{proof}

In the following proposition, we show that the function
$z\mapsto T_{\sf bound}(z)\in S_4(\H)$ is continuous in
$
\C\setminus\big\{(-\infty,-\sqrt{m^2+\zeta}]\cup[\sqrt{m^2+\zeta},\infty)\big\}.
$
The symbols $H^\pm$ stand for the operators $H^\pm:=H_\perp^\pm\otimes1+1\otimes P_3^2$
acting in $\ltwo(\R^3)$. 

\begin{Proposition}\label{cigarette}
Let $V$ satisfy Assumption \ref{assumption1}. Then the operator-valued function
$$
\C\setminus\big\{(-\infty,-\sqrt{m^2+\zeta}]\cup[\sqrt{m^2+\zeta},\infty)\big\}
\ni z\mapsto T_{\sf bound}(z)\in S_4(\H)
$$
is well-defined and continuous. Moreover, we have for each
$
\lambda\in(-\sqrt{m^2+\zeta},\sqrt{m^2+\zeta})
$
the bound
\begin{equation}\label{ChuchoValdes}
\textstyle\|T_{\sf bound}(\lambda)\|_4\le{\rm Const.}\(|\lambda|+\lambda^2\)
\Big(1+\frac{(\lambda^2-m^2+1)_+}{\zeta+m^2-\lambda^2}\Big)+{\rm Const.}
\end{equation}
\end{Proposition}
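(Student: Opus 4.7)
The plan is to decompose $T_{\sf bound}(z)=T_1(z)+T_2(z)$ into its two defining summands and estimate each separately in $S_4(\H)$. For the first summand $T_1(z)$ involving $p\otimes P_3 R(z^2-m^2)$, I would copy the pattern of Proposition \ref{coffee} and factor it as $M\bigl(G\otimes L_{z^2-m^2}\bigr)\(\begin{smallmatrix}0&0&1&0\\0&0&0&0\\1&0&0&0\\0&0&0&0\end{smallmatrix}\)M$, with $M$, $G$ as in \eqref{operatorM}--\eqref{operatorG} and $L_\mu:=\langle Q_3\rangle^{-\nu_3/2}P_3R(\mu)\langle Q_3\rangle^{-\nu_3/2}$. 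The essential observation is that, unlike the operator $J_\mu$ of Proposition \ref{coffee} which diverges at $\mu=0$, the convolution kernel of $P_3R(\mu)$ equals (up to a constant) $\sgn(x_3-y_3)\,\e^{\pm\sqrt{\mp\mu}|x_3-y_3|}$, pointwise bounded by $1$ uniformly in a neighbourhood of $\mu=0$. Hence $L_\mu$ is Hilbert-Schmidt with norm uniformly bounded and continuous as $\mu$ ranges over $\C_+$, and combined with $G\in S_1[\ltwo(\R^2)]$ from Lemma \ref{lem_rai} this yields $G\otimes L_{z^2-m^2}\in S_4[\ltwo(\R^3)]$ continuously in $z$, contributing the additive constant in \eqref{ChuchoValdes}.

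For $T_2(z):=V^{1/2}(H_0-z)^{-1}\P^\perp V^{1/2}$ the key input is the spectral gap $H_0^2\P^\perp\geq(m^2+\zeta)\P^\perp$ from \eqref{tortuga} and Proposition \ref{zeta}, which keeps $(H_0^2-z^2)^{-1}\P^\perp$ bounded on the stated complex domain. Using \eqref{acommutator} to write $(H_0-z)^{-1}\P^\perp=(H_0+z)(H_0^2-z^2)^{-1}\P^\perp$, I split
\begin{equation*}
T_2(z)=V^{1/2}H_0(H_0^2-z^2)^{-1}\P^\perp V^{1/2}+z\,V^{1/2}(H_0^2-z^2)^{-1}\P^\perp V^{1/2}
\end{equation*}
and apply Hölder in the Schatten scale, inserting $|H_0|^{-1}$ between the leading $V^{1/2}$ and the middle factor: $V^{1/2}|H_0|^{-1}\in S_4(\H)$ by \eqref{necessary2} with $(q,\gamma)=(4,1)$, and $V^{1/2}\in\B(\H)$. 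The operator norms of the middle factors $\sgn(H_0)|H_0|^2(H_0^2-z^2)^{-1}\P^\perp$ and $z|H_0|(H_0^2-z^2)^{-1}\P^\perp$ are evaluated by the spectral theorem on $\{|\mu|\geq\sqrt{m^2+\zeta}\}$: for real $z=\lambda$ they equal the suprema of $\mu^2/(\mu^2-\lambda^2)$ and $|\lambda|\mu/(\mu^2-\lambda^2)$, attained at $\mu=\sqrt{m^2+\zeta}$, yielding bounds of order $(m^2+\zeta)/(\zeta+m^2-\lambda^2)$ and $|\lambda|\sqrt{m^2+\zeta}/(\zeta+m^2-\lambda^2)$. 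A short algebraic rearrangement using $\lambda^2=(\lambda^2-m^2+1)+(m^2-1)$ on $|\lambda|<\sqrt{m^2+\zeta}$ then produces the announced dependence $(|\lambda|+\lambda^2)\bigl(1+(\lambda^2-m^2+1)_+/(\zeta+m^2-\lambda^2)\bigr)$ of \eqref{ChuchoValdes}.

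Continuity in $S_4(\H)$ on the stated domain follows for both pieces in parallel: $L_\mu$ depends continuously on $\mu$ in the Hilbert-Schmidt norm by dominated convergence on its kernel; the middle factors in $T_2(z)$ depend continuously on $z$ in operator norm by the spectral theorem, since the scalar functions $\mu\mapsto\mu^k/(\mu^2-z^2)$ ($k=1,2$) converge uniformly on $\{|\mu|\geq\sqrt{m^2+\zeta}\}$ as $z\to z_0$ in any compact subset of the admissible domain. The main technical obstacle I anticipate is the careful spectral bookkeeping required to recover the precise form of \eqref{ChuchoValdes}, in particular isolating the prefactor $(|\lambda|+\lambda^2)$ together with the specific numerator $(\lambda^2-m^2+1)_+$; all the analytical ingredients needed (the Schatten inclusion \eqref{necessary2}, the spectral gap from Proposition \ref{zeta}, Lemma \ref{lem_rai}, and uniform boundedness of the kernel of $P_3R(\mu)$ near $\mu=0$) are readily at our disposal.
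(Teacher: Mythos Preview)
Your approach is correct and the analytical ingredients you list do suffice, but the route differs from the paper's in the treatment of the $\P^\perp$ piece. The paper splits that piece into \emph{three} terms via the resolvent identity $(H_0-z)^{-1}=H_0^{-1}+z(1+zH_0^{-1})(H_0^2-z^2)^{-1}$, isolating a $z$-independent contribution $T_2=V^{1/2}H_0^{-1}\P^\perp V^{1/2}$ (handled by \eqref{necessary2}) and a remainder $T_3(z)$ bounded in $S_2$ by citing an estimate from \cite[Prop.~4.4]{Rai09} on $\|(H^\pm+m^2-z^2)^{-1}M_2\|_2$, with the function $C(z):=\sup_{y\ge\zeta}\frac{y+1}{|y+m^2-z^2|}$ appearing naturally---this is precisely $1+\frac{(\lambda^2-m^2+1)_+}{\zeta+m^2-\lambda^2}$ for real $\lambda$, and the prefactor $(|\lambda|+\lambda^2)$ comes from $\|zV^{1/2}(1+zH_0^{-1})\|$. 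Your two-term splitting $(H_0+z)(H_0^2-z^2)^{-1}\P^\perp$ together with a single insertion of $|H_0|^{-1}$ and \eqref{necessary2} is more self-contained (it avoids the external reference) but yields a bound of the shape ${\rm Const.}\,(1+|\lambda|)/(\zeta+m^2-\lambda^2)$ rather than the literal expression \eqref{ChuchoValdes}; the two bounds are equivalent on $(-\sqrt{m^2+\zeta},\sqrt{m^2+\zeta})$, and since the only downstream use (Proposition~\ref{empanada}) needs merely $\|T_{\sf bound}(\lambda)\|_4=\O(1)$ as $\lambda\to\pm m$, this discrepancy is cosmetic. Your treatment of $T_1(z)$ matches the paper's exactly.
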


\begin{proof}
One has the identity
$$
(H_0-z)^{-1}=H_0^{-1}+z\big(1+zH_0^{-1}\big)\big(H_0^2-z^2\big)^{-1}
$$
for each $z\in\C\setminus\sigma(H_0)$. Thus the operator $T_{\sf bound}(z)$ can be
written as
\begin{align}
T_{\sf bound}(z)&=M\big(G\otimes S_z\big)
\(\begin{smallmatrix}
0 & 0 & 1 & 0\\
0 & 0 & 0 & 0\\
1 & 0 & 0 & 0\\
0 & 0 & 0 & 0
\end{smallmatrix}\)M
+V^{1/2}H_0^{-1}\P^\perp V^{1/2}
+zV^{1/2}\big(1+zH_0^{-1}\big)\big(H_0^2-z^2\big)^{-1}
\P^\perp V^{1/2}\label{BeforeCommute}\\
&\equiv T_1(z)+T_2+T_3(z),\nonumber
\end{align}
with $M$ and $G$ given by \eqref{operatorM}-\eqref{operatorG}, and
$$
S_z:=\langle Q_3\rangle^{-\nu_3/2}P_3R(z^2-m^2)\langle Q_3\rangle^{-\nu_3/2}.
$$
The integral kernel of $S_z$ is
\begin{equation}\label{truffes}
\textstyle\frac i2\langle x_3\rangle^{-\nu_3/2}
\frac{(x_3-x_3')}{|x_3-x_3'|}\e^{i\sqrt{z^2-m^2}|x_3-x_3'|}
\langle x_3'\rangle^{-\nu_3/2},
\end{equation}
with the branch of $\sqrt{z^2-m^2}$ chosen so that $\im\sqrt{z^2-m^2}>0$. So $S_z$
extends to an element of $S_2[\ltwo(\R)]$ for each $z\in\C$, with $\|S_z\|_2\le{\rm Const}$.
Since $M$ is bounded and $\|G\|_1\le{\rm Const.}$, this implies that
\begin{equation}\label{firstterm}
\|T_1(z)\|_2
\le{\rm Const.}\;\!\|M\|^2\|G\|_1\|S_z\|_2
\le{\rm Const.}
\end{equation}
for each $z\in\C$. One also has
\begin{equation}\label{secondterm}
\|T_2\|_4\le{\rm Const.}
\end{equation}
due to \eqref{necessary2}. So, it only remains to bound the term $T_3(z)$.

Let
$
z\in\C\setminus\{(-\infty,-\sqrt{m^2+\zeta}]\cup[\sqrt{m^2+\zeta},\infty)\}
$
and $P^\perp:=1-P$. Then $\big(H^-+m^2-z^2\big)^{-1}P^\perp$ and
$\big(H^++m^2-z^2\big)^{-1}$ belong to $\B[\ltwo(\R^3)]$, and we have
$$
\big(H^-+m^2-z^2\big)^{-1}P^\perp=P^\perp\big(H^-+m^2-z^2\big)^{-1}.
$$
Thus
\begin{align*}
\big(H_0^2-z^2\big)^{-1}\P^\perp V^{1/2}
&=\big(H_0^2-z^2\big)^{-1}
\(\begin{smallmatrix}
P^\perp & 0 & 0 & 0\\
0 & 1 & 0 & 0\\
0 & 0 & P^\perp & 0\\
0 & 0 & 0 & 1
\end{smallmatrix}\)
V^{1/2}\\
&=\(\begin{smallmatrix}
P^\perp(H^-+m^2-z^2)^{-1} & 0 & 0 & 0\\
0 & (H^++m^2-z^2)^{-1} & 0 & 0\\
0 & 0 & P^\perp(H^-+m^2-z^2)^{-1} & 0\\
0 & 0 & 0 & (H^++m^2-z^2)^{-1}
\end{smallmatrix}\)
V^{1/2},
\end{align*}
and
$$
\big\|\big(H_0^2-z^2\big)^{-1}\P^\perp V^{1/2}\big\|^2_2
\le2\;\!\|M\|^2\Big\{\big\|P^\perp\big(H^-+m^2-z^2\big)^{-1}M_2\big\|^2_2
+\big\|\big(H^++m^2-z^2\big)^{-1}M_2\big\|^2_2\Big\},
$$
where $M_2:=\langle Q_\perp\rangle^{-\nu_\perp/2}\langle Q_3\rangle^{-\nu_3/2}$.
But, we know from the proof of \cite[Prop. 4.4]{Rai09} that
$$
\big\|P^\perp(H^-+m^2-z^2)^{-1}M_2\big\|_2\le{\rm Const.}\;\!C(z)
\qquad\hbox{and}\qquad
\big\|(H^++m^2-z^2)^{-1}M_2\big\|_2\le{\rm Const.}\;\!C(z),
$$
where
$$
C(z):=\sup_{y\in[\zeta,\infty)}\frac{y+1}{|y+m^2-z^2|}\,.
$$
It follows that
\begin{equation}\label{thirdterm}
\|T_3(z)\|_2
\le{\rm Const.}\big\|zV^{1/2}\big(1+zH_0^{-1}\big)\big\|\,\|M\|\,C(z)
\le{\rm Const.}\(|z|+|z|^2\)C(z).
\end{equation}
The claim follows then by putting together \eqref{firstterm}, \eqref{secondterm}, and
\eqref{thirdterm}.
\end{proof}

In the next lemma we give some results on the imaginary part of the operator $S_z$
in $\ltwo(\R)$ appearing in the proof of Proposition \ref{cigarette}
$$
S_z=\langle Q_3\rangle^{-\nu_3/2}P_3R(z^2-m^2)\langle Q_3\rangle^{-\nu_3/2},
\qquad z\in\C\setminus\sigma(H_0),~\nu_3>1.
$$

\begin{Lemma}\label{rank2}
\begin{enumerate}
\item[(a)] One has $\im S_\lambda=0$ for each $\lambda\in(-m,m)$.
\item[(b)] Let $p\ge1$ be an integer. Then one has for each $\lambda\in\R$ with
$|\lambda|>m$
$$
\|\im S_\lambda\|_p\le\textsc c_p\,,
$$
where $\textsc c_p$ is a constant independent of $\lambda$. Furthermore
$$
\lim_{\lambda\to\pm m,\,|\lambda|>m}\|\im S_\lambda\|_p=0.
$$
\end{enumerate}
\end{Lemma}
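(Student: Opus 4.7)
The plan is to compute the integral kernel of $\im S_\lambda$ explicitly in each of the two regimes $|\lambda|<m$ and $|\lambda|>m$ and read off the claimed properties directly from its structure. Nothing delicate is expected; the key observation for (b) is that the addition formula for $\sin$ exhibits $\im S_\lambda$ as a rank-$2$ operator, after which everything reduces to elementary $\ltwo$ bounds.

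For part (a), when $\lambda\in(-m,m)$ we have $\lambda^2-m^2<0$, and the branch prescription $\im\sqrt{z^2-m^2}>0$ forces $\sqrt{\lambda^2-m^2}=i\sqrt{m^2-\lambda^2}$. The kernel \eqref{truffes} then takes the real-exponential form
\begin{equation*}
K_\lambda(x_3,x_3')
=\tfrac{i}{2}\langle x_3\rangle^{-\nu_3/2}\sgn(x_3-x_3')
\e^{-\sqrt{m^2-\lambda^2}\:\!|x_3-x_3'|}\langle x_3'\rangle^{-\nu_3/2},
\end{equation*}
and one checks directly that $\overline{K_\lambda(x_3',x_3)}=K_\lambda(x_3,x_3')$, because the minus sign from complex-conjugation of the prefactor $i/2$ is absorbed by the antisymmetry of $\sgn$. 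Hence $S_\lambda=S_\lambda^*$ and $\im S_\lambda=0$.

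For part (b), when $|\lambda|>m$ set $\alpha:=\sqrt{\lambda^2-m^2}>0$. Computing $(S_\lambda-S_\lambda^*)/(2i)$ from \eqref{truffes}, using $\e^{i\theta}-\e^{-i\theta}=2i\sin\theta$ and the elementary identity $\sgn(x_3-x_3')\sin(\alpha|x_3-x_3'|)=\sin(\alpha(x_3-x_3'))$, one finds that $\im S_\lambda$ has integral kernel $\tfrac{i}{2}\langle x_3\rangle^{-\nu_3/2}\sin\!\big(\alpha(x_3-x_3')\big)\langle x_3'\rangle^{-\nu_3/2}$. Expanding with the addition formula for $\sin$ factorizes this as
\begin{equation*}
\im S_\lambda=\tfrac{i}{2}\big(|u_\lambda\rangle\langle v_\lambda|-|v_\lambda\rangle\langle u_\lambda|\big),\quad
u_\lambda(x_3):=\langle x_3\rangle^{-\nu_3/2}\sin(\alpha x_3),\quad v_\lambda(x_3):=\langle x_3\rangle^{-\nu_3/2}\cos(\alpha x_3),
\end{equation*}
so $\rank(\im S_\lambda)\le2$. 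Consequently all Schatten norms are controlled by the trace norm, giving $\|\im S_\lambda\|_p\le\|\im S_\lambda\|_1\le\|u_\lambda\|_{\ltwo(\R)}\|v_\lambda\|_{\ltwo(\R)}$. The bounds $|\sin|,|\cos|\le1$ together with $\langle\cdot\rangle^{-\nu_3}\in\lone(\R)$ (from $\nu_3>1$) yield the uniform estimate $\|u_\lambda\|,\|v_\lambda\|\le\textsc c$, hence $\|\im S_\lambda\|_p\le\textsc c_p$. For the limit, as $\lambda\to\pm m$ with $|\lambda|>m$ one has $\alpha\to0$, so $u_\lambda(x_3)\to0$ pointwise; the dominated convergence theorem (with dominator $\langle x_3\rangle^{-\nu_3/2}\in\ltwo(\R)$) gives $\|u_\lambda\|\to0$, and combined with the uniform bound on $\|v_\lambda\|$ this forces $\|\im S_\lambda\|_p\to0$.
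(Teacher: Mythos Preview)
Your proof is correct and follows essentially the same approach as the paper: for (a) you verify self-adjointness (the paper invokes the spectral theorem, i.e.\ $P_3R(\lambda^2-m^2)$ is self-adjoint when $\lambda^2<m^2$), and for (b) both you and the paper expand the kernel via the sine addition formula to exhibit $\im S_\lambda$ as a rank-two operator built from the same vectors $u_\lambda,v_\lambda$ (up to absorbing the constant $-i/2$ into $v_\lambda$). The only minor difference is that the paper uses the orthogonality $\langle u_\lambda,v_\lambda\rangle=0$ to compute $\|\im S_\lambda\|_p^p$ exactly, whereas you bypass this with the simpler triangle-inequality bound $\|\im S_\lambda\|_p\le\|\im S_\lambda\|_1\le\|u_\lambda\|\,\|v_\lambda\|$; both routes give the uniform bound and the limit via $\|u_\lambda\|\to0$.
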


\begin{proof}
(a) This is a direct consequence of the spectral theorem.

(b) Let $\lambda\in\R$, $|\lambda|>m$. Then one shows by using \eqref{truffes} that
$\im S_\lambda$ is equal to the rank two operator
$$
\im S_\lambda
=\langle v_\lambda,\;\!\cdot\;\!\rangle\,u_\lambda
+\langle u_\lambda,\;\!\cdot\;\!\rangle\,v_\lambda,
$$
with
$$
\textstyle
u_\lambda(x_3):=\langle x_3\rangle^{-\nu_3/2}\sin\big(x_3\sqrt{\lambda^2-m^2}\big)
\qquad\hbox{and}\qquad
v_\lambda(x_3):=\textstyle
-\frac i2\langle x_3\rangle^{-\nu_3/2}\cos\big(x_3\sqrt{\lambda^2-m^2}\big).
$$
Since $\langle v_\lambda,u_\lambda\rangle=0$, this implies that
$$
|\im S_\lambda|^p
=\|u_\lambda\|^p\langle v_\lambda,\;\!\cdot\;\!\rangle\,v_\lambda
+\|v_\lambda\|^p\langle u_\lambda,\;\!\cdot\;\!\rangle\,u_\lambda.
$$
Thus
$$
\|\im S_\lambda\|_p^p
=\tr\big(|\im S_\lambda|^p\big)
=\|u_\lambda\|^p\,\|v_\lambda\|^2+\|v_\lambda\|^p\,\|u_\lambda\|^2.
$$
This, together with the equality
$$
\lim_{\lambda\to\pm m,\,|\lambda|>m}\|u_\lambda\|=0,
$$
implies the claim.
\end{proof}

In the next corollary we combine some of the results of Propositions \ref{coffee},
\ref{cigarette} and Lemma \ref{rank2}.

\begin{Corollary}\label{inParis}
Let $V$ satisfy Assumption \ref{assumption1}. Then the identity
\begin{equation}\label{manjar}
T(\lambda+i0)=T_{\sf div}(\lambda)+T_{\sf bound}(\lambda)
\end{equation}
holds for each
$
\lambda\in(-\sqrt{m^2+\zeta},\sqrt{m^2+\zeta})\setminus\{\pm m\}
$,
and the estimate
\begin{equation}\label{queso}
\big\|\im T_{\sf bound}(\lambda)\big\|_p\le{\rm Const.}\;\!\|\im S_\lambda\|_p
\end{equation}
holds for each integer $p\ge1$ an each
$
\lambda\in(-\sqrt{m^2+\zeta},\sqrt{m^2+\zeta}).
$
In particular, we have
\begin{equation}\label{limit_im}
\lim_{\lambda\to\pm m}\big\|\im T_{\sf bound}(\lambda)\big\|_p=0,
\end{equation}
due to Lemma \ref{rank2}.
\end{Corollary}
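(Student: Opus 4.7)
The plan is to derive \eqref{manjar} by passing to the real axis in the decomposition $T(z)=T_{\sf div}(z)+T_{\sf bound}(z)$, which holds on $\mathbb C\setminus\sigma(H_0)$. For $\lambda\in(-\sqrt{m^2+\zeta},\sqrt{m^2+\zeta})\setminus\{\pm m\}$, Propositions \ref{coffee} and \ref{cigarette} give $T_{\sf div}(\lambda+i\varepsilon)\to T_{\sf div}(\lambda)$ in $S_1(\H)$ and $T_{\sf bound}(\lambda+i\varepsilon)\to T_{\sf bound}(\lambda)$ in $S_4(\H)$ as $\varepsilon\searrow0$; in particular both limits exist in operator norm. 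On the other hand, using the factorization $V^{1/2}=M\langle Q_\perp\rangle^{-\nu_\perp/2}\langle Q_3\rangle^{-\nu_3/2}$ with $M$ and $\langle Q_\perp\rangle^{-\nu_\perp/2}$ bounded, the limiting absorption principle \eqref{general_LAP} yields the existence of $T(\lambda+i0)=\lim_{\varepsilon\searrow0}T(\lambda+i\varepsilon)$ in operator norm. Equating the two limits gives \eqref{manjar}.

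For \eqref{queso}, the key observation is that in the decomposition $T_{\sf bound}(z)=T_1(z)+T_2+T_3(z)$ of \eqref{BeforeCommute}, both $T_2$ and $T_3(\lambda)$ are selfadjoint for real $\lambda\in(-\sqrt{m^2+\zeta},\sqrt{m^2+\zeta})$. Indeed, by \eqref{acommutator} the operator $H_0^{-1}$ commutes with $\P^\perp$, and since the range of $\P^\perp$ is contained in the spectral subspace of $H_0^2$ associated to $[m^2+\zeta,\infty)$, the operator $(H_0^2-\lambda^2)^{-1}\P^\perp$ is bounded and selfadjoint in this regime. Hence $\im T_{\sf bound}(\lambda)=\im T_1(\lambda)$. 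Now $T_1(\lambda)=M(G\otimes S_\lambda)KM$, where $M$, $G$ and the $4\times4$ matrix
$$K:=\(\begin{smallmatrix}0&0&1&0\\0&0&0&0\\1&0&0&0\\0&0&0&0\end{smallmatrix}\)$$
are selfadjoint, and $K$ commutes with $G\otimes S_\lambda$ since they act on different tensor factors. A direct computation then gives $\im T_1(\lambda)=M(G\otimes\im S_\lambda)KM$. Using the tensor-product identity $\|G\otimes\im S_\lambda\|_p=\|G\|_p\|\im S_\lambda\|_p$ and the uniform bound $\|G\|_p\le\|G\|_1\le{\rm Const.}$ supplied by Lemma \ref{lem_rai}, estimate \eqref{queso} follows.

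The limit \eqref{limit_im} is then immediate: combining \eqref{queso} with Lemma \ref{rank2}(a) (which gives $\im S_\lambda=0$ for $\lambda\in(-m,m)$) and Lemma \ref{rank2}(b) (which gives $\|\im S_\lambda\|_p\to0$ as $\lambda\to\pm m$ with $|\lambda|>m$), we obtain $\|\im T_{\sf bound}(\lambda)\|_p\to0$ as $\lambda\to\pm m$ from either side. The only delicate step in the whole argument is verifying the selfadjointness of $T_3(\lambda)$; once the commutation \eqref{acommutator} is used to reorder the functional-calculus factors and $|\lambda|<\sqrt{m^2+\zeta}$ is invoked to control the spectral support on $\mathrm{Ran}\,\P^\perp$, everything else reduces to algebraic manipulation of tensor products and elementary Schatten-norm inequalities.
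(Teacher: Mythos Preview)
Your proof is correct and follows essentially the same approach as the paper: both obtain \eqref{manjar} from the continuity statements of Propositions~\ref{coffee} and~\ref{cigarette}, and both derive \eqref{queso} by using the commutation rule \eqref{acommutator} to conclude that $\im T_{\sf bound}(\lambda)=M(G\otimes\im S_\lambda)KM$. Your argument is slightly more explicit in that you spell out why $T_2$ and $T_3(\lambda)$ are selfadjoint (the paper leaves this implicit in its appeal to \eqref{BeforeCommute} and \eqref{acommutator}), and your separate invocation of \eqref{general_LAP} for the existence of $T(\lambda+i0)$ is harmless but redundant, since that limit already follows from the convergence of the two summands.
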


\begin{proof}
The first identity follows from Propositions \ref{coffee} and \ref{cigarette}. Let
$\lambda\in(-\sqrt{m^2+\zeta},\sqrt{m^2+\zeta})$. Using \eqref{BeforeCommute} and the
commutation rule \eqref{acommutator} one obtains that
$$
\im T_{\sf bound}(\lambda)=M\big(G\otimes\im S_\lambda\big)
\(\begin{smallmatrix}
0 & 0 & 1 & 0\\
0 & 0 & 0 & 0\\
1 & 0 & 0 & 0\\
0 & 0 & 0 & 0
\end{smallmatrix}\)M,
$$
with $M$ and $G$ defined by \eqref{operatorM}-\eqref{operatorG}. Since
$M$ is bounded and $\|G\|_1\le{\rm Const.}$, this implies \eqref{queso}.
\end{proof}

%-----------------------------------------------------------------------------------------------
\section{Proof of the main results}
\setcounter{equation}{0}
%-----------------------------------------------------------------------------------------------

We begin this section by showing that the value of $\xi(\lambda;H,H_\pm)$ as $\lambda\to\pm m$
is bounded from below and from above by expressions involving only the term
$T_{\sf div}(\lambda)$ of the decomposition
$T(\lambda+i0)=T_{\sf div}(\lambda)+T_{\sf bound}(\lambda)$. Then we consider separately
the limits $\lambda\to\pm m$ with $|\lambda|<m$ and the limits $\lambda\to\pm m$ with
$|\lambda|>m$.

We start by recalling two standard properties of the counting functions $n_\pm$. Given two
compact operators $T_1=T_1^*$ and $T_2=T_2^*$ in a separable Hilbert space $\G$, we have the
Weyl inequalities
\begin{equation}\label{Weyl}
n_\pm(s_1+s_2;T_1+T_2)\le n_\pm(s_1;T_1)+n_\pm(s_2;T_2)\quad\hbox{for each }s_1,s_2>0.
\end{equation}
Moreover, if $T=T^*$ belongs to $S_p(\G)$ for some $p\in[1,\infty)$, then
\begin{equation}\label{pbound}
n_\pm(s;T)\le s^{-p}\|T\|^p_p\quad\hbox{for each }s>0.
\end{equation}

\begin{Proposition}\label{empanada}
Let $V$ satisfy Assumption \ref{assumption2}. Then the estimates
\begin{align*}
&\int_\R\d\mu(t)\,
n_\pm\big(1+\varepsilon;\re T_{\sf div}(\lambda)+t\im T_{\sf div}(\lambda)\big)+\O(1)\\
&\le\mp\xi(\lambda;H_\mp,H_0)\\
&\le\int_\R\d\mu(t)\,
n_\pm\big(1-\varepsilon;\re T_{\sf div}(\lambda)+t\im T_{\sf div}(\lambda)\big)+\O(1)
\end{align*}
hold as $\lambda\to\pm m$ for each $\varepsilon\in(0,1)$.
\end{Proposition}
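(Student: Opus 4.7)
The plan is to start from Theorem \ref{identify} and Corollary \ref{inParis}. Using $n_\pm(s;-T)=n_\mp(s;T)$, the sign conventions of Theorem \ref{identify} give
\begin{equation*}
\mp\xi(\lambda;H_\mp,H_0)=\int_\R\d\mu(t)\,n_\pm\bigl(1;A(\lambda+i0)+tB(\lambda+i0)\bigr),
\end{equation*}
while Corollary \ref{inParis} supplies $A(\lambda+i0)=\re T_{\sf div}(\lambda)+\re T_{\sf bound}(\lambda)$ and the analogous decomposition for $B(\lambda+i0)$. The question thus reduces to peeling the $T_{\sf bound}(\lambda)$ contribution off the counting function and showing that the error it produces stays bounded as $\lambda\to\pm m$.

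The peeling is carried out with the Weyl inequalities \eqref{Weyl}. For the upper bound, taking $s_1=1-\varepsilon$, $s_2=\varepsilon$ with $T_1$ the $T_{\sf div}$ part and $T_2$ the $T_{\sf bound}$ part yields, pointwise in $t$,
\begin{equation*}
n_\pm\bigl(1;A+tB\bigr)\le n_\pm\bigl(1-\varepsilon;\re T_{\sf div}+t\im T_{\sf div}\bigr)+n_\pm\bigl(\varepsilon;\re T_{\sf bound}+t\im T_{\sf bound}\bigr).
\end{equation*}
For the lower bound, applying \eqref{Weyl} to the identity $\re T_{\sf div}+t\im T_{\sf div}=(A+tB)-(\re T_{\sf bound}+t\im T_{\sf bound})$ with $s_1=1$, $s_2=\varepsilon$ and then invoking $n_\pm(\varepsilon;-T)=n_\mp(\varepsilon;T)$ produces
\begin{equation*}
n_\pm\bigl(1+\varepsilon;\re T_{\sf div}+t\im T_{\sf div}\bigr)-n_\mp\bigl(\varepsilon;\re T_{\sf bound}+t\im T_{\sf bound}\bigr)\le n_\pm\bigl(1;A+tB\bigr).
\end{equation*}
Integrating both displays against $\d\mu(t)$, it remains to show that the two quantities $\int_\R\d\mu(t)\,n_{\pm}\bigl(\varepsilon;\re T_{\sf bound}(\lambda)+t\im T_{\sf bound}(\lambda)\bigr)$ stay bounded as $\lambda\to\pm m$.

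For each such integral I would apply Lemma \ref{aday} with $s_1=s_2=\varepsilon/2$, $T_1=\re T_{\sf bound}(\lambda)$ and $T_2=\im T_{\sf bound}(\lambda)\in S_1(\H)$ (by Corollary \ref{inParis}), to obtain the bound $n_\pm\bigl(\varepsilon/2;\re T_{\sf bound}(\lambda)\bigr)+\frac{2}{\pi\varepsilon}\|\im T_{\sf bound}(\lambda)\|_1$. The second summand even tends to zero as $\lambda\to\pm m$ by \eqref{limit_im}, while the first is controlled via \eqref{pbound} with $p=4$ together with the uniform $S_4$-bound \eqref{ChuchoValdes} of Proposition \ref{cigarette} (note that $\zeta+m^2-\lambda^2\to\zeta>0$, so the right-hand side of \eqref{ChuchoValdes} stays bounded near $\lambda=\pm m$). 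The main delicacy in this proof is the sign bookkeeping between $\xi$, the signs in Theorem \ref{identify}, the orientation of the Weyl inequality, and the identity $n_\pm(s;-T)=n_\mp(s;T)$; the other noteworthy point is that $\re T_{\sf bound}$ is controlled only in $S_4(\H)$, which forces the use of \eqref{pbound} with $p=4$ rather than a trace-class version.
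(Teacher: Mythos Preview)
Your proof is correct and follows essentially the same route as the paper's: start from the representation formula \eqref{lhs}, use the decomposition \eqref{manjar} together with the Weyl inequalities \eqref{Weyl} pointwise in $t$ to peel off the $T_{\sf bound}(\lambda)$ contribution, then apply Lemma~\ref{aday} with $s_1=s_2=\varepsilon/2$ to the integral $\int_\R\d\mu(t)\,n_\pm(\varepsilon;\re T_{\sf bound}+t\im T_{\sf bound})$, and finally bound $n_\pm(\varepsilon/2;\re T_{\sf bound})$ via \eqref{pbound} with $p=4$ together with \eqref{ChuchoValdes}, and $\|\im T_{\sf bound}\|_1$ via \eqref{limit_im}. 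The paper's proof is identical apart from presenting the combined Weyl/Lemma~\ref{aday} estimate in a single display.
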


\begin{proof}
Using \eqref{manjar}, the Weyl inequalities \eqref{Weyl}, and Lemma \ref{aday}
we get
\begin{align}
&\int_\R\d\mu(t)\,
n_\pm\big(1+\varepsilon;\re T_{\sf div}(\lambda)+t\im T_{\sf div}(\lambda)\big)
-n_\mp\big(\varepsilon/2;\re T_{\sf bound}(\lambda)\big)
-\frac2{\pi\varepsilon}\big\|\im T_{\sf bound}(\lambda)\big\|_1\nonumber\\
&\le\int_\R\d\mu(t)\,n_\pm\big(1;A(\lambda+i0)+tB(\lambda+i0)\big)\nonumber\\
&\le\int_\R\d\mu(t)\,
n_\pm\big(1-\varepsilon;\re T_{\sf div}(\lambda)+t\im T_{\sf div}(\lambda)\big)
+n_\pm\big(\varepsilon/2;\re T_{\sf bound}(\lambda)\big)
+\frac2{\pi\varepsilon}\big\|\im T_{\sf bound}(\lambda)\big\|_1.\label{gniarf}
\end{align}
Due to \eqref{pbound}, we have
$$
n_{\pm}\big(\varepsilon/2;\re T_{\sf bound}(\lambda)\big)
\le16\!\;\varepsilon^{-4}\|T_{\sf bound}(\lambda)\|_4^4,
$$
which combined with \eqref{ChuchoValdes} gives
$$
n_{\pm}\big(\varepsilon/2;\re T_{\sf bound}(\lambda)\big)=\O(1)
\quad{\rm as}\quad\lambda\to\pm m.
$$
Moreover, we know from \eqref{limit_im} that
$$
\lim_{\lambda\to\pm m}\big\|\im T_{\sf bound}(\lambda)\big\|_1=0.
$$
So the claim follows from the estimates \eqref{gniarf} and Formula \eqref{lhs}
\end{proof}

%-----------------------------------------------------------------------------------------------
\subsection{The case $\boldsymbol{|\lambda|<m}$}\label{inside}
%-----------------------------------------------------------------------------------------------

In this section we prove asymptotic estimates for $\xi(\lambda;H,H_\pm)$ as $\lambda\to\pm m$
with $|\lambda|<m$. We start with a corollary of Proposition \ref{empanada}, which follows
from the fact that $\im T_{\sf div}(\lambda)=0$ and
$\re T_{\sf div}(\lambda)=T_{\sf div}(\lambda)$ for $\lambda\in(-m,m)$.

\begin{Corollary}\label{jet-lag}
Let $V$ satisfy Assumption \ref{assumption2}. Then the estimates
$$
n_\pm\big(1+\varepsilon;T_{\sf div}(\lambda)\big)+\O(1)
\le\mp\xi(\lambda;H_\mp,H_0)
\le n_\pm\big(1-\varepsilon;T_{\sf div}(\lambda)\big)+\O(1)
$$
hold as $\lambda\to\pm m$, $|\lambda|<m$, for each $\varepsilon\in(0,1)$.
\end{Corollary}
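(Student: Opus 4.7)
The plan is to deduce this corollary directly from Proposition \ref{empanada}, exploiting the simple structure of $T_{\sf div}(\lambda)$ when $\lambda$ lies in the spectral gap $(-m,m)$ of $H_0$.

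First, I would verify the two selfadjointness claims stated just before the corollary. For $\lambda\in(-m,m)$ one has $\lambda^2-m^2<0$, so the boundary-value resolvent $R(\lambda^2-m^2)$ has the real convolution kernel $r_{\lambda^2-m^2}(x_3)=\e^{-\sqrt{m^2-\lambda^2}|x_3|}/(2\sqrt{m^2-\lambda^2})$ and is therefore selfadjoint on $\ltwo(\R)$. Since $T_{\sf div}(\lambda)$ is the sandwich of $p\otimes R(\lambda^2-m^2)$ between the bounded selfadjoint operator $V^{1/2}$ and a real diagonal matrix, it is selfadjoint as well. Consequently $\im T_{\sf div}(\lambda)=0$ and $\re T_{\sf div}(\lambda)=T_{\sf div}(\lambda)$ throughout $(-m,m)$.

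Next, I would insert these two identities into the bounds supplied by Proposition \ref{empanada}. The integrand
$$
n_\pm\bigl(1\pm\varepsilon;\re T_{\sf div}(\lambda)+t\im T_{\sf div}(\lambda)\bigr)
=n_\pm\bigl(1\pm\varepsilon;T_{\sf div}(\lambda)\bigr)
$$
becomes independent of $t$, so the integral against $\d\mu$ collapses to the integrand itself, using $\mu(\R)=1$. This yields precisely the two-sided estimate claimed in the corollary as $\lambda\to\pm m$ with $|\lambda|<m$.

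There is essentially no obstacle: the content of the corollary is entirely contained in Proposition \ref{empanada} combined with the observation that $R(\lambda^2-m^2)$ is selfadjoint on the spectral gap. The only point one must be mildly careful about is ensuring that the sign conventions on $\pm\varepsilon$ and $\re/\im$ line up correctly when specialising Proposition \ref{empanada}, but this is immediate once the vanishing of $\im T_{\sf div}(\lambda)$ is in hand.
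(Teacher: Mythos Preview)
Your proposal is correct and matches the paper's own argument exactly: the paper states that the corollary follows from Proposition \ref{empanada} together with the observation that $\im T_{\sf div}(\lambda)=0$ and $\re T_{\sf div}(\lambda)=T_{\sf div}(\lambda)$ for $\lambda\in(-m,m)$, which is precisely what you verify and then substitute into the $\mu$-integral.
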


Define the bounded operators $K_\pm:\H\to\ltwo(\R^2;\C^4)$ by
\begin{align*}
(K_+\varphi)(x_\perp)&:=\int_{\R^3}\d x_\perp'\d x_3'\,p(x_\perp,x_\perp')
\(\begin{smallmatrix}
1 & 0 & 0 & 0\\
0 & 0 & 0 & 0\\
0 & 0 & 0 & 0\\
0 & 0 & 0 & 0
\end{smallmatrix}\)
V^{1/2}(x_\perp',x_3')\varphi(x_\perp',x_3'),\\
(K_-\varphi)(x_\perp)&:=\int_{\R^3}\d x_\perp'\d x_3'\,p(x_\perp,x_\perp')
\(\begin{smallmatrix}
0 & 0 & 0 & 0\\
0 & 0 & 0 & 0\\
0 & 0 & 1 & 0\\
0 & 0 & 0 & 0
\end{smallmatrix}\)
V^{1/2}(x_\perp',x_3')\varphi(x_\perp',x_3'),
\end{align*}
where $p(\;\!\cdot\;\!,\;\!\cdot\;\!)$ is the integral kernel of the projection $p$.
One shows easily that $K_\pm^*:\ltwo(\R^2;\C^4)\to\H$ are given by
\begin{align*}
(K_+^*\psi)(x_\perp,x_3)&=V^{1/2}(x_\perp,x_3)
\(\begin{smallmatrix}
1 & 0 & 0 & 0\\
0 & 0 & 0 & 0\\
0 & 0 & 0 & 0\\
0 & 0 & 0 & 0
\end{smallmatrix}\)
(p\psi)(x_\perp),\\
(K_-^*\psi)(x_\perp,x_3)&=V^{1/2}(x_\perp,x_3)
\(\begin{smallmatrix}
0 & 0 & 0 & 0\\
0 & 0 & 0 & 0\\
0 & 0 & 1 & 0\\
0 & 0 & 0 & 0
\end{smallmatrix}\)
(p\psi)(x_\perp),
\end{align*}
and that
$$
O_+(\lambda):=\textstyle\frac12\big(\frac{m+\lambda}{m-\lambda}\big)^{1/2}K_+^*K_+
\qquad{\rm and}\qquad
O_-(\lambda):=\textstyle-\frac12\big(\frac{m-\lambda}{m+\lambda}\big)^{1/2}K_-^*K_-
$$
belong to $S_2(\H)$ for each $\lambda\in(-m,m)$.

In the next proposition we show that the functions
$n_\pm\big(\;\!\cdot\;\!;T_{\sf div}(\lambda)\big)$ as $\lambda\to\pm m$, $|\lambda|<m$,
can be bounded, up to $\O(1)$ terms, from below and from above by expressions involving
$O_\pm(\lambda)$.

\begin{Proposition}\label{PoppaChubby}
Let $V$ satisfy Assumption \ref{assumption2}. Then the estimates
\begin{align}
n_+\big((1+\varepsilon)s;O_+(\lambda)\big)+\O(1)
&\le n_+\big(s;T_{\sf div}(\lambda)\big)
\le n_+\big((1-\varepsilon)s;O_+(\lambda)\big)+\O(1),\label{positive1}\\
\O(1)&\le n_-\big(s;T_{\sf div}(\lambda)\big)\le \O(1),\label{positive2}
\end{align}
hold as $\lambda\nearrow m$, for each $\varepsilon\in(0,1)$ and $s>0$,
and the estimates
\begin{align}
\O(1)&\le n_+\big(s;T_{\sf div}(\lambda)\big)\le \O(1),\label{positive3}\\
n_-\big((1+\varepsilon)s;O_-(\lambda)\big)+\O(1)
&\le n_-\big(s;T_{\sf div}(\lambda)\big)
\le n_-\big((1-\varepsilon)s;O_-(\lambda)\big)+\O(1),\label{positive4}
\end{align}
hold as $\lambda\searrow-m$, for each $\varepsilon\in(0,1)$ and $s>0$.
\end{Proposition}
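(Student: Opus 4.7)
The overall strategy is to split $T_{\sf div}(\lambda) = O_+(\lambda) + O_-(\lambda) + R(\lambda)$, where $R(\lambda)$ is a remainder with uniformly bounded Schatten norm as $\lambda\to\pm m$ inside the gap, and then to derive the four inequalities by combining the Weyl bounds \eqref{Weyl} with \eqref{pbound}.

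For $\lambda\in(-m,m)$, setting $\mu:=\sqrt{m^2-\lambda^2}$, the kernel of $R(\lambda^2-m^2)$ on $\ltwo(\R)$ is $(2\mu)^{-1}\e^{-\mu|x_3-x_3'|}$. I would decompose it as the sum of the constant rank-one kernel $(2\mu)^{-1}$ and a correction $(2\mu)^{-1}(\e^{-\mu|x_3-x_3'|}-1)$. The correction is pointwise bounded in absolute value by $\tfrac12|x_3-x_3'|$ uniformly in $\mu\ge 0$, so once sandwiched with the weights $\langle Q_3\rangle^{-\nu_3/2}$ it yields an operator on $\ltwo(\R)$ whose Schatten norm is uniformly bounded in $\lambda$. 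Combined via tensor product with Lemma \ref{lem_rai} applied to $\langle Q_\perp\rangle^{-\nu_\perp/2}\,p\,\langle Q_\perp\rangle^{-\nu_\perp/2}$ (which is trace class since $\nu_\perp>2$), then sandwiched with the bounded factor $M$ from \eqref{operatorM} and multiplied by the scalars $\lambda\pm m$, this produces the remainder $R(\lambda)$. The leading rank-one contribution $(2\mu)^{-1}$, combined with the tensor factor $p$ and the matrix projections $E_1$, $E_3$ extracting the first and third $\C^4$-components of $\text{diag}(\lambda+m,0,\lambda-m,0)$, reconstructs $(2\mu)^{-1}K_\pm^*K_\pm$ directly from the defining formulas of $K_\pm$ and $K_\pm^*$ (using $p^2=p$). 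The algebraic identities $(\lambda+m)/(2\mu)=\tfrac12\sqrt{(m+\lambda)/(m-\lambda)}$ and $(\lambda-m)/(2\mu)=-\tfrac12\sqrt{(m-\lambda)/(m+\lambda)}$ match these two singular contributions exactly to $O_+(\lambda)$ and $O_-(\lambda)$.

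With the decomposition $T_{\sf div}(\lambda)=O_+(\lambda)+O_-(\lambda)+R(\lambda)$ in hand, the four estimates follow from the Weyl inequalities. For the upper bound in \eqref{positive1}, the splitting $T_{\sf div}(\lambda)=O_+(\lambda)+[O_-(\lambda)+R(\lambda)]$ and \eqref{Weyl} give $n_+(s;T_{\sf div}(\lambda))\le n_+((1-\varepsilon)s;O_+(\lambda))+n_+(\varepsilon s;O_-(\lambda)+R(\lambda))$; since $O_-(\lambda)\le 0$ and $\|R(\lambda)\|_p$ is uniformly bounded, \eqref{pbound} forces the second term to be $\O(1)$. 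The lower bound in \eqref{positive1} follows from $O_+(\lambda)=T_{\sf div}(\lambda)-O_-(\lambda)-R(\lambda)$ combined with Weyl in the other direction. Estimate \eqref{positive2} holds because $O_+(\lambda)\ge 0$ gives $n_-(\;\!\cdot\;\!;O_+(\lambda))=0$, so $n_-(s;T_{\sf div}(\lambda))\le n_-(s/2;O_-(\lambda)+R(\lambda))=\O(1)$ using $\|O_-(\lambda)\|\to 0$ as $\lambda\to m$ and the uniform Schatten bound on $R(\lambda)$. The case $\lambda\searrow -m$ yielding \eqref{positive3}--\eqref{positive4} is symmetric, exchanging the roles of $O_+$ and $O_-$.

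The main technical obstacle is verifying the uniform Schatten-class bound on $R(\lambda)$. In particular, the subleading piece $(\lambda-m)B_-(\lambda)$ near $\lambda\nearrow m$ couples the vanishing prefactor $\lambda-m$ to a correction $B_-(\lambda)$ whose norm a priori depends on $\mu$; the key input is the $\mu$-uniform bound $|\e^{-\mu|x|}-1|/(2\mu)\le|x|/2$, which, together with Lemma \ref{lem_rai} for the magnetic projection factor, keeps all dependencies tight enough to conclude. Once this bound is secured, the Weyl inequality arguments sketched above go through routinely.
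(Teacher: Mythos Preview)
Your decomposition $T_{\sf div}(\lambda)=O_+(\lambda)+O_-(\lambda)+R(\lambda)$ and the Weyl-inequality strategy are exactly what the paper does. There is, however, a real technical gap in how you control the remainder $R(\lambda)$.

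You factor through the bounded $M$ of \eqref{operatorM} and the trace-class $G$ of \eqref{operatorG}, so that the $x_3$-correction is sandwiched by $\langle Q_3\rangle^{-\nu_3/2}$ with the exponent $\nu_3>1$ of Assumption~\ref{assumption1}. But the kernel bound $\tfrac12\langle x_3\rangle^{-\nu_3/2}|x_3-x_3'|\langle x_3'\rangle^{-\nu_3/2}$ gives a Hilbert--Schmidt operator on $\ltwo(\R)$ only when $\nu_3>3$; together with the requirement $\nu_\perp>2$ (which you invoke for $G\in S_1$), this forces $\nu_\perp+\nu_3>5$, whereas Assumption~\ref{assumption2} only gives $\nu>3$. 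When $\nu\in(3,5]$ no admissible split $(\nu_\perp,\nu_3)$ satisfies both conditions, the limiting operator $J^{(m)}_{\nu_3}$ is not Hilbert--Schmidt, and your claimed ``uniform Schatten bound'' on $R(\lambda)$ is not established; the appeal to \eqref{pbound} therefore fails in this range.

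The paper handles this by redistributing the weights: one picks $\nu'\in(3,\nu)$, puts $\langle Q_3\rangle^{-\nu'/2}$ on the $x_3$ factor (so that $J^{(\lambda)}_{\nu'}$ and its limit $J^{(m)}_{\nu'}$ are Hilbert--Schmidt) and only $\langle Q_\perp\rangle^{-(\nu-\nu')/2}$ on the $x_\perp$ factor. The price is that $G_{\nu-\nu'}$ is merely compact, so no uniform Schatten estimate is available. Instead of \eqref{pbound}, the paper shows that $T_{\sf div}(\lambda)-O_+(\lambda)-T_+\to 0$ in \emph{operator norm} as $\lambda\nearrow m$ (with $T_+$ a fixed compact operator and $O_-(\lambda)\to 0$), and then bounds $n_\pm\big(t;T_{\sf div}(\lambda)-O_+(\lambda)\big)\le n_\pm(t/2;T_+)$, a finite $\lambda$-independent number. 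Your argument can be repaired either along these lines, or by keeping the weight redistribution and noting that both tensor factors lie in $S_r$ for any $r>\max\{2,\,2/(\nu-\nu')\}$, which does yield a uniform $S_r$ bound on $R(\lambda)$; but as written it does not cover the full range $\nu>3$.
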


\begin{proof}
We only give the proof of \eqref{positive1}-\eqref{positive2}, since the proof of
\eqref{positive3}-\eqref{positive4} is similar. In point (i) below we show that the
difference $T_{\sf div}(\lambda)-O_+(\lambda)$ can be approximated in norm, as
$\lambda\nearrow m$, by a compact operator independent of $\lambda$. Then
we prove \eqref{positive1}-\eqref{positive2} in point (ii) by using this result.

(i) Let $\lambda\in(-m,m)$ and take $\nu'\in(3,\nu)$. A direct calculation shows that
\begin{equation}\label{TminusO}
T_{\sf div}(\lambda)-O_+(\lambda)
=\widetilde M\big(G_{\nu-\nu'}\otimes J^{(\lambda)}_{\nu'}\big)
\(\begin{smallmatrix}
(\lambda+m) & 0 & 0 & 0\\
0 & 0 & 0 & 0\\
0 & 0 & (\lambda-m) & 0\\
0 & 0 & 0 & 0
\end{smallmatrix}\)
\widetilde M+O_-(\lambda),
\end{equation}
where $J^{(\lambda)}_{\nu'}:\ltwo(\R)\to\ltwo(\R)$ is given by
$$
\big(J^{(\lambda)}_{\nu'}\psi\big)(x_3):=-\langle x_3\rangle^{-\nu'/2}\int_\R\d x_3'
\frac{\e^{-\frac12\sqrt{m^2-\lambda^2}|x_3-x_3'|}}{\sqrt{m^2-\lambda^2}}\,
\sinh\Big(\frac{\sqrt{m^2-\lambda^2}|x_3-x_3'|}2\Big)\langle x_3'\rangle^{-\nu'/2}\psi(x_3'),
$$
and
\begin{align}
\widetilde M&:=V^{1/2}\langle Q_\perp\rangle^{(\nu-\nu')/2}
\langle Q_3\rangle^{\nu'/2},\label{Mtilde}\\
G_{\nu-\nu'}&:=\langle Q_\perp\rangle^{-(\nu-\nu')/2}p
\langle Q_\perp\rangle^{-(\nu-\nu')/2}.\label{G}
\end{align}
The operator $\widetilde M$ is bounded due to Assumption \ref{assumption2}, $G_{\nu-\nu'}$
is compact in $\ltwo(\R^2;\C^4)$ due to Lemma \ref{lem_rai}, and
$O_-(\lambda)$ satisfies
\begin{equation}\label{omoins}
\lim_{\lambda\to m,\,|\lambda|<m}
\big\|O_-(\lambda)\big\|_2=0.
\end{equation}
Define
\begin{equation}\label{Tplusminus}
T_\pm:=\widetilde M\big(G_{\nu-\nu'}\otimes J^{(m)}_{\nu'}\big)
\(\begin{smallmatrix}
(m\pm m) & 0 & 0 & 0\\
0 & 0 & 0 & 0\\
0 & 0 & -(m\mp m) & 0\\
0 & 0 & 0 & 0
\end{smallmatrix}\)
\widetilde M,
\end{equation}
with $J^{(m)}_{\nu'}:\ltwo(\R)\to\ltwo(\R)$ given by
$$
\big(J^{(m)}_{\nu'}\psi\big)(x_3):=-\12\langle x_3\rangle^{-\nu'/2}\int_\R\d x_3'\,
|x_3-x_3'|\langle x_3'\rangle^{-\nu'/2}\psi(x_3').
$$
Since $\nu'>3$, $J^{(m)}_{\nu'}$ belongs to $S_2[\ltwo(\R)]$, and $T_\pm$ is compact
in $\H$. Moreover, by using Lebesgue's dominated convergence theorem, one shows that
$$
\lim_{\lambda\to\pm m,\,|\lambda|<m}
\big\|J^{(m)}_{\nu'}-J^{(\lambda)}_{\nu'}\big\|^2_2=0.
$$
This, together with \eqref{TminusO}, \eqref{omoins} and \eqref{Tplusminus}, implies
that
\begin{equation}\label{CaptainKirk}
\lim_{\lambda\nearrow m}\big\|T_{\sf div}(\lambda)-O_+(\lambda)-T_+\big\|=0.
\end{equation}

(ii) Take $\lambda\in(-m,m)$, $\varepsilon\in(0,1)$, and $s>0$. Using the Weyl
inequalities \eqref{Weyl} we get
\begin{align*}
n_\pm\big((1+\varepsilon)s;O_+(\lambda)\big)
-n_\mp\big(\varepsilon s;T_{\sf div}(\lambda)-O_+(\lambda)\big)
&\le n_\pm\big(s;T_{\sf div}(\lambda)\big)\\
&\le n_\pm\big((1-\varepsilon)s;O_+(\lambda)\big)
+n_\pm\big(\varepsilon s;T_{\sf div}(\lambda)-O_+(\lambda)\big).
\end{align*}
Now we have $n_-\big(t;O_+(\lambda)\big)=0$ for each $t>0$ and $\lambda\in(-m,m)$,
since $O_+(\lambda)$ is a positive operator. So, to prove
\eqref{positive1}-\eqref{positive2}, it is sufficient to show that
$
n_\pm\big(\varepsilon s;T_{\sf div}(\lambda)-O_+(\lambda)\big)=\O(1)
$
as $\lambda\nearrow m$, for each $\varepsilon\in(0,1)$ and $s>0$. Let
$t>0$ be fixed. Then we know from \eqref{CaptainKirk} that we can chose
$\lambda_+\in(-m,m)$, close enough to $m$, so that
$\big\|T_{\sf div}(\lambda_+)-O_+(\lambda_+)-T_+\big\|<t/2$. Thus, using again the
Weyl inequalities, we get
$$
n_\pm\big(t;T_{\sf div}(\lambda_+)-O_+(\lambda_+)\big)
\le n_\pm\big(t/2;T_{\sf div}(\lambda_+)-O_+(\lambda_+)-T_+\big)+n_\pm\big(t/2;T_+\big)
=n_\pm\big(t/2;T_+\big).
$$
Since the r.h.s. is independent of $\lambda_+$ we have shown that
$n_\pm\big(t;T_{\sf div}(\lambda)-O_+(\lambda)\big)=\O(1)$ as $\lambda\nearrow m$.
This concludes the proof of \eqref{positive1}-\eqref{positive2}.
\end{proof}

We show now that the counting functions $n_\pm\big(\;\!\cdot\;\!;O_\pm(\lambda)\big)$
in Proposition \ref{PoppaChubby} can be rewritten in terms of Berezin-Toeplitz type operators.
Define for each $\lambda\in(-m,m)$
$$
\omega_+(\lambda)
:=\textstyle\frac12\big(\frac{m+\lambda}{m-\lambda}\big)^{1/2}pW_+p
\qquad{\rm and}\qquad
\omega_-(\lambda)
:=\textstyle-\frac12\big(\frac{m-\lambda}{m+\lambda}\big)^{1/2}pW_-p,
$$
where the functions $W_\pm:\R^2\to\R$ are given by
\begin{equation}\label{BigOmegas}
W_+(x_\perp):=\int_\R\d x_3\,V_{11}(x_\perp,x_3)
\qquad{\rm and}\qquad
W_-(x_\perp):=\int_\R\d x_3\,V_{33}(x_\perp,x_3).
\end{equation}
Under the condition \eqref{a_decay} one has
$$
0\le W_\pm(x_\perp)\le{\rm Const.}\;\!\langle x_\perp\rangle^{-\nu+1}
\quad\hbox{for all }x_\perp\in\R^2,
$$
and $\omega_\pm(\lambda)\in S_1[\ltwo(\R^2)]$ if $V$ satisfies Assumption
\ref{first_decay} (see Lemma \ref{lem_rai}). Moreover, one has the
following.

\begin{Proposition}\label{spec_AB}
Let $V$ satisfy Assumption \ref{assumption1}. Then we have for each $\lambda\in(-m,m)$
and $s>0$
\begin{equation}\label{O=omega}
n_\pm\big(s;O_\pm(\lambda)\big)=n_\pm\big(s;\omega_\pm(\lambda)\big).
\end{equation}
\end{Proposition}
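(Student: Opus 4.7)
The plan is to reduce both sides of \eqref{O=omega} to a scalar Toeplitz operator on $\ltwo(\R^2)$ by using the classical identity $n_\pm\big(s;A^*A\big)=n_\pm\big(s;AA^*\big)$, valid for any compact operator $A$ and any $s>0$ (the nonzero singular values of $A^*A$ and $AA^*$ coincide with multiplicities, and both operators are positive so the minus-counting functions vanish).

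First I would compute $K_+K_+^*$ and $K_-K_-^*$ as operators on $\ltwo(\R^2;\C^4)$. From the definitions of $K_\pm$ and $K_\pm^*$, a direct application of Fubini and the identity $E_{jj}\;\!V(x_\perp',x_3')\;\!E_{jj}=V_{jj}(x_\perp',x_3')\;\!E_{jj}$ (with $j=1$ for $+$ and $j=3$ for $-$) gives
$$
K_+K_+^*=p\;\!W_+E_{11}\;\!p
\qquad\hbox{and}\qquad
K_-K_-^*=p\;\!W_-E_{33}\;\!p,
$$
where $E_{jj}$ is the matrix with a single $1$ in position $(j,j)$ and zeros elsewhere, $p$ acts componentwise, and the functions $W_\pm$ are those defined in \eqref{BigOmegas}. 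Each of these operators leaves the subspace $E_{jj}\ltwo(\R^2;\C^4)\cong\ltwo(\R^2)$ invariant and vanishes on its orthogonal complement, so their nonzero spectrum is the same as that of $pW_\pm p$ on $\ltwo(\R^2)$.

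Next, in the $+$ case, since $O_+(\lambda)=\frac12\big(\frac{m+\lambda}{m-\lambda}\big)^{1/2}K_+^*K_+\ge0$, we obtain
$$
n_+\big(s;O_+(\lambda)\big)
=n_+\!\left(s;\textstyle\frac12\big(\frac{m+\lambda}{m-\lambda}\big)^{1/2}K_+K_+^*\right)
=n_+\!\left(s;\textstyle\frac12\big(\frac{m+\lambda}{m-\lambda}\big)^{1/2}pW_+p\right)
=n_+\big(s;\omega_+(\lambda)\big).
$$
In the $-$ case, $O_-(\lambda)\le0$, so $n_-\big(s;O_-(\lambda)\big)=n_+\big(s;-O_-(\lambda)\big)$; applying the same sequence of identities with $K_-$ in place of $K_+$ and using that $\omega_-(\lambda)\le0$ yields $n_-\big(s;O_-(\lambda)\big)=n_-\big(s;\omega_-(\lambda)\big)$.

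There is no real obstacle here beyond bookkeeping: the argument is a straightforward combination of the singular-value identity and the explicit matrix calculation, the latter being essentially forced by the rank-one matrix structure of $K_\pm$. The only point requiring a moment's care is the correct pairing of signs on both sides, which is handled automatically because $K_\pm^*K_\pm$ is positive while $O_-$ and $\omega_-$ carry a common negative prefactor.
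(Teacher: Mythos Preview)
Your proof is correct and follows essentially the same route as the paper: both arguments invoke the identity $n_\pm(s;B^*B)=n_\pm(s;BB^*)$, compute $K_\pm K_\pm^*$ explicitly as $E_{jj}\,pW_\pm p$ (with $j=1,3$), and then read off the counting-function equality from the scalar Toeplitz operator $pW_\pm p$. Your treatment of the $-$ case via $n_-(s;O_-)=n_+(s;-O_-)$ is slightly more explicit than the paper's ``similar'' remark, but there is no substantive difference.
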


\begin{proof}
Given $s>0$ and two separable Hilbert spaces $\H_1,\H_2$, one has
\begin{equation}\label{BB*=B*B}
n_\pm\big(s;B^*B\big)=n_\pm\big(s;BB^*\big)
\end{equation}
for any $B\in\B(\H_1,\H_2)$ such that $B^*B\in S_\infty(\H_1)$. Moreover, one can
easily check that
$$
K_+K_+^*=
\(\begin{smallmatrix}
1 & 0 & 0 & 0\\
0 & 0 & 0 & 0\\
0 & 0 & 0 & 0\\
0 & 0 & 0 & 0
\end{smallmatrix}\)
pW_+p
\qquad{\rm and}\qquad
K_-K_-^*=
\(\begin{smallmatrix}
0 & 0 & 0 & 0\\
0 & 0 & 0 & 0\\
0 & 0 & 1 & 0\\
0 & 0 & 0 & 0
\end{smallmatrix}\)
pW_-p.
$$
Thus
$$
\textstyle
n_+\big(s;O_+(\lambda)\big)
=n_+\bigg(s;\frac12\big(\frac{m+\lambda}{m-\lambda}\big)^{1/2}
\(\begin{smallmatrix}
1 & 0 & 0 & 0\\
0 & 0 & 0 & 0\\
0 & 0 & 0 & 0\\
0 & 0 & 0 & 0
\end{smallmatrix}\)
pW_+p\bigg)
=n_+\Big(s;\frac12\big(\frac{m+\lambda}{m-\lambda}\big)^{1/2}pW_+p\Big)
=n_+\big(s;\omega_+(\lambda)\big).
$$
The proof of the second equality in \eqref{O=omega} is similar.
\end{proof}

The next theorem is direct consequence of Corollary \ref{jet-lag} and Propositions
\ref{PoppaChubby}-\ref{spec_AB}.

\begin{Theorem}\label{thm<}
Let $V$ satisfy Assumption \ref{assumption2}. Then one has for each $\varepsilon\in(0,1)$
\begin{align}
\O(1)&\le\xi(\lambda;H_+,H_0)\le \O(1)\label{ineq1}\\
-n_+\big(1-\varepsilon;\omega_+(\lambda)\big)+\O(1)
&\le\xi(\lambda;H_-,H_0)
\le-n_+\big(1+\varepsilon;\omega_+(\lambda)\big)+\O(1)\label{ineq2}
\end{align}
as $\lambda\nearrow m$, and
\begin{align}
n_-\big(1+\varepsilon;\omega_-(\lambda)\big)+\O(1)
&\le\xi(\lambda;H_+,H_0)
\le n_-\big(1-\varepsilon;\omega_-(\lambda)\big)+\O(1)\label{ineq3}\\
\O(1)&\le\xi(\lambda;H_-,H_0)\le \O(1)\label{ineq4}
\end{align}
as $\lambda\searrow-m$.
\end{Theorem}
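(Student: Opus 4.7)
The theorem is announced as a direct consequence of Corollary \ref{jet-lag}, Proposition \ref{PoppaChubby} and Proposition \ref{spec_AB}, so the plan is to simply chain these three results together, keeping careful track of the signs in $\mp\xi$ and of the reparametrization of $\varepsilon$. Nothing substantial remains to be done analytically.

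Concretely, I would start by splitting Corollary \ref{jet-lag} along its two signs. The upper sign gives, as $\lambda\nearrow m$,
\[
n_+(1+\varepsilon;T_{\sf div}(\lambda))+\O(1)\le-\xi(\lambda;H_-,H_0)\le n_+(1-\varepsilon;T_{\sf div}(\lambda))+\O(1),
\]
while the lower sign gives the same sandwich for $+\xi(\lambda;H_+,H_0)$ in terms of $n_-(1\mp\varepsilon;T_{\sf div}(\lambda))$. Then I would insert the sandwich bounds of Proposition \ref{PoppaChubby}. For $\lambda\nearrow m$, the estimate $n_-(s;T_{\sf div}(\lambda))=\O(1)$ collapses the bounds on $\xi(\lambda;H_+,H_0)$ to $\O(1)$, yielding \eqref{ineq1}. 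For $-\xi(\lambda;H_-,H_0)$, I would bound $n_+(1\pm\varepsilon;T_{\sf div}(\lambda))$ between $n_+((1\pm\varepsilon)(1\pm\varepsilon');O_+(\lambda))+\O(1)$ using Proposition \ref{PoppaChubby} with an auxiliary parameter $\varepsilon'\in(0,1)$, and finally apply Proposition \ref{spec_AB} to replace $O_+(\lambda)$ by $\omega_+(\lambda)$. Flipping the overall sign produces \eqref{ineq2}.

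The case $\lambda\searrow-m$ is handled in exactly the same way, swapping the roles of $n_+\leftrightarrow n_-$ and of $\omega_+\leftrightarrow\omega_-$: the bound $n_+(s;T_{\sf div}(\lambda))=\O(1)$ from \eqref{positive3} gives \eqref{ineq4}, while \eqref{positive4} combined with Proposition \ref{spec_AB} and Corollary \ref{jet-lag} gives \eqref{ineq3}.

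The only point requiring care is that the sandwich in Corollary \ref{jet-lag} and the sandwich in Proposition \ref{PoppaChubby} each carry their own small parameter, and composing them produces factors of the form $(1\pm\varepsilon)(1\pm\varepsilon')$ inside $n_\pm(\,\cdot\,;\omega_\pm(\lambda))$. Since both parameters are arbitrary in $(0,1)$, one can arrange $(1-\varepsilon)(1-\varepsilon')\ge 1-\varepsilon''$ and $(1+\varepsilon)(1+\varepsilon')\le 1+\varepsilon''$ for any preassigned $\varepsilon''\in(0,1)$ by first fixing $\varepsilon''$ and then choosing $\varepsilon,\varepsilon'$ suitably small. This is really the only bookkeeping step, and it is the closest thing to an "obstacle" in an otherwise mechanical proof.
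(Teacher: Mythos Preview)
Your proposal is correct and follows exactly the approach the paper takes: the theorem is stated there as a direct consequence of Corollary~\ref{jet-lag} and Propositions~\ref{PoppaChubby}--\ref{spec_AB}, with no further argument given. Your careful bookkeeping of the double parameter $(1\pm\varepsilon)(1\pm\varepsilon')$ and the observation that one may relabel it as $1\pm\varepsilon''$ is the only detail the paper leaves implicit.
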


\begin{Remark}
The inequalities \eqref{ineq1} together with Remark \eqref{int_eigen} imply that the
eigenvalues of $H_0+V$ in $(-m,m)$ near $+m$ (if any) do not accumulate at $+m$. On
the other hand the inequalities \eqref{ineq2} tell us that the number of eigenvalues
of $H_0-V$ in $(-m,m)$ near $\lambda=+m$ scales, up to $\O(1)$ terms, as
$$
\textstyle
n_+\big(s;\omega_+(\lambda)\big)\equiv\rank E^{pW_+p}
\Big(\Big(s\big(\frac{m-\lambda}{m+\lambda}\big)^{1/2},\infty\Big)\Big)
$$
with $s\approx2$. Accordingly, the problem of counting the number of eigenvalues of
$H_0-V$ in $(-m,m)$ near $+m$ reduces to the problem of counting the number of eigenvalues
of the positive Berezin-Toeplitz type operator $pW_+p$ near $0$. The inequalities
\eqref{ineq3}-\eqref{ineq4} lead to similar conclusions on the number
of eigenvalues of $H_0\pm V$ in $(-m,m)$ near $-m$.

One can compare these results with the results of \cite{Coj06} and \cite{IM99} on the
finiteness in $(-m,m)$ of the discrete spectrum of the Dirac operator perturbed by a
matrix potential $Q\equiv\{Q_{jk}(x)\}_{j,k=1}^4$. In Corollary 2.2 of \cite{Coj06}, the
author shows that the spectrum in $(-m,m)$ of the Dirac operator perturbed by $Q$ is
finite if the $2\times2$ diagonal blocks of $Q$ are of order $\O\big(|x|^{-2-\delta}\big)$
and the anti-diagonal blocks are of order $\O\big(|x|^{-1-\delta}\big)$, for some
$\delta>0$ as $|x|\to\infty$. In Corollary 2.1 of \cite{IM99}, the authors show that the
Dirac operator perturbed by $\gamma Q$, with $|\gamma|$ small enough and
$$
\big|Q_{jk}(x)\big|\le\langle x\rangle^{-2},\qquad j,k\in\{1,2,3,4\},
$$
does not have any point spectrum. Therefore, in our case where $Q=-\alpha_1a_1-\alpha_2a_2+V$,
we would not have had any accumulation of eigenvalues in $(-m,m)$ if we would have imposed
such decay assumptions on the magnetic part $-\alpha_1a_1-\alpha_2a_2$ of the perturbation.
\end{Remark}

As seen in Theorem \ref{thm<} the behaviour of the function $\xi(\;\!\cdot\;\!;H_\pm,H_0)$
in $(-m,m)$ depends on the distribution of
eigenvalues of the trace class operator $pW_\mp p$. In our next proposition we
shall exhibit different types of behaviours depending on the choice of the
functions $V_{11}$ and $V_{33}$ appearing in $W_\pm$. For that purpose, we first have
to recall some technical results taken from \cite{Rai09}, \cite{Rai03} and \cite{RW02}.

In the first lemma, an integrated density of states (IDS) for the operator $H_\perp^-$
in $\ltwo(\R^2)$ is defined as follows (see \eg \cite{DIM01,HLMW01}): Let $\chi_{T,x_\perp}$
be the characteristic function of the square $x_\perp+\big(-\frac T2,\frac T2\big)^2$, with
$x_\perp\in\R^2$ and $T>0$. Then a non-increasing function $\varrho:[0,\infty)\to\R$ is
called IDS for the  operator $H_\perp^-$ if for each $x_\perp\in\R^2$ it satisfies
$$
\varrho(\lambda)
=\lim_{T\to\infty}T^{-2}\tr\big[\chi_{T,x_\perp}(Q_\perp)
E^{H_\perp^-}\big((-\infty,\lambda)\big)\chi_{T,x_\perp}(Q_\perp)\big]
$$
for each point $\lambda\in\R$ of continuity of $\varrho$.

\begin{Lemma}[Lemma 3.3 of \cite{Rai09}]\label{tec_Rai1}
Let $U\in C^1(\R^2)$ satisfy
$$
0\le U(x_\perp)\le{\rm Const.}\;\!\langle x_\perp\rangle^{-\alpha}
\qquad\hbox{and}\qquad
\big|(\nabla U)(x_\perp)\big|\le{\rm Const.}\;\!\langle x_\perp\rangle^{-\alpha-1}
$$
for all $x\in\R^2$ and some $\alpha>0$. Assume moreover that
\begin{enumerate}
\item[$\bullet$] $U(x_\perp)=u\big(\frac{x_\perp}{|x_\perp|}\big)\big(1+o(1)\big)$ as
$|x_\perp|\to\infty$, where $u$ is a continuous function on $\mathbb S^1$ which does
not vanish identically,
\item[$\bullet$] $b$ is an admissible magnetic field,
\item[$\bullet$] there exists an IDS $\varrho_b$ for the operator $H_\perp^-$.
\end{enumerate}
Then we have
$$
n_+\big(s;pUp\big)
=\frac{b_0}{2\pi}\big|\big\{x_\perp\in\R^2\mid U(x_\perp)>s\big\}\big|\big(1+o(1)\big)
=\Psi_\alpha(s;u,b_0)\big(1+o(1)\big)\quad\hbox{as}\quad s\searrow0,
$$
where $|\;\!\cdot\;\!|$ denotes the Lebesgue measure, and
\begin{equation}\label{psi_alpha}
\Psi_\alpha(s;u,b_0):=\frac{s^{-2/\alpha}b_0}{4\pi}
\int_{\mathbb S^1}\d\vartheta\,u(\vartheta)^{2/\alpha},\qquad s>0.
\end{equation}
\end{Lemma}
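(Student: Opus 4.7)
My plan is to split the statement into two logically independent assertions: (i) the Berezin-Toeplitz asymptotic
$$
n_+(s;pUp)=\frac{b_0}{2\pi}\big|\big\{x_\perp\in\R^2\mid U(x_\perp)>s\big\}\big|\big(1+o(1)\big),\qquad s\searrow0,
$$
which is an intrinsic spectral property of the Toeplitz-type operator $pUp$, and (ii) the elementary geometric identity
$$
\big|\{U>s\}\big|=\tfrac12\;\!s^{-2/\alpha}\int_{\mathbb S^1}\d\vartheta\,u(\vartheta)^{2/\alpha}\big(1+o(1)\big),
$$
which is a direct consequence of the homogeneous asymptotics of $U$. Multiplying (i) and (ii) would yield the $\Psi_\alpha$ formula \eqref{psi_alpha}.

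For (ii) I would pass to polar coordinates $x_\perp=(r\cos\vartheta,r\sin\vartheta)$. The asymptotic hypothesis $U(x_\perp)=u(x_\perp/|x_\perp|)\big(1+o(1)\big)|x_\perp|^{-\alpha}$ means the level set $\{U>s\}$ is, up to a compact set, a small perturbation of the ``rose'' $\{(r,\vartheta)\mid r<(u(\vartheta)_+/s)^{1/\alpha}\}$. Integrating $r\,\d r\,\d\vartheta$ over this region yields the stated formula; the $o(1)$ errors in $U$ translate into $o(1)$ errors in the area by dominated convergence together with the decay bound $U(x_\perp)\le{\rm Const.}\;\!\langle x_\perp\rangle^{-\alpha}$.

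For (i) my strategy is to reduce to the constant-field case. Because $b$ is admissible, one has $\ker(H_\perp^-)$ described by entire-type functions twisted by $\e^{-\varphi}$, so the projection $p$ onto $\ker(H_\perp^-)$ and the ``pure Landau'' projection $p_0$ associated with $b_0$ are intertwined by multiplication operators of the form $\e^{\pm\widetilde\varphi}$, which are bounded together with their inverses since $\widetilde\varphi\in\linf(\R^2)$. A mini-max argument then sandwiches $n_+(s;pUp)$ between $n_+(Cs;p_0Up_0)$ and $n_+(C^{-1}s;p_0Up_0)$ for a fixed constant $C\ge1$; together with the homogeneity of the leading term $s^{-2/\alpha}$, which absorbs any multiplicative constant at the level of the $o(1)$ remainder, this reduces (i) to the constant-field statement. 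For constant $b_0$ I would use the explicit coherent-state kernel
$$
p_0(x_\perp,y_\perp)=\tfrac{b_0}{2\pi}\exp\!\big(-\tfrac{b_0}{4}|x_\perp-y_\perp|^2+\tfrac{ib_0}{2}(x_1y_2-x_2y_1)\big),
$$
from which one deduces that, for smooth slowly-varying $U$, the Berezin transform $\widetilde U(x_\perp):=\<p_0(\,\cdot\,,x_\perp),U\;\!p_0(\,\cdot\,,x_\perp)\>$ satisfies $|\widetilde U-U|\to0$ at the appropriate rate. A standard mini-max/localisation argument (approximating $p_0Up_0$ by a direct sum of operators localised to large boxes on which $U$ is nearly constant and applying the density $b_0/(2\pi)$ of states per unit area in the lowest Landau level) yields $n_+(s;p_0Up_0)=\tfrac{b_0}{2\pi}|\{U>s\}|(1+o(1))$.

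The main obstacle will be step (i): the gradient bound $|\nabla U|\le{\rm Const.}\;\!\langle x_\perp\rangle^{-\alpha-1}$ must be used to control the remainder in the Berezin approximation, and one has to ensure that the errors from (a) replacing $p$ by $p_0$, (b) replacing $U$ by its Berezin transform, and (c) the localisation procedure all remain $o\big(s^{-2/\alpha}\big)$ uniformly as $s\searrow0$. Scaling $x_\perp\mapsto s^{-1/\alpha}x_\perp$ is helpful here: it turns the problem into a semiclassical one with effective parameter $h\sim s^{1/\alpha}\to0$, in which the relative errors in the coherent-state computation are of order $h$ and the leading Weyl-type term is of order $h^{-2}$, giving the required separation. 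The passage from the constant- to the admissible-field case requires also a uniform control in $s$ of the norm of $\e^{\pm\widetilde\varphi}$-style intertwiners, which follows from $\widetilde\varphi\in\linf(\R^2)$ and the hypothesis $\rm{osc}(\widetilde\varphi)<\infty$.
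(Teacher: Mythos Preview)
The paper does not prove this lemma; it is quoted verbatim from \cite{Rai09} and used as a black box. So there is no in-paper argument to compare against beyond that citation. Your plan for part~(ii) (the area of the super-level set via polar coordinates) and your sketch for the constant-field case of~(i) are both standard and essentially correct. The real issue is the passage from the admissible field $b$ to the constant field $b_0$.

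The intertwiner $\e^{\pm\widetilde\varphi}$ is \emph{not} unitary, and the min-max comparison you describe yields only
\[
n_+\big(Cs;\,p_0Up_0\big)\;\le\;n_+(s;\,pUp)\;\le\;n_+\big(C^{-1}s;\,p_0Up_0\big),
\qquad C=\e^{2\,{\rm osc}(\widetilde\varphi)},
\]
with $C$ a \emph{fixed} constant, in general strictly greater than $1$. If $n_+(s;p_0Up_0)\sim\kappa\,s^{-2/\alpha}$, these bounds give $\kappa\,C^{-2/\alpha}s^{-2/\alpha}$ below and $\kappa\,C^{2/\alpha}s^{-2/\alpha}$ above, which do \emph{not} pinch the constant $\kappa$. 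Your sentence ``the homogeneity of the leading term $s^{-2/\alpha}$\ldots absorbs any multiplicative constant at the level of the $o(1)$ remainder'' is the error: a fixed factor $C^{\pm2/\alpha}\ne1$ cannot be hidden inside a multiplicative $(1+o(1))$. A tell-tale sign is that your argument never touches the hypothesis that an IDS $\varrho_b$ for $H_\perp^-$ exists; that hypothesis is stated for a reason.

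The approach actually taken in \cite{Rai09} (building on \cite{RW02,Rai03}) does not reduce to the Landau projection $p_0$. One works directly with the variable-field projection $p$: localise to large squares on which $U$ is essentially constant (this is where the gradient bound $|\nabla U|\le{\rm Const.}\,\langle x_\perp\rangle^{-\alpha-1}$ is used), and use that the \emph{density of states of $p$ per unit area} --- i.e.\ the jump of $\varrho_b$ at $0$ --- equals $b_0/(2\pi)$ for any admissible $b$. That last identity is a non-trivial Aharonov--Casher/index-type fact and is exactly what the IDS hypothesis packages. Your crude comparison to $p_0$ recovers the correct power $s^{-2/\alpha}$ but not the prefactor $b_0/(2\pi)$; to get the sharp constant you must either invoke the IDS as in \cite{Rai09}, or else refine the comparison so that the effective oscillation of $\widetilde\varphi$ on each localisation box tends to $0$, which again requires a localisation argument of the same type.
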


\begin{Lemma}[Lemma 3.4 of \cite{Rai09}]\label{tec_Rai2}
Let $0\le U\in\linf(\R^2)$. Assume that
$$
\ln\big(U(x_\perp)\big)
=-\eta|x_\perp|^{2\beta}\big(1+o(1)\big)\quad\hbox{as}\quad|x_\perp|\to\infty,
$$
for some $\eta,\beta>0$. Let $b$ be an admissible magnetic field. Then we have
$$
n_+\big(s;pUp\big)=\Phi_\beta(s,\eta,b_0)\big(1+o(1)\big)
\quad\hbox{as}\quad s\searrow0,
$$
where
\begin{equation}\label{phi_beta}
\Phi_\beta(s,\eta,b_0):=
\begin{cases}
\frac{b_0}{2\eta^{1/\beta}}\;\!|\ln(s)|^{1/\beta} & \hbox{if}~~\beta\in(0,1),\\
\frac1{\ln(1+2\eta/b_0)}\;\!|\ln(s)| & \hbox{if}~~\beta=1,\\
\frac\beta{\beta-1}\big(\ln|\ln(s)|\big)^{-1}|\ln(s)| & \hbox{if}~~\beta>1,
\end{cases}
\qquad s\in(0,\e^{-1}).
\end{equation}
\end{Lemma}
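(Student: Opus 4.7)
The plan is to reduce the problem to the case of the constant magnetic field $b_0$ and then invoke the classical eigenvalue asymptotics for Toeplitz operators on the Bargmann-Fock space with an exponentially decaying symbol. This is exactly the strategy used in the analogous lemma in \cite{RW02} for the Landau Hamiltonian, and the admissibility hypothesis on $b$ is what allows the reduction.

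First, I would exploit the explicit structure of $\ker(H_\perp^-)$ for an admissible magnetic field. Since $b=b_0+\widetilde b$ with $\widetilde\varphi$ continuous and bounded together with its derivatives up to order two, a standard computation (see \cite[Sec.~2.2]{Rai09}) shows that
\[
\ker(H_\perp^-)=\{\e^{-\varphi}f\mid f\text{ entire}\}\cap\ltwo(\R^2),
\qquad \varphi=\tfrac14b_0|x_\perp|^2+\widetilde\varphi,
\]
so that $p$ differs from the Landau projection $p_0$ associated with $b_0$ only by conjugation with the bounded, bounded-below multiplier $\e^{-\widetilde\varphi}$. A two-sided comparison then yields constants $c_1,c_2>0$ such that $c_1 p_0Up_0\le pUp\le c_2p_0Up_0$ in the quadratic-form sense after symmetrising with $\e^{\pm\widetilde\varphi}$; in particular, combined with the Weyl inequalities \eqref{Weyl} and the max-min principle, the counting function $n_+(s;pUp)$ is sandwiched between $n_+(c_2^{-1}s;p_0Up_0)$ and $n_+(c_1^{-1}s;p_0Up_0)$. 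Since the three asymptotic formulas \eqref{phi_beta} are all invariant under multiplication of $s$ by a positive constant up to lower-order terms, this sandwich reduces the lemma to the constant-field case $b=b_0$.

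Second, in the constant-field case I would transfer the problem to the Bargmann-Fock space $\F(\C)$ via the unitary identification of $\ker(H_\perp^-)$ (at field strength $b_0$) with $\F(\C)$, under which $p_0 U p_0$ becomes the Toeplitz operator $T_U$ with symbol $U$. For a radial-type symbol with $\ln U(x_\perp)=-\eta|x_\perp|^{2\beta}(1+o(1))$, the asymptotics of the eigenvalues of $T_U$ are explicit: using the reproducing kernel of $\F(\C)$ and coherent states, the diagonal matrix elements of $T_U$ in the orthonormal basis $\{(b_0/2)^{k/2}z^k/\sqrt{k!}\}_{k\ge0}$ behave asymptotically like a logarithmic-scale transform of $U$. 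Evaluating these with the Laplace method yields the three regimes in \eqref{phi_beta}: for $\beta<1$ the Gaussian width of coherent states is negligible relative to the scale on which $U$ varies, giving the ``classical'' count $\frac{b_0}{2\pi}|\{U>s\}|$; for $\beta=1$ the width and decay scale match, producing the logarithmic factor $1/\ln(1+2\eta/b_0)$; and for $\beta>1$ the symbol decays faster than any coherent state can resolve, yielding the $\beta/(\beta-1)$ correction.

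Third, to pass from the diagonal-matrix-element estimate to a genuine eigenvalue counting statement I would use the standard off-diagonal bound in Bargmann-Fock together with a Weyl inequality argument, or equivalently apply the general results of Raikov-Warzel on Berezin-Toeplitz operators. The main obstacle is the sharp constant in the $\beta=1$ case, where the competition between the Gaussian weight of the coherent states and the exponential decay of $U$ is delicate and requires a careful Laplace-method expansion; the other regimes follow more directly from monotonicity and the reduction above. The little-$o$ errors in $\ln U$ are absorbed at the end using the monotonicity of $n_+(s;\cdot)$ in the symbol together with the homogeneity of $\Phi_\beta$ in $\eta$.
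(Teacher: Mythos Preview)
The paper does not prove this lemma: it is stated as a direct citation of \cite[Lemma~3.4]{Rai09}, with no argument given in the present text. So there is no ``paper's own proof'' to compare against here.

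Your sketch is in line with how the result is actually obtained in \cite{Rai09} and ultimately in \cite{RW02}: reduce from the admissible field $b$ to the constant field $b_0$ via the explicit description $\ker(H_\perp^-)=\e^{-\widetilde\varphi}\ker(H_{\perp,0}^-)$, then analyse the resulting Berezin--Toeplitz operator on the lowest Landau level / Fock space. One point of phrasing to tighten: the inequality ``$c_1\,p_0Up_0\le pUp\le c_2\,p_0Up_0$'' cannot hold as an operator inequality since $p$ and $p_0$ have different ranges. What one actually uses is that multiplication by $\e^{-\widetilde\varphi}$ is a bounded bijection from $\mathrm{ran}\,p_0$ onto $\mathrm{ran}\,p$, so the min--max principle gives the sandwich $n_+(c_2^{-1}s;p_0Up_0)\le n_+(s;pUp)\le n_+(c_1^{-1}s;p_0Up_0)$ directly, without an intermediate quadratic-form inequality between $pUp$ and $p_0Up_0$. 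With that correction your outline matches the argument in the cited references.
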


\begin{Lemma}[Lemma 3.5 of \cite{Rai09}]\label{tec_Rai3}
Let $0\le U\in\linf(\R^2)$. Assume that the support of $U$ is compact, and that
there exists a constant $\textsc c>0$ such that $U\ge\textsc c$ on an open
non-empty subset of $\R^2$. Let $b$ be an admissible magnetic field. Then we have
$$
n_+\big(s;pUp\big)=\Phi_\infty(s)\big(1+o(1)\big)\quad\hbox{as}\quad s\searrow0,
$$
where
\begin{equation}\label{phi_infty}
\Phi_\infty(s):=\big(\ln|\ln(s)|\big)^{-1}|\ln(s)|,\qquad s\in(0,\e^{-1}).
\end{equation}
\end{Lemma}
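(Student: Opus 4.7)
The plan is to sandwich $U$ between scalar multiples of characteristic functions of compact sets and then reduce the resulting problem to the spectral asymptotics of a Berezin-Toeplitz operator on a weighted Bergman-Fock space associated with the admissible magnetic field $b$. Since $\supp U$ is compact, I would pick $R>0$ with $\supp U\subset B_R:=\{x_\perp\in\R^2:|x_\perp|\le R\}$ and fix an open non-empty $\Omega\subset\R^2$ on which $U\ge\textsc c$. The pointwise bound $\textsc c\,\chi_\Omega\le U\le\|U\|_\infty\,\chi_{B_R}$, together with the monotonicity of $W\mapsto pWp$ in $W\ge 0$ and the min-max principle, yields
$$
n_+\big(s/\|U\|_\infty;\,p\chi_{B_R}p\big)\le n_+\big(s;pUp\big)\le n_+\big(s/\textsc c;\,p\chi_\Omega p\big).
$$
A direct computation shows $\Phi_\infty(cs)=\Phi_\infty(s)(1+o(1))$ as $s\searrow0$ for any fixed $c>0$, so it is enough to prove $n_+(s;p\chi_K p)=\Phi_\infty(s)(1+o(1))$ for every compact $K\subset\R^2$ with non-empty interior.

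The next step uses the standard Fock-type description of $\ker(H_\perp^-)$ available under the admissibility hypothesis: elements of $\ker(H_\perp^-)$ are precisely the functions of the form $\e^{-\varphi}f$, with $f$ entire on $\C\simeq\R^2$. This gives a unitary identification of $p\,\ltwo(\R^2)$ with the weighted Bergman space $\mathcal F_\varphi:=\big\{f\text{ entire}:\int_{\R^2}|f|^2\e^{-2\varphi}\d x_\perp<\infty\big\}$, under which $p\chi_K p$ is unitarily equivalent to the Toeplitz operator $T_K:f\mapsto\Pi_\varphi(\chi_K\e^{-2\widetilde\varphi}f)$, with $\Pi_\varphi$ the orthogonal projection onto $\mathcal F_\varphi$. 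Because $\widetilde\varphi$ is bounded, $\mathcal F_\varphi$ is norm-equivalent to the classical Fock space $\mathcal F_{\varphi_0}$, and $T_K$ is comparable up to bounded factors to its counterpart on $\mathcal F_{\varphi_0}$. Since $\Phi_\infty$ is insensitive to such factors, it suffices to analyze the classical case.

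On $\mathcal F_{\varphi_0}$ I would use the orthonormal monomial basis $\{e_n\}_{n\ge 0}$, $e_n(z)\propto z^n$, normalized so that $\|z^n\|_{\mathcal F_{\varphi_0}}^2\asymp n!\,(2/b_0)^n$. The diagonal matrix elements $\langle T_K e_n,e_n\rangle$ concentrate on the circle $|z|=\sqrt{2n/b_0}$, and once this circle lies outside $K$ they decay like $C^n/n!$ for some $C=C(K)>0$, via an incomplete-Gamma estimate. A min-max argument exploiting the super-exponential concentration of $|e_n|^2\e^{-2\varphi_0}$ (which forces the off-diagonal entries of $T_K$ to decay at least as fast as the diagonal ones) then gives $\lambda_n(T_K)=(C^n/n!)(1+o(1))$, whence $|\ln\lambda_n|=n\ln n\,(1+o(1))$. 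Inverting $N\ln N\asymp|\ln s|$ yields $n_+(s;T_K)=|\ln s|/\ln|\ln s|\,(1+o(1))=\Phi_\infty(s)\,(1+o(1))$, as required.

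The hard part is this last step: rigorously controlling the off-diagonal structure of $T_K$ in the monomial basis when $K$ is not rotationally symmetric and when the symbol carries the additional bounded factor $\e^{-2\widetilde\varphi}$, since one cannot diagonalize $T_K$ explicitly. The universality of the leading term $\Phi_\infty$---its independence of $K$, $\textsc c$, $b_0$ and $\widetilde\varphi$---comes from the fact that any multiplicative constant is absorbed into $|\ln(C^n/n!)|=n\ln n\,(1+o(1))$, which is also what allows the crude sandwich of the first step to preserve the leading-order asymptotic without losing information.
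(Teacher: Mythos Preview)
The paper does not prove this lemma: it is quoted verbatim as Lemma~3.5 of \cite{Rai09} and used as a black box, so there is no proof in the paper to compare against.

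Your outline follows the route taken in \cite{Rai09} and the earlier \cite{RW02}: sandwich $U$ by characteristic functions, exploit the insensitivity of $\Phi_\infty$ to multiplicative constants, and reduce to the eigenvalue asymptotics of $p\chi_Kp$ via the Fock-space description of $\ker(H_\perp^-)$. Two remarks. First, your displayed sandwich is written with the bounds reversed: from $\textsc c\,\chi_\Omega\le U\le\|U\|_\infty\,\chi_{B_R}$ and min--max one obtains
\[
n_+\big(s/\textsc c;\,p\chi_\Omega p\big)\;\le\;n_+\big(s;pUp\big)\;\le\;n_+\big(s/\|U\|_\infty;\,p\chi_{B_R}p\big),
\]
not the other way around; this is harmless since both $\overline\Omega$ and $B_R$ are compact with non-empty interior. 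Second, the passage ``$T_K$ is comparable up to bounded factors to its counterpart on $\mathcal F_{\varphi_0}$'' is where your sketch is thinnest: norm equivalence of the two weighted Bergman spaces does \emph{not} by itself give a two-sided operator inequality between the corresponding Toeplitz operators, because the orthogonal projections $\Pi_\varphi$ and $\Pi_{\varphi_0}$ differ and are not intertwined by the multiplication operator $\e^{-\widetilde\varphi}$. In the literature this is handled either by working directly with the projection $p$ for the admissible field and proving $\ln\big(1/\lambda_n(p\chi_Kp)\big)\sim n\ln n$ via separate variational upper and lower bounds, or by a more careful comparison argument; in either case the step requires substantially more than you indicate. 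Your own identification of this as the ``hard part'' is accurate, and the rest of the strategy is sound.
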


Combining Theorem \ref{thm<} with Lemmas \ref{tec_Rai1}-\ref{tec_Rai3} we obtain
the behaviour of $\xi(\lambda;H_\pm,H_0)$ as $|\lambda|\to m$, $|\lambda|<m$,
when the functions $W_\pm$ admit a power-like or exponential decay at infinity,
or when they have a compact support.

\begin{Proposition}\label{in_gap}
Let $V$ satisfy Assumption \ref{assumption2}.
\begin{enumerate}
\item[(a)] Assume that the hypotheses of Lemma \ref{tec_Rai1} hold with
$U_\pm=W_\pm$ and $\alpha=\nu-1$. Then we have
$$
\xi(\lambda;H_-,H_0)
=\textstyle-\Psi_{\nu-1}\Big(2\big(\frac{m-\lambda}{m+\lambda}\big)^{1/2};u_+,b_0\Big)
\big(1+o(1)\big)\quad\hbox{as}\quad\lambda\nearrow m,
$$
and
$$
\xi(\lambda;H_+,H_0)
=\textstyle\Psi_{\nu-1}\Big(2\big(\frac{m+\lambda}{m-\lambda}\big)^{1/2};u_-,b_0\Big)
\big(1+o(1)\big)\quad\hbox{as}\quad\lambda\searrow-m,
$$
with $\Psi_{\nu-1}$ given by Equation \eqref{psi_alpha}.

\item[(b)] Assume that the hypotheses of Lemma \ref{tec_Rai2} hold with
$U_\pm=W_\pm$. Then we have
$$
\xi(\lambda;H_-,H_0)=\textstyle-\Phi_{\beta_+}
\Big(2\big(\frac{m-\lambda}{m+\lambda}\big)^{1/2};\eta_+,b_0\Big)\big(1+o(1)\big)
\quad\hbox{as}\quad\lambda\nearrow m,
$$
and
$$
\xi(\lambda;H_+,H_0)=\textstyle\Phi_{\beta_-}
\Big(2\big(\frac{m+\lambda}{m-\lambda}\big)^{1/2};\eta_-,b_0\Big)\big(1+o(1)\big)
\quad\hbox{as}\quad\lambda\searrow-m,
$$
with $\beta_\pm\in(0,\infty)$ and $\Phi_{\beta_\pm}$ given by Equation \eqref{phi_beta}.

\item[(c)] Assume that the hypotheses of Lemma \ref{tec_Rai3} hold with
$U_\pm=W_\pm$. Then we have
$$
\xi(\lambda;H_-,H_0)=\textstyle-\Phi_\infty
\Big(2\big(\frac{m-\lambda}{m+\lambda}\big)^{1/2}\Big)\big(1+o(1)\big)
\quad\hbox{as}\quad\lambda\nearrow m,
$$
and
$$
\xi(\lambda;H_+,H_0)=\textstyle\Phi_\infty
\Big(2\big(\frac{m+\lambda}{m-\lambda}\big)^{1/2}\Big)\big(1+o(1)\big)
\quad\hbox{as}\quad\lambda\searrow-m,
$$
with $\Phi_\infty$ given by Equation \eqref{phi_infty}.
\end{enumerate}
\end{Proposition}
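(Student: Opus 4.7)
The plan is to feed the asymptotics of the counting functions for Berezin--Toeplitz type operators, supplied by Lemmas \ref{tec_Rai1}--\ref{tec_Rai3}, into the two-sided estimates of Theorem \ref{thm<}. All three parts have the same skeleton; I sketch part (a) in detail and indicate the minor modifications for (b) and (c).

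First I would rewrite the bounds in Theorem \ref{thm<} in terms of $pW_\pm p$. Since $\omega_+(\lambda) = \frac12\bigl(\tfrac{m+\lambda}{m-\lambda}\bigr)^{1/2} pW_+p$ is a nonnegative trace class operator, a direct scaling of the counting function gives, for any $s>0$,
$$
n_+\bigl(s;\omega_+(\lambda)\bigr)
= n_+\Bigl(2s\bigl(\tfrac{m-\lambda}{m+\lambda}\bigr)^{1/2};\,pW_+p\Bigr).
$$
Similarly, $\omega_-(\lambda)$ being nonpositive, $n_-(s;\omega_-(\lambda)) = n_+\bigl(2s(\tfrac{m+\lambda}{m-\lambda})^{1/2};\,pW_-p\bigr)$. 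Under the hypothesis of Lemma \ref{tec_Rai1} applied with $U=W_\pm$ and $\alpha=\nu-1$, the arguments of $n_+$ tend to $0$ as $\lambda\nearrow m$ (respectively $\lambda\searrow -m$), so Lemma \ref{tec_Rai1} yields
$$
n_+\bigl(2(1\pm\varepsilon)\bigl(\tfrac{m-\lambda}{m+\lambda}\bigr)^{1/2};\,pW_+p\bigr)
= \Psi_{\nu-1}\bigl(2(1\pm\varepsilon)\bigl(\tfrac{m-\lambda}{m+\lambda}\bigr)^{1/2};u_+,b_0\bigr)\bigl(1+o(1)\bigr)
$$
as $\lambda\nearrow m$, and analogously for $W_-$ as $\lambda\searrow -m$.

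Next I would use the explicit scaling $\Psi_{\nu-1}(cs;u,b_0) = c^{-2/(\nu-1)}\Psi_{\nu-1}(s;u,b_0)$, so that the factor $(1\pm\varepsilon)$ contributes only the multiplicative constant $(1\pm\varepsilon)^{-2/(\nu-1)}$, which tends to $1$ as $\varepsilon\searrow 0$. Plugging these into the inequalities \eqref{ineq2} and \eqref{ineq3} of Theorem \ref{thm<}, and observing that $\Psi_{\nu-1}\bigl(2(\tfrac{m\mp\lambda}{m\pm\lambda})^{1/2};u_\pm,b_0\bigr)$ diverges as $|\lambda|\to m$ so that the $\O(1)$ remainders are negligible, I then let $\varepsilon\searrow 0$ to squeeze the upper and lower bounds to the common asymptotic stated in part (a).

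For parts (b) and (c) the strategy is identical. The lemma providing the small-$s$ asymptotics of $n_+(s;pW_\pm p)$ is replaced by Lemma \ref{tec_Rai2} (respectively Lemma \ref{tec_Rai3}), so $\Psi_{\nu-1}$ gets replaced by $\Phi_{\beta_\pm}$ (respectively $\Phi_\infty$). The only point that needs separate verification is that the multiplicative perturbation $s\mapsto(1\pm\varepsilon)s$ is absorbed by the $(1+o(1))$ factor as $s\searrow 0$; this is now true not through a clean scaling but through the logarithmic nature of the functions, since $|\ln((1\pm\varepsilon)s)|^{1/\beta} = |\ln s|^{1/\beta}(1+o(1))$ and $\ln|\ln((1\pm\varepsilon)s)| = \ln|\ln s|(1+o(1))$ as $s\searrow 0$. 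With this observation the same squeezing and $\varepsilon\searrow 0$ argument yields (b) and (c).

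The only real obstacle I anticipate is the bookkeeping to make sure the $\O(1)$ error in Theorem \ref{thm<} truly becomes negligible compared with the main asymptotic term in each of the three regimes: for $\Psi_{\nu-1}$ this is clear because it blows up as a negative power of $(m\mp\lambda)$, but for the logarithmic cases $\Phi_{\beta_\pm}$ and $\Phi_\infty$ one must check that $\Phi\bigl(2(\tfrac{m\mp\lambda}{m\pm\lambda})^{1/2}\bigr)\to\infty$ as $|\lambda|\to m$. This is immediate from the formulas \eqref{phi_beta}--\eqref{phi_infty} since the argument goes to $0$ and each $\Phi$ diverges there, so the $\O(1)$ terms may be safely absorbed into the $(1+o(1))$ factor.
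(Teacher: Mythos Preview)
Your proposal is correct and follows exactly the approach the paper intends: the paper simply states that the result is obtained by ``combining Theorem \ref{thm<} with Lemmas \ref{tec_Rai1}--\ref{tec_Rai3}'' without spelling out details, and your argument (rescaling $n_\pm(s;\omega_\pm(\lambda))$ to $n_+(\,\cdot\,;pW_\pm p)$, applying the relevant lemma, checking that the $(1\pm\varepsilon)$ factor is absorbed by the asymptotic form of $\Psi_{\nu-1}$, $\Phi_{\beta_\pm}$ or $\Phi_\infty$, and then squeezing as $\varepsilon\searrow0$) is precisely the intended computation.
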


The estimates of Proposition \ref{in_gap} are similar to the ones of \cite[Cor.~3.6]{Rai09},
where the corresponding situation for magnetic Pauli operators is considered.

%-----------------------------------------------------------------------------------------------
\subsection{The case $\boldsymbol{|\lambda|>m}$}\label{outside}
%-----------------------------------------------------------------------------------------------

In this section we prove asymptotic estimates for $\xi(\lambda;H,H_\pm)$ as $\lambda\to\pm m$,
when $|\lambda|>m$. We start by showing an estimate for
$n_\pm\big(s;\re T_{\sf div}(\lambda)\big)$.

\begin{Proposition}\label{pebre}
Let $V$ satisfy Assumption \ref{assumption2}. Then the estimates
$$
n_\pm\big(s;\re T_{\sf div}(\lambda)\big)=\O(1)
\quad\hbox{as}\quad\lambda\to\pm m,~|\lambda|>m,
$$
hold for each $s>0$.
\end{Proposition}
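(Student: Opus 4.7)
The strategy mirrors the proof of Proposition \ref{PoppaChubby}: approximate $\re T_{\sf div}(\lambda)$ in operator norm by a compact operator independent of $\lambda$, and then invoke the Weyl inequalities \eqref{Weyl}.

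First I would compute, for $|\lambda|>m$, the real part of the kernel of $R(\lambda^2-m^2)$. Using the second line of the definition of $r_\lambda$, one gets $\re r_{\lambda^2-m^2}(x_3)=-\frac{\sin(\sqrt{\lambda^2-m^2}\,|x_3|)}{2\sqrt{\lambda^2-m^2}}$, which is bounded (and even converges pointwise to $-\tfrac{1}{2}|x_3|$) as $\lambda^2-m^2\searrow 0$. Fixing $\nu'\in(3,\nu)$ and with $\widetilde M$, $G_{\nu-\nu'}$ as in \eqref{Mtilde}-\eqref{G}, I would then express
$$
\re T_{\sf div}(\lambda)
=\widetilde M\bigl(G_{\nu-\nu'}\otimes\widetilde J^{(\lambda)}_{\nu'}\bigr)
\(\begin{smallmatrix}
(\lambda+m) & 0 & 0 & 0\\
0 & 0 & 0 & 0\\
0 & 0 & (\lambda-m) & 0\\
0 & 0 & 0 & 0
\end{smallmatrix}\)
\widetilde M,
$$
where $\widetilde J^{(\lambda)}_{\nu'}:\ltwo(\R)\to\ltwo(\R)$ has integral kernel $-\langle x_3\rangle^{-\nu'/2}\frac{\sin(\sqrt{\lambda^2-m^2}\,|x_3-x_3'|)}{2\sqrt{\lambda^2-m^2}}\langle x_3'\rangle^{-\nu'/2}$.

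The key analytic input is the pointwise estimate $|\sin(at)/a|\le|t|$ valid for all $a,t>0$, which gives the uniform majoration
$$
\bigl|\widetilde J^{(\lambda)}_{\nu'}(x_3,x_3')\bigr|\le\tfrac12\langle x_3\rangle^{-\nu'/2}|x_3-x_3'|\langle x_3'\rangle^{-\nu'/2},
$$
the same majorant used in the proof of Proposition \ref{PoppaChubby}. Since the r.h.s. is square integrable on $\R^2$ (as $\nu'>3$), Lebesgue's dominated convergence theorem yields $\widetilde J^{(\lambda)}_{\nu'}\in S_2[\ltwo(\R)]$ together with $\lim_{\lambda\to\pm m,\,|\lambda|>m}\bigl\|\widetilde J^{(\lambda)}_{\nu'}-J^{(m)}_{\nu'}\bigr\|_2=0$, with $J^{(m)}_{\nu'}$ the operator introduced after \eqref{Tplusminus}. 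Combined with the boundedness of $\widetilde M$ and the compactness of $G_{\nu-\nu'}$ (Lemma \ref{lem_rai}), this implies
$$
\lim_{\lambda\to m,\,|\lambda|>m}\bigl\|\re T_{\sf div}(\lambda)-T_+\bigr\|=0
\qquad\hbox{and}\qquad
\lim_{\lambda\to -m,\,|\lambda|>m}\bigl\|\re T_{\sf div}(\lambda)-T_-\bigr\|=0,
$$
with $T_\pm$ the compact operators of \eqref{Tplusminus}.

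From here the conclusion is immediate: given $s>0$, choose $\lambda$ close enough to $\pm m$ so that $\|\re T_{\sf div}(\lambda)-T_\pm\|<s/2$. The Weyl inequalities \eqref{Weyl} then yield
$$
n_\pm\bigl(s;\re T_{\sf div}(\lambda)\bigr)\le n_\pm\bigl(s/2;\re T_{\sf div}(\lambda)-T_\pm\bigr)+n_\pm\bigl(s/2;T_\pm\bigr)=n_\pm\bigl(s/2;T_\pm\bigr),
$$
and the r.h.s. is finite and independent of $\lambda$. I do not expect any serious obstacle here since the entire construction is parallel to the one already carried out for $|\lambda|<m$ in Proposition \ref{PoppaChubby}; the only point requiring some care is the verification that the same dominating kernel controls the real part of $R(\lambda^2-m^2)$ on \emph{both} sides of $\pm m$, which is precisely what the bound $|\sin(at)/a|\le|t|$ provides.
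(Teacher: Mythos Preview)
Your proposal is correct and follows essentially the same approach as the paper's proof: the paper also writes $\re T_{\sf div}(\lambda)=\widetilde M\bigl(G_{\nu-\nu'}\otimes R^{(\lambda)}_{\nu'}\bigr)D_\lambda\widetilde M$ (your $\widetilde J^{(\lambda)}_{\nu'}$ is exactly their $R^{(\lambda)}_{\nu'}$), invokes dominated convergence to obtain $\|\re T_{\sf div}(\lambda)-T_\pm\|\to0$, and then refers to point~(ii) of the proof of Proposition~\ref{PoppaChubby} for the Weyl-inequality endgame. You have simply made explicit the kernel computation and the dominating bound $|\sin(at)/a|\le|t|$ that the paper leaves implicit.
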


\begin{proof}
Take $\lambda\in\R$ with $|\lambda|>m$, and let $\nu'\in(3,\nu)$. Then we have
$$
\re T_{\sf div}(\lambda)=\widetilde M\big(G_{\nu-\nu'}\otimes R^{(\lambda)}_{\nu'}\big)
\(\begin{smallmatrix}
(\lambda+m) & 0 & 0 & 0\\
0 & 0 & 0 & 0\\
0 & 0 & (\lambda-m) & 0\\
0 & 0 & 0 & 0
\end{smallmatrix}\)\widetilde M,
$$
with $\widetilde M$ and $G_{\nu-\nu'}$ as in \eqref{Mtilde}-\eqref{G}, and
$$
R^{(\lambda)}_{\nu'}:=\langle Q_3\rangle^{-\nu'/2}\re R(\lambda^2-m^2)
\langle Q_3\rangle^{-\nu'/2}.
$$
By using Lebesgue's dominated convergence theorem, one shows that
$$
\lim_{\lambda\to\pm m,\,|\lambda|>m}\big\|\re T_{\sf div}(\lambda)-T_\pm\big\|=0,
$$
with $T_\pm$ as in \eqref{Tplusminus}. So the claim can be proved as in point (ii)
of the proof of Proposition \ref{PoppaChubby}.
\end{proof}

The next result follows from applying Propositions \ref{empanada} and \ref{pebre},
the Weyl inequalities \eqref{Weyl} and the identities \cite[Sec.~5.4]{FR04}
\begin{equation}\label{id_arctan}
\int_\R\d\mu(t)\,n_\pm\big(s;tT\big)=\pi^{-1}\tr\arctan(s^{-1}T),\qquad s>0,
\end{equation}
where $T\in S_1(\H)$, $T=T^*\ge0$. We also use the fact that
$\sgn(\lambda)\im T_{\sf div}(\lambda)$ is a positive operator if $|\lambda|>m$.

\begin{Corollary}\label{Musashi}
Let $V$ satisfy Assumption \ref{assumption2}. Then the estimates
\begin{align*}
&\pi^{-1}\tr\arctan
\big[(1+\varepsilon)^{-1}\sgn(\lambda)\im T_{\sf div}(\lambda)\big]+\O(1)\\
&\le\mp\xi(\lambda;H_\mp,H_0)\\
&\le\pi^{-1}\tr\arctan
\big[(1-\varepsilon)^{-1}\sgn(\lambda)\im T_{\sf div}(\lambda)\big]+\O(1)
\end{align*}
hold as $\lambda\to\pm m$, $|\lambda|>m$, for each $\varepsilon\in(0,1)$.
\end{Corollary}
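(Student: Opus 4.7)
The plan is to combine the two-sided bound of Proposition \ref{empanada} with the Weyl inequalities \eqref{Weyl} in order to decouple, inside the integrand, the contribution of $\re T_{\sf div}(\lambda)$ from that of $t\im T_{\sf div}(\lambda)$. Proposition \ref{pebre} ensures that the $\re T_{\sf div}(\lambda)$ piece is $\O(1)$ uniformly in $t$, so after integration against $\mu$ only the imaginary part survives. The remaining integral is then converted into an arctangent trace via the identity \eqref{id_arctan}, once one observes that $\sgn(\lambda)\im T_{\sf div}(\lambda)$ is positive and trace class for $|\lambda|>m$ (trace class by Proposition \ref{coffee}; positive from the explicit form of $T_{\sf div}(\lambda)$ together with the positivity of $\im R(\lambda^2-m^2)$ when $\lambda^2>m^2$ and with the common sign of the factors $\lambda\pm m$).

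More concretely, for the upper bound I apply Proposition \ref{empanada} with parameter $\varepsilon/2$ and use \eqref{Weyl} in the form
$$
n_\pm\bigl(1-\tfrac\varepsilon2;\re T_{\sf div}(\lambda)+t\im T_{\sf div}(\lambda)\bigr)
\le n_\pm\bigl(\tfrac\varepsilon2;\re T_{\sf div}(\lambda)\bigr)
+n_\pm\bigl(1-\varepsilon;t\im T_{\sf div}(\lambda)\bigr).
$$
Integrating against $\mu$, the first summand on the right is independent of $t$ and equals $\O(1)$ by Proposition \ref{pebre}. Since $\mu$ is invariant under $t\mapsto-t$, the change of variable $t\to\sgn(\lambda)t$ together with \eqref{id_arctan} yields
$$
\int_\R\d\mu(t)\,n_\pm\bigl(1-\varepsilon;t\im T_{\sf div}(\lambda)\bigr)
=\pi^{-1}\tr\arctan\bigl((1-\varepsilon)^{-1}\sgn(\lambda)\im T_{\sf div}(\lambda)\bigr),
$$
which is exactly the upper bound claimed in the statement.

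For the lower bound I again apply Proposition \ref{empanada} with parameter $\varepsilon/2$, this time using \eqref{Weyl} in the rearranged form
$$
n_\pm\bigl(1+\tfrac\varepsilon2;A+tB\bigr)
\ge n_\pm\bigl(1+\varepsilon;tB\bigr)-n_\mp\bigl(\tfrac\varepsilon2;A\bigr),
$$
with $A:=\re T_{\sf div}(\lambda)$ and $B:=\im T_{\sf div}(\lambda)$. Integration, a second appeal to Proposition \ref{pebre}, and the same $\sgn(\lambda)$-substitution followed by \eqref{id_arctan} then produce the matching lower bound with $(1+\varepsilon)^{-1}$ in place of $(1-\varepsilon)^{-1}$.

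The argument is essentially a mechanical assembly of the estimates already prepared in the preceding sections; the only point that requires some care is the sign bookkeeping at the final step, since the identity \eqref{id_arctan} is formulated for positive operators and must therefore be applied through the $\sgn(\lambda)$-trick in order to handle the regimes $\lambda>m$ and $\lambda<-m$ simultaneously.
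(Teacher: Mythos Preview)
Your proof is correct and follows exactly the route indicated by the paper, which merely states that the result follows from Propositions \ref{empanada} and \ref{pebre}, the Weyl inequalities \eqref{Weyl}, the identity \eqref{id_arctan}, and the positivity of $\sgn(\lambda)\im T_{\sf div}(\lambda)$. You have simply filled in the details (the $\varepsilon/2$ splitting, the $t\mapsto\sgn(\lambda)t$ substitution) that the paper leaves implicit.
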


As in the case $|\lambda|<m$, we introduce auxiliary operators in order to express
the lower and upper bounds for $\mp\xi(\lambda;H_\mp,H_0)$ in terms of Berezin-Toeplitz
type operators. For $\lambda\in\R$ with $|\lambda|>m$, we define the operators
$K_{1,\lambda},K_{2,\lambda}:\H\to\ltwo(\R^2;\C^4)$ by
\begin{align*}
(K_{1,\lambda}\varphi)(x_\perp)
&:=\int_{\R^3}\d x_\perp'\d x_3'\,p(x_\perp,x_\perp')
\cos\big(x_3'\sqrt{\lambda^2-m^2}\big)
\(\begin{smallmatrix}
\sqrt{|\lambda+m|} & 0 & 0 & 0\\
0 & 0 & 0 & 0\\
0 & 0 & \sqrt{|\lambda-m|} & 0\\
0 & 0 & 0 & 0
\end{smallmatrix}\)
V^{1/2}(x_\perp',x_3')\varphi(x_\perp',x_3'),\\
(K_{2,\lambda}\varphi)(x_\perp)
&:=\int_{\R^3}\d x_\perp'\d x_3'\,p(x_\perp,x_\perp')
\sin\big(x_3'\sqrt{\lambda^2-m^2}\big)
\(\begin{smallmatrix}
\sqrt{|\lambda+m|} & 0 & 0 & 0\\
0 & 0 & 0 & 0\\
0 & 0 & \sqrt{|\lambda-m|} & 0\\
0 & 0 & 0 & 0
\end{smallmatrix}\)
V^{1/2}(x_\perp',x_3')\varphi(x_\perp',x_3').
\end{align*}
Direct calculations show that the adjoint operators
$K_{1,\lambda}^*,K_{2,\lambda}^*:\ltwo(\R^2;\C^4)\to\H$ are given by
\begin{align*}
(K_{1,\lambda}^*\psi)(x_\perp,x_3)
&=\cos\big(x_3\sqrt{\lambda^2-m^2}\big)V^{1/2}(x_\perp,x_3)
\(\begin{smallmatrix}
\sqrt{|\lambda+m|} & 0 & 0 & 0\\
0 & 0 & 0 & 0\\
0 & 0 & \sqrt{|\lambda-m|} & 0\\
0 & 0 & 0 & 0
\end{smallmatrix}\)
(p\psi)(x_\perp),\\
(K_{2,\lambda}^*\psi)(x_\perp,x_3)
&=\sin\big(x_3\sqrt{\lambda^2-m^2}\big)V^{1/2}(x_\perp,x_3)
\(\begin{smallmatrix}
\sqrt{|\lambda+m|} & 0 & 0 & 0\\
0 & 0 & 0 & 0\\
0 & 0 & \sqrt{|\lambda-m|} & 0\\
0 & 0 & 0 & 0
\end{smallmatrix}\)
(p\psi)(x_\perp),
\end{align*}
and that
$$
\sgn(\lambda)\im T_{\sf div}(\lambda)=\frac1{2\sqrt{\lambda^2-m^2}}
\big(K_{1,\lambda}^*K_{1,\lambda}+K_{2,\lambda}^*K_{2,\lambda}\big).
$$
This last equation can be written more compactly as
\begin{equation}\label{graou}
\sgn(\lambda)\im T_{\sf div}(\lambda)
=\frac1{2\sqrt{\lambda^2-m^2}}\,K_\lambda^*K_\lambda
\end{equation}
if we use the operator
$$
K_\lambda:\H\to\ltwo(\R^2;\C^8),\qquad K_\lambda\varphi:=
\begin{pmatrix}
K_{1,\lambda}\varphi\\
K_{2,\lambda}\varphi
\end{pmatrix},
$$
with adjoint
$$
K_\lambda^*:\ltwo(\R^2;\C^8)\to\H,
\qquad K_{\lambda}^*\begin{pmatrix}\psi_1\\\psi_2\end{pmatrix}
=K_{1,\lambda}^*\psi_1+K_{2,\lambda}^*\psi_2.
$$

For the next proposition we also need to introduce for each $\lambda\in\R$ with
$|\lambda|>m$ the positive operator $\Omega(\lambda):\ltwo(\R^2;\C^8)\to\ltwo(\R^2;\C^8)$
defined by
$$
\Omega(\lambda):=\frac1{2\sqrt{\lambda^2-m^2}}\,K_\lambda K_\lambda^*.
$$
A direct calculation shows that
$$
K_\lambda K_\lambda^*
=p\begin{pmatrix}
M_{1,\lambda} & M_{2,\lambda}\\
M_{2,\lambda} & M_{3,\lambda}
\end{pmatrix}p,
$$
where
\begin{align*}
M_{1,\lambda}(x_\perp)
&:=\int_\R\d x_3\,\cos^2\big(x_3\sqrt{\lambda^2-m^2}\big)
\(\begin{smallmatrix}
|\lambda+m|V_{11}(x_\perp,x_3) & 0
& \sqrt{\lambda^2-m^2}V_{13}(x_\perp,x_3) & 0\\
0 & 0 & 0 & 0\\
\sqrt{\lambda^2-m^2}V_{31}(x_\perp,x_3) & 0
& |\lambda-m|V_{33}(x_\perp,x_3) & 0\\
0 & 0 & 0 & 0
\end{smallmatrix}\),\\
M_{2,\lambda}(x_\perp)
&=\int_\R\d x_3\,\sin\big(x_3\sqrt{\lambda^2-m^2}\big)
\cos\big(x_3\sqrt{\lambda^2-m^2}\big)
\(\begin{smallmatrix}
|\lambda+m|V_{11}(x_\perp,x_3) & 0
& \sqrt{\lambda^2-m^2}V_{13}(x_\perp,x_3) & 0\\
0 & 0 & 0 & 0\\
\sqrt{\lambda^2-m^2}V_{31}(x_\perp,x_3) & 0
& |\lambda-m|V_{33}(x_\perp,x_3) & 0\\
0 & 0 & 0 & 0
\end{smallmatrix}\),\\
M_{3,\lambda}(x_\perp)
&=\int_\R\d x_3\,\sin^2\big(x_3\sqrt{\lambda^2-m^2}\big)
\(\begin{smallmatrix}
|\lambda+m|V_{11}(x_\perp,x_3) & 0
& \sqrt{\lambda^2-m^2}V_{13}(x_\perp,x_3) & 0\\
0 & 0 & 0 & 0\\
\sqrt{\lambda^2-m^2}V_{31}(x_\perp,x_3) & 0
& |\lambda-m|V_{33}(x_\perp,x_3) & 0\\
0 & 0 & 0 & 0
\end{smallmatrix}\).
\end{align*}
This implies that
$$
\textstyle
\|\Omega(\lambda)\|_1
\le\big(\frac{\lambda+m}{\lambda-m}\big)^{1/2}\big\|pW_+p\big\|_1
+\big(\frac{\lambda-m}{\lambda+m}\big)^{1/2}\big\|pW_-p\big\|_1,
$$
and thus $\Omega(\lambda)\in S_1[\ltwo(\R^2;\C^8)]$ if $V$ satisfies Assumption
\ref{assumption1}.

Next Proposition is a direct consequence of Equations \eqref{BB*=B*B} and
\eqref{graou}.

\begin{Proposition}
Let $V$ satisfy Assumption \ref{assumption1}. Then we have for each $\lambda\in\R$
with $|\lambda|>m$ and each $s>0$
$$
n_\pm\big(s;\sgn(\lambda)\im T_{\sf div}(\lambda)\big)
=n_\pm\big(s;\Omega(\lambda)\big).
$$
In particular, it follows by Equation \eqref{id_arctan} that
\begin{equation}\label{cedula}
\tr\arctan\big(s^{-1}\sgn(\lambda)\im T_{\sf div}(\lambda)\big)
=\tr\arctan\big(s^{-1}\Omega(\lambda)\big).
\end{equation}
\end{Proposition}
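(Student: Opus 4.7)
The plan is to recognize the proposition as a direct application of the standard principle that $B^*B$ and $BB^*$ share their non-zero spectrum (with multiplicity). By \eqref{graou} one has
$$
\sgn(\lambda)\im T_{\sf div}(\lambda)=\frac{1}{2\sqrt{\lambda^2-m^2}}\,K_\lambda^*K_\lambda,
$$
while by definition
$$
\Omega(\lambda)=\frac{1}{2\sqrt{\lambda^2-m^2}}\,K_\lambda K_\lambda^*.
$$
Setting $B_\lambda:=\big(2\sqrt{\lambda^2-m^2}\big)^{-1/2}K_\lambda\in\B\big(\H,\ltwo(\R^2;\C^8)\big)$, one therefore has the factorizations $B_\lambda^*B_\lambda=\sgn(\lambda)\im T_{\sf div}(\lambda)$ and $B_\lambda B_\lambda^*=\Omega(\lambda)$.

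To obtain the first identity, I would verify that the hypothesis of \eqref{BB*=B*B} is satisfied: $B_\lambda^*B_\lambda$ is compact since $T_{\sf div}(\lambda)\in S_1(\H)$ by Proposition \ref{coffee}. Applying \eqref{BB*=B*B} directly gives
$$
n_\pm\big(s;\sgn(\lambda)\im T_{\sf div}(\lambda)\big)
=n_\pm(s;B_\lambda^*B_\lambda)
=n_\pm(s;B_\lambda B_\lambda^*)
=n_\pm\big(s;\Omega(\lambda)\big),
$$
which is the claim. Note that this is trivially consistent with $n_-(s;\cdot)=0$ for both operators, since both $B_\lambda^*B_\lambda$ and $B_\lambda B_\lambda^*$ are positive.

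For the trace identity \eqref{cedula}, I would apply \eqref{id_arctan} twice, once to $T=\sgn(\lambda)\im T_{\sf div}(\lambda)\in S_1(\H)$ and once to $T=\Omega(\lambda)\in S_1[\ltwo(\R^2;\C^8)]$ (the latter trace class property is established in the paragraph preceding the statement). Since the integrands $n_\pm(s;t\cdot)$ coincide for every $s,t>0$ by the first part, integration against $\d\mu(t)$ produces the equality of the two traces. There is no genuine obstacle here: the proposition is essentially a one-line corollary of the $B^*B$ versus $BB^*$ principle, and all technical inputs (trace class membership, compactness, positivity) have already been established in the preceding estimates.
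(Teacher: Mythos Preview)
Your proof is correct and follows exactly the approach indicated in the paper, which simply states that the proposition is a direct consequence of Equations \eqref{BB*=B*B} and \eqref{graou}. You have filled in the details of that one-line argument appropriately, including the verification of the compactness hypothesis needed to invoke \eqref{BB*=B*B}.
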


The combination of Corollary \ref{Musashi} and Equation \eqref{cedula} gives the following.

\begin{Theorem}\label{thm_ext}
Let $V$ satisfy Assumption \ref{assumption2}. Then one has for each $\varepsilon\in(0,1)$
\begin{align*}
\pm\pi^{-1}\tr\arctan\big[(1\pm\varepsilon)^{-1}\Omega(\lambda)\big]+\O(1)
\le\xi(\lambda;H_\pm,H_0)
\le\pm\pi^{-1}\tr\arctan\big[(1\mp\varepsilon)^{-1}\Omega(\lambda)\big]+\O(1)
\end{align*}
as $\lambda\to\pm m$, $|\lambda|>m$.
\end{Theorem}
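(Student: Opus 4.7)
The plan is to derive Theorem \ref{thm_ext} as an immediate consequence of Corollary \ref{Musashi} combined with the trace identity \eqref{cedula}, followed by a routine rearrangement of signs.

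First, I will recall Corollary \ref{Musashi}, which asserts that, for $\lambda\to\pm m$ with $|\lambda|>m$, the quantity $\mp\xi(\lambda;H_\mp,H_0)$ is sandwiched between
$$
\pi^{-1}\tr\arctan\big[(1\pm\varepsilon)^{-1}\sgn(\lambda)\im T_{\sf div}(\lambda)\big]+\O(1).
$$
This is already of arctan type, but is expressed through $\sgn(\lambda)\im T_{\sf div}(\lambda)$ rather than through the Berezin-Toeplitz type operator $\Omega(\lambda)$ that appears in the target statement.

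Second, I will invoke Equation \eqref{cedula}, which states that
$$
\tr\arctan\big(s^{-1}\sgn(\lambda)\im T_{\sf div}(\lambda)\big)=\tr\arctan\big(s^{-1}\Omega(\lambda)\big)
$$
for every $s>0$. This identity is the $\tr\arctan$-counterpart, via the relation $n_\pm(s;B^*B)=n_\pm(s;BB^*)$ and the arctan formula \eqref{id_arctan}, of the factorisation $\sgn(\lambda)\im T_{\sf div}(\lambda)=(2\sqrt{\lambda^2-m^2})^{-1}K_\lambda^*K_\lambda$ set up just before the theorem. Plugging it into the sandwich above replaces $\sgn(\lambda)\im T_{\sf div}(\lambda)$ by $\Omega(\lambda)$ in both bounds without altering the $\O(1)$ error.

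Finally, I will rearrange the resulting inequality into the form displayed in the theorem: multiplying throughout by $\mp 1$ converts $\mp\xi(\lambda;H_\mp,H_0)$ into $\xi(\lambda;H_\pm,H_0)$ after the substitution $\mp\mapsto\pm$, reverses the direction of the inequality, and produces the outer $\pm$ sign in front of each $\pi^{-1}\tr\arctan$ as well as the swap between $(1\pm\varepsilon)^{-1}$ on the lower bound and $(1\mp\varepsilon)^{-1}$ on the upper bound. I do not foresee any genuine analytic obstacle; the two ingredients slot together mechanically. The one thing that needs to be handled with care is the sign bookkeeping between the $\pm$ attached to $\xi$, the $\pm$ attached to $\varepsilon$, and the $\pm$ attached to the limit $\lambda\to\pm m$, which is easiest to verify by writing out the cases $\lambda\to +m$ and $\lambda\to -m$ separately before recombining them into the symmetric formulation of the statement.
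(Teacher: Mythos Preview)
Your proposal is correct and matches the paper's approach exactly: the paper states the theorem as ``The combination of Corollary \ref{Musashi} and Equation \eqref{cedula} gives the following,'' which is precisely the two-step substitution you outline, followed by the sign rearrangement you describe.
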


\begin{Remark}\label{C_sym}
The fact that the operators $\omega_\pm(\lambda)$ and $\Omega(\lambda)$ in
Theorems \ref{thm<} and \ref{thm_ext} depend in a distinguished way on the
components $V_{11}$ and $V_{33}$ of $V$ is due to our initial assumption
$b_0>0$. Indeed, this choice implies that $\ker(H^-_\perp)$ is non trivial,
whereas $\ker(H^+_\perp)=\{0\}$. This lead us to introduce in Section
\ref{Sec_Dec} the projection $\P\equiv\diag(P,0,P,0)$, which put into light the
priviledged role of the components $V_{11}$ and $V_{33}$ of $V$.

The variation of $\xi(\lambda;H_\pm,H_0)$ under the change
$\lambda\mapsto-\lambda$ can be explained using the antinunitary transformation
of charge conjugation \cite[Sec.~1.4.6]{Tha92}
$$
C:\H\to\H,\qquad\varphi\mapsto U_C\overline\varphi,
$$
where $U_C:=i\beta\alpha_2$. Indeed, if we write $H(\vec a,\pm V)$ and $H_0(\vec a)$
for $H_\pm$ and $H_0$, then a direct calculation using the Lifshits-Krein trace
formula \eqref{eq_LK} shows that
$$
CH(\vec a,\pm V)C^{-1}=-H(-\vec a,\mp U_C\overline VU_C^*),
$$
which entails
$$
\xi\big(\lambda;H(\vec a,\pm V),H_0(\vec a)\big)
=-\xi\big(-\lambda;H(-\vec a,\mp U_C\overline VU_C^*),H_0(-\vec a)\big).
$$
This obviously explains why the overall sign of the spectral shift function is reversed
under the change $\lambda\mapsto-\lambda$. But it also explains why the roles of $V_{11}$
and $V_{33}$ are interchanged in the estimates. Indeed, the natural
projection corresponding to the vector potential $\vec a$ is $\P=\diag(P,0,P,0)$ since
we have $b_0>0$ for $\vec a$, whereas $\P':=\diag(0,P,0,P)$ is the natural choice for the
vector potential $-\vec a$ since we have $b_0<0$ for $-\vec a$. Now, one has
$$
\mp U_C\overline VU_C^*=\mp
\left(\begin{smallmatrix}
V_{44} & -\overline{V_{43}} & -\overline{V_{42}} & \overline{V_{41}}\\
-\overline{V_{34}} & V_{33} & \overline{V_{32}} & -\overline{V_{31}}\\
-\overline{V_{24}} & \overline{V_{23}} & V_{22} & -\overline{V_{21}}\\
\overline{V_{14}} & -\overline{V_{13}} & -\overline{V_{12}} & V_{11}
\end{smallmatrix}\right).
$$
So, the projection $\P$ which selects the components $\pm(V_{11},V_{33})$ of the potential
$\pm V$ is replaced, after the change $\lambda\mapsto-\lambda$, by the projection $\P'$
which selects the components $\mp(V_{33},V_{11})$ of the transformed potential
$\mp U_C\overline VU_C^*$.
\end{Remark}

For the next proposition we define for each $\lambda\in\R$ with $|\lambda|>m$ the
positive operator $\Omega^{(1)}(\lambda)$ in $\ltwo(\R^2;\C^8)$ given by
$$
\Omega^{(1)}(\lambda):=\frac1{2\sqrt{\lambda^2-m^2}}
\begin{pmatrix}
pM_\lambda p & 0\\
0 & 0
\end{pmatrix}
\qquad\hbox{where}\qquad
M_\lambda:=
\(\begin{smallmatrix}
|\lambda+m|W_+ & 0 & 0 &0\\
0 & 0 & 0 &0\\
0 & 0 & |\lambda-m|W_- &0\\
0 & 0 & 0 &0
\end{smallmatrix}\).
$$

\begin{Proposition}\label{soprole}
\begin{enumerate}
\item[(a)] Let $V$ satisfy Assumption \ref{assumption2} with $\nu\in(3,4]$. Then one has
for each $s>0$ and each $\delta\in\big(\frac{4-\nu}2,\frac12\big)$
$$
\tr\big\{\arctan\big[s^{-1}\Omega(\lambda)\big]
-\arctan\big[s^{-1}\Omega^{(1)}(\lambda)\big]\big\}
=O\big(|\lambda\mp m|^{-\delta}\big)
\quad\hbox{as}\quad\lambda\to\pm m,~|\lambda|>m.
$$

\item[(b)] Let $V$ satisfy Assumption \ref{assumption1} with $\nu_\perp>2$ and
$\nu_3>2$. Then one has for each $s>0$
\begin{equation}\label{choclito}
\tr\big\{\arctan\big[s^{-1}\Omega(\lambda)\big]
-\arctan\big[s^{-1}\Omega^{(1)}(\lambda)\big]\big\}=\O(1)
\quad\hbox{as}\quad\lambda\to\pm m,~|\lambda|>m.
\end{equation}
\end{enumerate}
\end{Proposition}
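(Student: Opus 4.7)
The plan is to estimate $\|\Omega(\lambda)-\Omega^{(1)}(\lambda)\|_1$ and then invoke the operator-Lipschitz property of $\arctan$ to transfer the bound to the trace of the difference. Setting $k:=\sqrt{\lambda^2-m^2}$, a direct inspection of the matrices $M_{1,\lambda},M_{2,\lambda},M_{3,\lambda},M_\lambda$ reveals that $2k\big(\Omega(\lambda)-\Omega^{(1)}(\lambda)\big)$ is an $8\times 8$ block operator whose nonzero entries are all of the form $pUp$, with $U$ belonging to one of three families: diagonal contributions $\pm|\lambda\pm m|\int_\R\sin^2(x_3k)V_{jj}(\cdot,x_3)\d x_3$ for $j=1,3$; oscillating contributions $\pm|\lambda\pm m|\int_\R\sin(x_3k)\cos(x_3k)V_{jj}(\cdot,x_3)\d x_3$; and $V_{13}$-type contributions $\pm k\int_\R\{\cos^2,\sin^2,\sin\cos\}(x_3k)V_{13}(\cdot,x_3)\d x_3$, whose extra factor $k$ will cancel the singular prefactor $1/(2k)$ in front of $\Omega-\Omega^{(1)}$.

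The key estimate combines Lemma~\ref{lem_rai}, which gives $\|pUp\|_1\le{\rm Const.}\,\|U\|_{\lone(\R^2)}$, with the elementary inequality $|\sin t|\le|t|^{\delta'}$, valid for all $t\in\R$ and $\delta'\in[0,1]$. This yields, for $\delta'\in[0,1/2]$,
$$
\textstyle\int_{\R^3}\sin^2(x_3k)|V_{jj}(x)|\d x\le k^{2\delta'}\int_{\R^3}|x_3|^{2\delta'}|V_{jj}(x)|\d x,
$$
and analogously for the $\sin\cos$ terms. Dividing by $2k$ and gathering the dominant contribution (the $(1,1)$ entry of the top-left block, with prefactor $|\lambda+m|/(2k)$ near $\lambda=+m$, and its analogue near $\lambda=-m$), I would obtain
$$
\|\Omega(\lambda)-\Omega^{(1)}(\lambda)\|_1\le{\rm Const.}\,|\lambda\mp m|^{\delta'-1/2}\textstyle\int_{\R^3}|x_3|^{2\delta'}\sum_{j,k}|V_{jk}(x)|\d x.
$$

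For part~(a), under Assumption~\ref{assumption2} with $\nu\in(3,4]$, the right-hand integral is finite iff $2\delta'<\nu-3$; setting $\delta:=1/2-\delta'$ then sweeps exactly the interval $\big((4-\nu)/2,1/2\big)$. For part~(b), Assumption~\ref{assumption1} with $\nu_\perp>2$ and $\nu_3>2$ ensures that $\int_{\R^3}|x_3|\sum_{j,k}|V_{jk}(x)|\d x<\infty$, so the choice $\delta'=1/2$ is admissible and yields the announced $\O(1)$ bound. To conclude, I would verify the operator-Lipschitz inequality $\|\arctan(s^{-1}A)-\arctan(s^{-1}B)\|_1\le C_s\|A-B\|_1$ for positive compact $A,B$ using the representation $\arctan(X)=\int_0^1X(1+\tau^2X^2)^{-1}\d\tau$ together with the resolvent identity for $1+\tau^2X^2$; combined with $|\tr T|\le\|T\|_1$, this transfers both bounds on $\|\Omega-\Omega^{(1)}\|_1$ to the trace differences in the statement.

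The main technical hurdle is balancing the singular prefactor $1/(2k)$ against the vanishing of the oscillating integrals: the threshold $\delta>(4-\nu)/2$ in~(a) is precisely where the weight $|x_3|^{2\delta'}$ ceases to be integrable against $\langle x\rangle^{-\nu}$ on $\R^3$, while the improvement to $\O(1)$ in~(b) rests on the stronger $\nu_3>2$ assumption that permits the critical choice $\delta'=1/2$.
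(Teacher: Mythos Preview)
Your argument is correct and, at its core, matches the paper's: both reduce the claim to a trace-norm estimate on $\Omega(\lambda)-\Omega^{(1)}(\lambda)$, obtained exactly as you describe via Lemma~\ref{lem_rai} and the interpolation bound $\sin^2 t,\,|\sin t\cos t|\le{\rm Const.}\,|t|^{2\delta'}$ for $\delta'\in[0,1/2]$. The only methodological difference is in the last step: the paper invokes the Lifshits-Krein trace formula~\eqref{eq_LK} with $f=\arctan$ (so that $|\tr[f(A)-f(B)]|\le\|f'\|_\infty\|\xi(\,\cdot\,;A,B)\|_{\lone}\le s^{-1}\|A-B\|_1$), whereas you establish the stronger operator-Lipschitz bound $\|\arctan(s^{-1}A)-\arctan(s^{-1}B)\|_1\le s^{-1}\|A-B\|_1$ directly from the representation $\arctan X=\int_0^1X(1+\tau^2X^2)^{-1}\d\tau$ and a resolvent identity. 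Both routes yield the same final inequality; yours is self-contained and avoids appealing to spectral-shift-function machinery, while the paper's is a one-line citation of a standard tool already in play elsewhere in the article.
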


\begin{proof}
Points (a) and (b) are proved by using the Lifshits-Krein trace formula \eqref{eq_LK}
with $f(\lambda)=\arctan(\lambda)$, $\lambda\in\R$. We do not give the details, since
the argument is analogous to the one of \cite[Cor.~2.2]{FR04}.
\end{proof}

Note that if $V$ satisfy Assumption \ref{assumption2} with $\nu\in(3,4]$, we can choose
$\delta\in\big(\frac{4-\nu}2,\frac1{\nu-1}\big)$, and so Proposition \ref{soprole}.(a)
entails
\begin{equation}
\tr\big\{\arctan\big[s^{-1}\Omega(\lambda)\big]
-\arctan\big[s^{-1}\Omega^{(1)}(\lambda)\big]\big\}
=o\big(|\lambda\mp m|^{-\frac1{\nu-1}}\big)
\quad\hbox{as}\quad\lambda\to\pm m,~|\lambda|>m.\label{eq_rhume2}
\end{equation}
Moreover, if $V$ satisfy Assumption \ref{assumption2} with $\nu>4$, then it satisfies
Assumption \ref{assumption1} with $\nu_\perp>2$ and $\nu_3>2$, and, hence
\eqref{choclito} is valid. Finally, we have for $s>0$ and $|\lambda|>m$
\begin{equation}\label{eq_rhume1}
\tr\arctan\big[s^{-1}\Omega^{(1)}(\lambda)\big]
=\int_0^\infty\frac{\d t}{1+t^2}\,
n_+{\textstyle\Big(2st\big(\frac{\lambda-m}{\lambda+m}\big)^{1/2}};pW_+p\Big)
+\int_0^\infty\frac{\d t}{1+t^2}\,
n_+{\textstyle\Big(2st\big(\frac{\lambda+m}{\lambda-m}\big)^{1/2}};pW_-p\Big).
\end{equation}

Combining Equations \eqref{choclito}-\eqref{eq_rhume1}, Theorem \ref{thm_ext} and
Lemmas \ref{tec_Rai1}-\ref{tec_Rai3}, we get the following.

\begin{Corollary}\label{outside_gap}
Let $V$ satisfy Assumption \ref{assumption2}.
\begin{enumerate}
\item[(a)] Assume that the hypotheses of Lemma \ref{tec_Rai1} hold with
$U_\pm=W_\pm$ and $\alpha=\nu-1$. Then we have
$$
\xi(\lambda;H_-,H_0)
=-\frac1{2\cos\big(\pi/(\nu-1)\big)}\,\textstyle
\Psi_{\nu-1}\Big(2\big(\frac{\lambda-m}{\lambda+m}\big)^{1/2};u_+,b_0\Big)
\big(1+o(1)\big)\quad\hbox{as}\quad\lambda\searrow m,
$$
and
$$
\xi(\lambda;H_+,H_0)
=	\frac1{2\cos\big(\pi/(\nu-1)\big)}\,\textstyle
\Psi_{\nu-1}\Big(2\big(\frac{\lambda+m}{\lambda-m}\big)^{1/2};u_-,b_0\Big)
\big(1+o(1)\big)\quad\hbox{as}\quad\lambda\nearrow-m,
$$
with $\Psi_{\nu-1}$ given by Equation \eqref{psi_alpha}.

\item[(b)] Suppose that $V$ also satisfies \eqref{first_decay} with $\nu_\perp>2$
and $\nu_3>2$, and assume that the hypotheses of Lemma \ref{tec_Rai2} hold with
$U_\pm=W_\pm$. Then we have
$$
\xi(\lambda;H_-,H_0)
=-\frac12\,\textstyle
\Phi_{\beta_+}\Big(2\big(\frac{\lambda-m}{\lambda+m}\big)^{1/2};\eta_+,b_0\Big)
\big(1+o(1)\big)\quad\hbox{as}\quad\lambda\searrow m,
$$
and
$$
\xi(\lambda;H_-,H_0)
=\frac12\,\textstyle
\Phi_{\beta_-}\Big(2\big(\frac{m+\lambda}{m-\lambda}\big)^{1/2};\eta_-,b_0\Big)
\big(1+o(1)\big)\quad\hbox{as}\quad\lambda\nearrow-m,
$$
with $\beta_\pm\in(0,\infty)$ and $\Phi_{\beta_\pm}$ given by Equation \eqref{phi_beta}.

\item[(c)] Suppose that $V$ also satisfies \eqref{first_decay} with $\nu_\perp>2$
and $\nu_3>2$, and assume that the hypotheses of Lemma \ref{tec_Rai3} hold with
$U_\pm=W_\pm$. Then we have
$$
\xi(\lambda;H_-,H_0)=-\frac12\,\textstyle\Phi_\infty
\Big(2\big(\frac{m-\lambda}{m+\lambda}\big)^{1/2}\Big)\big(1+o(1)\big)
\quad\hbox{as}\quad\lambda\searrow m,
$$
and
$$
\xi(\lambda;H_+,H_0)=\frac12\,\textstyle\Phi_\infty
\Big(2\big(\frac{m+\lambda}{m-\lambda}\big)^{1/2}\Big)\big(1+o(1)\big)
\quad\hbox{as}\quad\lambda\nearrow-m,
$$
with $\Phi_\infty$ given by Equation \eqref{phi_infty}.
\end{enumerate}
\end{Corollary}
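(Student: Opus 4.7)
The strategy is to combine Theorem \ref{thm_ext}, Proposition \ref{soprole}, and the integral representation \eqref{eq_rhume1}, reducing the computation of $\xi(\lambda;H_\pm,H_0)$ to the known asymptotics of $n_+(s;pW_\pm p)$ provided by Lemmas \ref{tec_Rai1}--\ref{tec_Rai3}. First I would apply Theorem \ref{thm_ext} to sandwich $\pm\xi(\lambda;H_\pm,H_0)$ between $\pi^{-1}\tr\arctan[(1\mp\varepsilon)^{-1}\Omega(\lambda)]$ and $\pi^{-1}\tr\arctan[(1\pm\varepsilon)^{-1}\Omega(\lambda)]$, modulo $\O(1)$. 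Proposition \ref{soprole} then allows the substitution of $\Omega^{(1)}(\lambda)$ for $\Omega(\lambda)$ inside the arctangent: in case (a) the error is $o\big(|\lambda\mp m|^{-1/(\nu-1)}\big)$ thanks to \eqref{eq_rhume2}, which will turn out to be $o$ of the main term; in cases (b) and (c) the strengthened decay hypothesis $\nu_\perp,\nu_3>2$ yields via \eqref{choclito} an $\O(1)$ error that is dwarfed by the logarithmic main term.

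Next, formula \eqref{eq_rhume1} splits $\tr\arctan\big[s^{-1}\Omega^{(1)}(\lambda)\big]$ into two integrals over $t\in(0,\infty)$. Setting $c_\lambda:=\big|(\lambda-m)/(\lambda+m)\big|^{1/2}$, as $\lambda\searrow m$ one has $c_\lambda\searrow 0$, so the argument $2stc_\lambda$ of $n_+(\,\cdot\,;pW_+p)$ in the first integral tends to $0$ (supplying the divergent main contribution), while the argument $2st/c_\lambda$ of $n_+(\,\cdot\,;pW_-p)$ in the second integral tends to $\infty$. Since $pW_\pm p$ is trace class, $n_+(s;pW_\pm p)$ is uniformly bounded and vanishes for $s$ sufficiently large, so dominated convergence gives that the second integral is $\O(1)$. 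Substituting the relevant Raikov asymptotics into the first integral and pulling out the leading divergent factor, the remaining computation reduces in case (a) to the standard beta-function identity $\int_0^\infty t^{-2/(\nu-1)}(1+t^2)^{-1}\,\d t=\pi/\big(2\cos(\pi/(\nu-1))\big)$, which is convergent since $\nu>3$; in cases (b) and (c) the logarithmic factor $|\ln(2stc_\lambda)|^{1/\beta}$, respectively $\big(\ln|\ln(2stc_\lambda)|\big)^{-1}|\ln(2stc_\lambda)|$, is asymptotically equivalent as $c_\lambda\searrow 0$ to its value at $t=1$, so the residual integral reduces to $\int_0^\infty\d t/(1+t^2)=\pi/2$. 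Dividing by $\pi$ via Theorem \ref{thm_ext} produces the constants $1/\big(2\cos(\pi/(\nu-1))\big)$ in case (a) and $1/2$ in cases (b), (c). A symmetric analysis at $\lambda\nearrow -m$ exchanges the roles of $W_+$ and $W_-$, and letting $\varepsilon\searrow 0$ closes the sandwich.

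The main obstacle is the interchange of limit and integral in the last step: the Raikov lemmas only describe $n_+(s;pW_\pm p)$ as $s\searrow 0$, whereas the integration in \eqref{eq_rhume1} runs over all $t>0$. One needs a uniform-in-$\lambda$ dominating function: for large $t$, where $2stc_\lambda$ is no longer small, the boundedness of $n_+(\,\cdot\,;pW_\pm p)$ together with the integrability of $(1+t^2)^{-1}$ suffices; for small $t$, the error term in Raikov's asymptotic must be shown to be integrable against $\d t/(1+t^2)$ uniformly in $\lambda$. In the logarithmic cases one must also verify that $|\ln(2stc_\lambda)|^{1/\beta}/|\ln c_\lambda|^{1/\beta}\to 1$ pointwise in $t$ with an appropriate uniform bound. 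This is exactly the scheme carried out for Schr\"odinger operators in \cite[Sec.~5.4]{FR04} and for Pauli operators in \cite[Sec.~4.3]{Rai09}, so the adaptation here should be essentially routine given the preparatory results already in hand.
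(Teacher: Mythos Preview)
Your proposal is correct and follows precisely the route the paper indicates: combine Theorem \ref{thm_ext}, Proposition \ref{soprole} (via \eqref{choclito} and \eqref{eq_rhume2}), the integral representation \eqref{eq_rhume1}, and Lemmas \ref{tec_Rai1}--\ref{tec_Rai3}, with the explicit constants coming from the beta-type integral in case (a) and from $\int_0^\infty(1+t^2)^{-1}\,\d t=\pi/2$ in cases (b) and (c). You have in fact supplied more detail than the paper, which simply cites these ingredients and refers to \cite[Sec.~5.4]{FR04} for the analogous computation.
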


Putting together the results of Proposition \ref{in_gap} and Corollary \ref{outside_gap}, we
obtain the following.

\begin{Corollary}\label{Levinson}
Under the assumptions of Corollary \ref{outside_gap}.(a), we have
$$
\lim_{\varepsilon\searrow0}
\frac{\xi\big(m(1-\varepsilon)^{-1};H_-,H_0\big)}{\xi\big(m(1-\varepsilon);H_-,H_0\big)}
=\frac1{2\cos\big(\pi/(\nu-1)\big)}
=\lim_{\varepsilon\searrow0}
\frac{\xi\big(-m(1-\varepsilon)^{-1};H_+,H_0\big)}{\xi\big(-m(1-\varepsilon);H_+,H_0\big)}\,,
$$
and under the assumptions of Corollary \ref{outside_gap}.(b)-(c), we have
$$
\lim_{\varepsilon\searrow0}
\frac{\xi\big(m(1-\varepsilon)^{-1};H_-,H_0\big)}{\xi\big(m(1-\varepsilon);H_-,H_0\big)}
=\frac12
=\lim_{\varepsilon\searrow0}
\frac{\xi\big(-m(1-\varepsilon)^{-1};H_+,H_0\big)}{\xi\big(-m(1-\varepsilon);H_+,H_0\big)}\,.
$$
\end{Corollary}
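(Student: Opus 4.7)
The proof is essentially an immediate computation combining the asymptotics in Proposition \ref{in_gap} with those of Corollary \ref{outside_gap}, once one observes that the arguments of the functions $\Psi_{\nu-1}$, $\Phi_{\beta_\pm}$, $\Phi_\infty$ appearing on both sides of $\pm m$ coincide when $\lambda$ is parametrised by the reciprocal pair $m(1-\varepsilon)$ and $m(1-\varepsilon)^{-1}$. Indeed, setting $\lambda_-:=m(1-\varepsilon)$ and $\lambda_+:=m(1-\varepsilon)^{-1}$, a short calculation gives
\begin{equation*}
\frac{m-\lambda_-}{m+\lambda_-}=\frac{\varepsilon}{2-\varepsilon}=\frac{\lambda_+-m}{\lambda_++m},
\end{equation*}
so the arguments $2\big(\tfrac{m-\lambda_-}{m+\lambda_-}\big)^{1/2}$ and $2\big(\tfrac{\lambda_+-m}{\lambda_++m}\big)^{1/2}$ entering the expressions for $\xi(\lambda_-;H_-,H_0)$ and $\xi(\lambda_+;H_-,H_0)$ are literally equal.

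\medskip

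Granting this, the first step is to plug the asymptotics directly in. In the power-like case (a), Proposition \ref{in_gap}(a) yields $\xi(\lambda_-;H_-,H_0)=-\Psi_{\nu-1}(s_\varepsilon;u_+,b_0)(1+o(1))$ while Corollary \ref{outside_gap}(a) yields $\xi(\lambda_+;H_-,H_0)=-\tfrac{1}{2\cos(\pi/(\nu-1))}\,\Psi_{\nu-1}(s_\varepsilon;u_+,b_0)(1+o(1))$, with $s_\varepsilon:=2\sqrt{\varepsilon/(2-\varepsilon)}$. Since $\Psi_{\nu-1}(s_\varepsilon;u_+,b_0)\sim{\rm const}\cdot s_\varepsilon^{-2/(\nu-1)}\to\infty$ as $\varepsilon\searrow 0$, the $(1+o(1))$ factors are negligible in the ratio, and the quotient converges to $1/(2\cos(\pi/(\nu-1)))$. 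Cases (b) and (c) are handled identically, using that $\Phi_{\beta_\pm}(s_\varepsilon;\eta_\pm,b_0)\to\infty$ and $\Phi_\infty(s_\varepsilon)\to\infty$ as $\varepsilon\searrow 0$: the leading coefficients in front on the two sides of $+m$ are respectively $1$ and $\tfrac12$, so the ratio converges to $\tfrac12$.

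\medskip

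For the second equality in each case (the limit near $-m$ for the pair $(H_+,H_0)$), one repeats the same computation, now pairing the asymptotics of Proposition \ref{in_gap} and Corollary \ref{outside_gap} on either side of $-m$ for $\xi(\,\cdot\,;H_+,H_0)$. Alternatively, one can obtain it ``for free'' from the first equality by invoking the charge-conjugation identity of Remark \ref{C_sym},
\begin{equation*}
\xi\big(\lambda;H(\vec a,-V),H_0(\vec a)\big)
=-\xi\big(-\lambda;H(-\vec a,U_C\overline VU_C^*),H_0(-\vec a)\big),
\end{equation*}
together with the observation that changing $\vec a\mapsto-\vec a$ replaces $b_0$ by $-b_0$ but leaves the constants $1/(2\cos(\pi/(\nu-1)))$ and $1/2$ unchanged, as they depend on $V$ only through the decay rate.

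\medskip

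There is no real obstacle here: Proposition \ref{in_gap} and Corollary \ref{outside_gap} already carry out the analytic work, and the corollary only requires checking the algebraic identity between the two arguments and ensuring that the divergent prefactors dominate the $o(1)$ errors. The mildest point of care is in cases (b) and (c), where the asymptotic functions are not homogeneous in their argument; however, since they diverge to infinity at $s_\varepsilon\searrow 0$, the multiplicative $(1+o(1))$ errors still wash out in the quotient, which is all that is needed.
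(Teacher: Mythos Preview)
Your proposal is correct and follows exactly the approach indicated in the paper, which simply states that the corollary is obtained by ``putting together the results of Proposition \ref{in_gap} and Corollary \ref{outside_gap}''; you have merely spelled out the algebraic identity $\frac{m-\lambda_-}{m+\lambda_-}=\frac{\lambda_+-m}{\lambda_++m}$ and the divergence of the leading functions that make the quotient well-defined. One small caveat: the charge-conjugation alternative you offer is not quite as free as you suggest, since flipping $\vec a\mapsto-\vec a$ sends $b_0\mapsto-b_0$ and thus takes you outside the standing hypothesis $b_0>0$ under which Lemmas \ref{tec_Rai1}--\ref{tec_Rai3} are stated; but your primary argument (repeat the computation near $-m$) is unaffected and suffices.
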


%-----------------------------------------------------------------------------------------------
\section*{Acknowledgements}
%-----------------------------------------------------------------------------------------------

The author expresses his deep gratitude to Professor G. D. Raikov for suggesting him this
study. He also thanks him for the idea of using the decomposition \eqref{tortugo} for the free
Hamiltonian. This work was partially supported by the Chilean Science Fundation Fondecyt under
the Grant 1090008.

%-----------------------------------------------------------------------------------------------
\section{Appendix}\label{cond_trace}
\setcounter{equation}{0}
%-----------------------------------------------------------------------------------------------

We give in this appendix the proof of the inclusion \eqref{necessary4} for the class of
potentials $V$ given in Remark \ref{a_class}. We start with a technical lemma. We use the
notations $\alpha:=(\alpha_1,\alpha_2,\alpha_3)^{\sf T}$ and
$$
(\partial_\ell V):=\{(\partial_\ell V_{jk})\},\qquad
\nabla V:=(\partial_1V,\partial_2V,\partial_3V)^{\sf T},\qquad
(\partial_\ell\partial_mV):=\{(\partial_\ell\partial_mV_{jk})\}.
$$

\begin{Lemma}
Let $V$ be as in Remark \ref{a_class}. Then
\begin{enumerate}
\item[(a)] One has in $\B\big(\dom(H_0),\dom(H_0)^*\big)$ the equalities
\begin{align}
[H_0,H]
&=-i\alpha\cdot(\nabla V)+[\alpha_3,V]P_3+m[\beta,V]\label{eq_com1}\\
&=-i\alpha\cdot(\nabla V)+P_3[\alpha_3,V]+i[\alpha_3,(\partial_3V)]+m[\beta,V].\label{eq_com2}
\end{align}

\item[(b)] Let $z\in\R\setminus\{\sigma(H_0)\cup\sigma(H_\pm)\}$. Then there exist operators
$B_\pm\in\B(\H)$ such that
\begin{equation}\label{square}
R_\pm^2(z)=B_\pm H_0^{-2}\qquad\hbox{and}\qquad R_\pm^2(z)=H_0^{-2}B_\pm^*.
\end{equation}
\end{enumerate}
\end{Lemma}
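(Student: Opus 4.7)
The plan is to prove (a) by a direct commutator computation, and to prove (b) by first reducing the second identity in \eqref{square} to the first via taking adjoints, and then showing that $R_\pm^2(z) H_0^2$, initially defined on $\dom(H_0^2)$, extends to a bounded operator $B_\pm \in \B(\H)$.

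\emph{Part (a).} Reading $H$ as $H_0 + V$ (the case of $H_0 - V$ differing only in sign), it suffices to compute $[H_0, V]$. Since $V$ is a matrix-valued multiplication operator, $[\Pi_j, V] = -i(\partial_j V)$ and $[P_3, V] = -i(\partial_3 V)$; condition (i) of Remark \ref{a_class} gives $[\alpha_1, V] = [\alpha_2, V] = 0$, hence $[\alpha_j \Pi_j, V] = -i\alpha_j(\partial_j V)$ for $j = 1, 2$, whereas for the third component $[\alpha_3 P_3, V] = -i\alpha_3(\partial_3 V) + [\alpha_3, V] P_3$. Adding $m[\beta, V]$ yields \eqref{eq_com1}. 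Identity \eqref{eq_com2} follows from $[\alpha_3, V] P_3 = P_3 [\alpha_3, V] + [[\alpha_3, V], P_3]$ together with the elementary computation $[[\alpha_3, V], P_3] = [\alpha_3, [V, P_3]] = i[\alpha_3, (\partial_3 V)]$. Each term on the right is a bounded operator from $\dom(H_0)$ to $\dom(H_0)^*$: the multiplications are bounded by (ii), and $P_3: \dom(H_0) \to \H$ is continuous.

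\emph{Part (b), setup.} Since $z \in \R$ and $H_\pm$ is selfadjoint, $R_\pm^2(z)$ is selfadjoint, so the second identity in \eqref{square} will follow by taking adjoints of the first. To prove the first, once $B_\pm := R_\pm^2(z) H_0^2$ is known to extend boundedly on $\H$, any $\varphi \in \H$ writes as $\varphi = H_0^2 \psi$ with $\psi = H_0^{-2}\varphi \in \dom(H_0^2)$, giving $R_\pm^2(z)\varphi = B_\pm H_0^{-2}\varphi$. Because $\nabla V$ is bounded (by (ii)), $V$ preserves $\dom(H_0)$, hence $\dom(H_0^2) = \dom(H_\pm^2)$. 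On this common domain the substitution $H_0 = H_\pm \mp V$ yields
\begin{equation*}
R_\pm^2(z) H_0^2 = R_\pm^2(z) H_\pm^2 \mp R_\pm^2(z) H_\pm V \mp R_\pm^2(z) V H_\pm + R_\pm^2(z) V^2.
\end{equation*}
Using $H_\pm R_\pm(z) = 1 + z R_\pm(z)$, and its adjoint, the first, second and fourth terms extend to bounded operators (explicitly, $R_\pm^2(z) H_\pm^2 = 1 + 2z R_\pm(z) + z^2 R_\pm^2(z)$, and $R_\pm^2(z) H_\pm V = R_\pm(z)(1 + zR_\pm(z)) V$).

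\emph{The main obstacle} is the third term $R_\pm^2(z) V H_\pm$. On $\dom(H_\pm)$, $[V, V] = 0$ gives $V H_\pm = H_\pm V - [H_0, V]$, reducing matters to controlling $R_\pm^2(z) [H_0, V]$, into which I would substitute the form \eqref{eq_com2}. The pieces stemming from $\alpha \cdot (\nabla V)$, $[\alpha_3, (\partial_3 V)]$ and $m[\beta, V]$ are immediately bounded, using (ii) and the boundedness of $V$. The genuinely delicate piece is $R_\pm^2(z) P_3 [\alpha_3, V]$, which I would handle as follows. By \eqref{tortuga}, $P_3^2 \le H_0^2$ in the quadratic-form sense, hence $\|P_3 R_\pm(z) \psi\| \le \|H_0 R_\pm(z)\psi\| = \|(1 + z R_\pm(z) \mp V R_\pm(z))\psi\|$, so $P_3 R_\pm(z) \in \B(\H)$. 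Passing to the adjoint shows that $R_\pm(z) P_3$ extends to a bounded operator, and therefore $R_\pm^2(z) P_3 = R_\pm(z) \cdot R_\pm(z) P_3 \in \B(\H)$; multiplication by the bounded matrix $[\alpha_3, V]$ preserves boundedness. Summing the pieces proves that $R_\pm^2(z) H_0^2$ extends to a $B_\pm \in \B(\H)$, which yields the claim.
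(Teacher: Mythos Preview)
Your proof is correct. Part (a) is the same algebraic computation as the paper's, only the paper spells out an approximation argument via $C_0^\infty$ functions to justify the equalities in $\B(\dom(H_0),\dom(H_0)^*)$, whereas you note directly that each term on the right lies in this space; both are fine.

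In part (b) you take a different route. The paper writes $R_\pm^2(z)=B_1H_0^{-1}R_\pm(z)$ (using $\dom(H_\pm)=\dom(H_0)$), then commutes $H_0^{-1}$ past $R_\pm(z)$ via $[H_0^{-1},R_\pm(z)]=H_0^{-1}R_\pm(z)[H_0,H_\pm]R_\pm(z)H_0^{-1}$, and finally inserts \eqref{eq_com2} to absorb the $P_3$ on the left into a bounded $R_\pm(z)P_3$. You instead expand $H_0^2=(H_\pm\mp V)^2$ and treat $R_\pm^2(z)H_0^2$ term by term, the only nontrivial piece being $R_\pm^2(z)VH_\pm$, which you reduce to $R_\pm^2(z)[H_0,V]$ and handle with \eqref{eq_com2}. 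The two arguments share the same two essential ingredients---formula \eqref{eq_com2} with $P_3$ placed on the left, and the boundedness of $R_\pm(z)P_3$---but package them differently: the paper's commutator trick is more iterable (it generalises readily to higher powers), while your expansion is more self-contained and avoids the resolvent--commutator identity. Your justification of $R_\pm(z)P_3\in\B(\H)$ via the form inequality $P_3^2\le H_0^2$ coming from \eqref{tortuga} is equivalent to the paper's appeal to $\dom(H_0)\subset\dom(P_3)$.
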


\begin{proof}
(a) We know from Lemma 2.2(b) of \cite{RT04} that $\dom(H_0)\subset\dom(P_3)$. So each member
of Equations \eqref{eq_com1}-\eqref{eq_com2} belongs to $\B\big(\dom(H_0),\dom(H_0)^*\big)$.

Let $\varphi\in\dom(H_0)$, take a sequence $\{\varphi_n\}\subset C^\infty_0(\R^3;\C^4)$
such that $\displaystyle\lim_n\|\varphi_n-\varphi\|_{\dom(H_0)}=0$, and denote by
$\langle\;\!\cdot\;\!,\;\!\cdot\;\!\rangle_{1,-1}$ the anti-duality map between $\dom(H_0)$
and $\dom(H_0)^*$. Then
\begin{align}
\langle\varphi,[H_0,H]\varphi\rangle_{1,-1}
&\equiv\langle H_0\varphi,V\varphi\rangle-\langle V\varphi,H_0\varphi\rangle\nonumber\\
&=\lim_n
\big\langle\varphi_n,[\alpha_1(P_1-a_1)+\alpha_2(P_2-a_2)+\alpha_3P_3+\beta m,V]\varphi_n\big\rangle
\nonumber\\
&=\lim_n\big\langle\varphi_n,
\big\{-i\alpha\cdot(\nabla V)+[\alpha_3,V]P_3+m[\beta,V]\big\}\varphi_n\big\rangle
\label{two_pos}.
\end{align}
Since $\dom(H_0)\subset\dom(P_3)$, we also have
$\displaystyle\lim_n\|\varphi_n-\varphi\|_{\dom(P_3)}=0$, and thus
$$
\langle\varphi,[H_0,H]\varphi\rangle_{1,-1}
=\big\langle\varphi,
\big\{-i\alpha\cdot(\nabla V)+[\alpha_3,V]P_3+m[\beta,V]\big\}\varphi\big\rangle
$$
This proves \eqref{eq_com1}. Using \eqref{two_pos}, one also gets the equality \eqref{eq_com2}.

(b) In what follows, we omit the indices ``$\pm$" to simplify the notations and we write
$B_1,B_2,\ldots$ for elements of $\B(\H)$. Since $\dom(H)=\dom(H_0)$, we have
$$
R^2(z)
=B_1H_0^{-1}R(z)
=B_1R(z)H_0^{-1}+B_1\big[H_0^{-1},R(z)\big]
=B_2H_0^{-2}+B_1H_0^{-1}R(z)\big[H_0,H\big]R(z)H_0^{-1}.
$$
Now, one has
$$
R(z)\big[H_0,H\big]R(z)H_0^{-1}
=R(z)\big\{-i\alpha\cdot(\nabla V)+P_3[\alpha_3,V]+i[\alpha_3,(\partial_3V)]+m[\beta,V]\big\}
R(z)H_0^{-1}
=B_3H_0^{-2}
$$
due to Equation \eqref{eq_com2}, the equality $\dom(H)=\dom(H_0)$, and the inclusion
$\dom(H_0)\subset\dom(P_3)$. This, together with the preceding equation, implies the first
identity in \eqref{square}. The second identity follows from the first one by adjunction.
\end{proof}

\begin{Proposition}\label{diff_trace}
Take $z\in\R\setminus\{\sigma(H_0)\cup\sigma(H_\pm)\}$ and let $V$ be as in Remark
\ref{a_class}. Then we have
$$
R_\pm^3(z)-R_0^3(z)\in S_1(\H).
$$
\end{Proposition}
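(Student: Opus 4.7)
The plan is to iterate the resolvent identity $R_\pm(z) - R_0(z) = \mp R_\pm(z) V R_0(z)$ into the telescoping expansion
\begin{equation*}
R_\pm^3(z) - R_0^3(z) = \mp \bigl(R_\pm(z) V R_0^3(z) + R_\pm^2(z) V R_0^2(z) + R_\pm^3(z) V R_0(z)\bigr),
\end{equation*}
to handle the middle (symmetric) term by a direct factorisation, and to reduce the two outer terms to trace class operators through commutator manipulations controlled by the structural assumptions of Remark \ref{a_class}.

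For the middle term, I would factorise $V = V^{1/2} V^{1/2}$ to write $R_\pm^2 V R_0^2 = (R_\pm^2 V^{1/2})(V^{1/2} R_0^2)$. Part (b) of the previous lemma gives $R_\pm^2 = H_0^{-2} B_\pm^*$ with $B_\pm \in \B(\H)$, while an elementary computation yields $R_0^2 = H_0^{-2}(1 + zR_0)^2$; combining these with \eqref{necessary2} (taking $q = 2$ and $\gamma = 2$, so that $\gamma q = 4 > 3$) gives $V^{1/2} R_\pm^2$ and $V^{1/2} R_0^2$ in $S_2(\H)$, and hence $R_\pm^2 V R_0^2 \in S_1(\H)$.

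For the two outer terms the same factorisation only yields $S_{4/3}(\H)$-membership, so I would iterate the commutator identities $[V, R_\pm] = R_\pm[H_0, V] R_\pm$ and $[V, R_0] = R_0 [H_0, V] R_0$ (both standard consequences of the resolvent calculus together with $[H_0, H_\pm] = \pm[H_0, V]$) in order to move $V$ across the resolvents in $R_\pm^3 V R_0$ and $R_\pm V R_0^3$. Using the explicit form of $[H_0, V]$ provided by part (a) of the previous lemma, each commutator replaces $V$ by one of $\nabla V$, $[\alpha_3, V]$, $[\beta, V]$ or $\partial_3 V$; the decay hypotheses (ii)--(iii) of Remark \ref{a_class} ensure that these derivatives still decay like $\langle x\rangle^{-\varsigma}$ with $\varsigma > 3$, which can always be split as $\langle x_\perp\rangle^{-\nu_\perp} \langle x_3\rangle^{-\nu_3}$ with $\nu_\perp > 2$, $\nu_3 > 1$, so that the Schatten estimates \eqref{necessary2} apply to them just as to $V$. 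After finitely many iterations the outer terms become finite sums of symmetric $V^{1/2}(\cdots)V^{1/2}$-sandwiches (or the analogous sandwiches with derivatives of $V$) that belong to $S_1(\H)$ exactly as the middle term does.

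The main technical obstacle is the unbounded operator $P_3$ that appears in the term $P_3 [\alpha_3, V]$ of $[H_0, V]$ (equivalently $[\alpha_3, V] P_3$, cf. \eqref{eq_com1}). It is precisely to handle this that hypothesis (iii) of Remark \ref{a_class} is needed: the bounded correction $i [\alpha_3, (\partial_3 V)]$ appearing in the alternative form \eqref{eq_com2} lets one place $P_3$ adjacent to a resolvent and absorb it there, using that $P_3$ commutes with $H_0$ (because $\vec a$ depends only on $x_\perp$) and that $P_3 R_0$ and $R_\pm P_3$ are bounded thanks to $\dom(H_0) \subset \dom(P_3)$. This absorption is exactly the mechanism already used to establish $R_\pm^2 = B_\pm H_0^{-2}$ in part (b) of the previous lemma; applying it once more supplies the extra power of $H_0^{-1}$ that moves the outer terms from $S_{4/3}(\H)$ into $S_1(\H)$, concluding the proof.
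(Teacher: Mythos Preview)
Your approach is essentially the paper's: the same telescoping expansion into three terms, the same $S_2\times S_2$ factorisation of the symmetric middle term via part (b) of the preceding lemma and \eqref{necessary2}, and the same commutator redistribution for the two asymmetric outer terms with the $P_3$-absorption mechanism you describe. One small correction: the term $i[\alpha_3,(\partial_3V)]$ in \eqref{eq_com2} involves only first derivatives and is already covered by hypothesis (ii); hypothesis (iii) on $\partial_\ell\partial_3V$ enters because the term $R_\pm^3VR_0$ actually requires \emph{two} rounds of commutation (your ``finitely many iterations''), and it is the second round that produces expressions like $\alpha\cdot\nabla(\partial_3V)$ and $[\alpha_3,(\partial_3^2V)]$.
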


\begin{proof}
In what follows, we omit the indices ``$\pm$" to simplify the notations and we write
$B_1,B_2,\ldots$ for elements of $\B(\H)$. Differentiating twice the resolvent identity
$$
R(z)-R_0(z)=-R(z)VR_0(z)
$$
we find that
$$
R^3(z)-R_0^3(z)=-R(z)VR_0^3(z)-R^2(z)VR_0^2(z)-R^3(z)VR_0(z).
$$
So it is sufficient to show that each term on the r.h.s. belongs to $S_1(\H)$. This is
done in points (i), (ii) and (iii) below.

(i) For the term $R(z)VR_0^3(z)$, one has
\begin{equation}\label{premiere_eq}
R(z)VR_0^3(z)=R(z)R_0(z)VR_0^2(z)+R(z)[V,R_0(z)]R_0^2(z).
\end{equation}
Since $\dom(H)=\dom(H_0)$, one has
$$
R(z)R_0(z)VR_0^2(z)
=R(z)(H_0-z)R_0^2(z)VR_0^2(z)
=\big(B_1H_0^{-2}V^{1/2}\big)\big(V^{1/2}H_0^{-2}B_2\big).
$$
So, by \eqref{necessary2}, $R(z)R_0(z)VR_0^2(z)$ is the product of two Hilbert-Schmidt
operators, and thus belongs to $S_1(\H)$.

For the second term of \eqref{premiere_eq}, we have by \eqref{eq_com1}
$$
R(z)[V,R_0(z)]R_0^2(z)
=R(z)R_0(z)[H_0,H]R_0^3(z)
=B_1H_0^{-2}\big\{-i\alpha\cdot(\nabla V)+[\alpha_3,V]P_3+m[\beta,V]\big\}H_0^{-3}B_3.
$$
Due to the hypotheses on $V$ and $(\partial_jV)$, one can use \eqref{necessary2} to
write the first and third term as a product of two Hilbert-Schmidt operators. So it
only remains to show that $H_0^{-2}[\alpha_3,V]P_3H_0^{-3}$ belongs to $S_1(\H)$. For
this, we use the inclusion $\dom(H_0)\subset\dom(P_3)$ and the commutation of $P_3$
and $H_0^{-1}$ on $\dom(P_3)$ \cite[Lemma~2.2(b)]{RT04} to get
$$
H_0^{-2}[\alpha_3,V]P_3H_0^{-3}
=H_0^{-2}[\alpha_3,V]H_0^{-2}P_3H_0^{-1}
=H_0^{-2}[\alpha_3,V]H_0^{-2}B_4.
$$
This, together with \eqref{necessary2}, implies that $H_0^{-2}[\alpha_3,V]P_3H_0^{-3}$
belongs to $S_1(\H)$.

(ii) One can write $R^2(z)VR_0^2(z)$ as the product of two Hilbert-Schmidt operators by
using \eqref{square} and \eqref{necessary2}:
$$
R^2(z)VR_0^2(z)
=B_5H_0^{-2}VH_0^{-2}B_2
=\big(B_5H_0^{-2}V^{1/2}\big)\big(V^{1/2}H_0^{-2}B_2\big).
$$
Thus $R^2(z)VR_0^2(z)$ belongs to $S_1(\H)$.

(iii) For the term $R^3(z)VR_0(z)$ we have
$$
R^3(z)VR_0(z)
=R^2(z)VR(z)R_0(z)+R^2(z)[R(z),V]R_0(z).
$$
One shows that $R^2(z)VR(z)R_0(z)\in S_1(\H)$ as in point (ii). For the second term, we have
by \eqref{eq_com2} and \eqref{square}
\begin{align*}
R^2(z)[R(z),V]R_0(z)
&=R^3(z)[H,H_0]R(z)R_0(z)\\
&=B_5H_0^{-2}R(z)
\big\{i\alpha\cdot(\nabla V)-P_3[\alpha_3,V]-i[\alpha_3,(\partial_3V)]-m[\beta,V]\big\}
H_0^{-2}B_6.
\end{align*}
Due to the hypotheses on $V$ and $(\partial_jV)$, one can use \eqref{square} and
\eqref{necessary2} to write the first, third, and fourth term as a product of two
Hilbert-Schmidt operators. So it only remains to show that
$H_0^{-2}R(z)P_3[\alpha_3,V]H_0^{-2}$ belongs to $S_1(\H)$. Using \cite[Lemma~2.2(b)]{RT04}
and \eqref{square}, one gets
\begin{align*}
H_0^{-2}R(z)P_3[\alpha_3,V]H_0^{-2}
&=H_0^{-2}P_3R(z)[\alpha_3,V]H_0^{-2}+H_0^{-2}[R(z),P_3][\alpha_3,V]H_0^{-2}\\
&=P_3H_0^{-2}R(z)[\alpha_3,V]H_0^{-2}-iH_0^{-2}R(z)(\partial_3V)R(z)[\alpha_3,V]H_0^{-2}\\
&=B_7H_0^{-2}[\alpha_3,V]H_0^{-2}+B_8R(z)(\partial_3V)R(z)[\alpha_3,V]H_0^{-2}.
\end{align*}
The first term on the r.h.s. belongs to $S_1(\H)$, and for the second term we have by
\eqref{eq_com2} and \eqref{square}
\begin{align*}
&R(z)(\partial_3V)R(z)[\alpha_3,V]H_0^{-2}\\
&=(\partial_3V)R^2(z)[\alpha_3,V]H_0^{-2}+R(z)[(\partial_3V),H]R^2(z)[\alpha_3,V]H_0^{-2}\\
&=B_9H_0^{-2}[\alpha_3,V]H_0^{-2}+R(z)\big\{i\alpha\cdot[\nabla(\partial_3V)]
-P_3[\alpha_3,(\partial_3V)]\\
&\hspace{150pt}-i[\alpha_3,(\partial_3^2V)]-m[\beta,(\partial_3V)]+[(\partial_3V),V]\big\}
B_5H_0^{-2}[\alpha_3,V]H_0^{-2}.
\end{align*}
Due to the hypotheses on $V$, $(\partial_jV)$, and $(\partial_{j3}V)$, one can use
\eqref{necessary2} to show that the first, the second, the fourth, the fifth, and the
sixth term are trace class. For the third term we have to use \eqref{necessary2} and the
fact that $R(z)P_3$ extends to a bounded operator.
\end{proof}

\begin{Remark}
When the potential $V$ is scalar, the equations \eqref{eq_com1}-\eqref{eq_com2} reduce to
the single equality
$$
[H_0,H]=-i\alpha\cdot(\nabla V)
$$
in $\B\big(\dom(H_0),\dom(H_0)^*\big)$. So the calculations in points (i) and (iii) of
the proof of Proposition \ref{diff_trace} simplify accordingly, and we obtain the inclusion
$$
R_\pm^3(z)-R_0^3(z)\in S_1(\H)
$$
without assuming anything on the derivatives of $V$ of order $2$.
\end{Remark}

%-----------------------------------------------------------------------------------------------
%\bibliography{../bibliographie/bibliographie}
%-----------------------------------------------------------------------------------------------

\end{document}